\tikzstyle{RectObject}=[rectangle,fill=white,draw,line width=0.5mm]
\tikzstyle{line}=[draw]
\tikzstyle{arrow}=[draw, -latex]
\newcommand{\fregm}{\mathcal{F}}
\newcommand{\sregm}{\mathcal{P}}
\newcommand{\fregk}{\mathcal{F}^{k}}
\newcommand{\sregk}{\mathcal{P}^{k}}
\newcommand{\fregkm}{\mathcal{F}^{k}_{\mathsf{I_1,\ldots, I_k}}}
\newcommand{\sregkm}{\mathcal{P}^{k}_{\mathsf{I_1,\ldots, I_k}}}
\newcommand{\sreg}{\mathcal{P}}
\newcommand{\until}{\mathsf{U}}
\newcommand{\since}{\mathsf{S}}
\newcommand{\pnregmtl}{\text{PnEMTL}}
\newcommand{\regmtl}{\mathsf{RatMTL}}
\newcommand{\mtl}{\mathsf{MTL}}
\newcommand{\tptl}{\mathsf{TPTL}}
\newcommand{\mitl}{\mathsf{MITL}}
\newcommand{\emitl}{\mathsf{EMITL}}
\newcommand{\toata}{\text{2-way 1-ATA}}
\newcommand{\toatarfl}{\text{2-way 1-ATA-rfl}}
\newcommand{\ltl}{\mathsf{LTL}}
\newcommand{\mso}{\text{MSO}}
\newcommand{\R}{\mathbb{R}}
\newcommand{\re}{\mathsf{A}}
\newcommand{\lem}{\vdash}
\newcommand{\rem}{\dashv}
\newcommand{\tran}{\mathsf{\delta_{tr}}}
\newcommand{\A}{\mathcal{A}}
\newcommand{\B}{\mathcal{B}}
\newcommand{\cC}{\mathcal{C}}
\newcommand{\C}{\mathbb{C}}
\newcommand{\G}{\mathcal{G}}
\newcommand{\I}{\mathcal{I}}
\newcommand{\J}{\mathcal{J}}
\newcommand{\h}{\mathbf{h}}
\newcommand{\s}{\mathcal{S}\mathsf{uccessor}}
\newcommand{\seq}{\mathsf{seq}}
\newcommand{\rev}{\mathsf{Rev}}
\newcommand{\absa}{{ABS(\A)}}
\newcommand{\succs}{\mathsf{Succ}}
\newcommand{\Qq}{\mathcal{Q}}
\newcommand{\tforall}{\overline{\forall}}
\newcommand{\texists}{\overline{\exists}}
\newcommand{\intintervaln}{\mathcal{I}_{\mathsf{+}}}
\newcommand{\intintervalneg}{\mathcal{I}_{\mathsf{-}}}
\newcommand{\intinterval}{\mathcal{I}_{\mathsf{+,-}}}
\newcommand{\boundaryint}{\mathsf{Boundary}}
\newcommand{\anch}{\mathsf{anch}}
\newcommand{\first}{\mathsf{first}}
\newcommand{\last}{\mathsf{last}}
\newcommand{\Ii}{\mathcal{I}}
\newcommand{\col}{\mathsf{Collapse}}
\newcommand{\normalize}{\mathsf{Norm}}
\newcommand{\Clcap}{\mathsf{CL}}
\newcommand{\md}{\mathsf{MD}}
\newcommand{\mtd}{\mathsf{MtD}}
\newcommand{\free}{\text{free}}
\newcommand{\bound}{\text{bound}}
\newcommand{\oomit}[1]{}
\newcommand{\init}{\text{init}}
\newcommand{\bseq}{\mathsf{BSequence}}
\newcommand{\Qseq}{\mathsf{Qseq}}
\newcommand{\qseq}{\mathsf{qseq}}
\newcommand{\Ac}{\mathcal{A}^c}
\begin{document}
\title{2-Way 1-Clock ATA \& Its Logics: Back To The Future With Alternations}
\author{Shankara Narayanan Krishna \inst{1} \and
Khushraj Madnani\inst{2} \and
Manuel Mazo Jr. \inst{2} \and Paritosh K. Pandya \inst{1}}
\authorrunning{S.N.Krishna et al.}
\institute{IIT Bombay, Mumbai, India \email{krishnas,pandya58}@cse.iitb.ac.in\and Delft University of Technology, Delft, The Netherlands\footnote{This work is partially supported by the European Research Council through the SENTIENT project (ERC-2017-STG \#755953).} \email{k.n.madnani-1,m.mazo}@tudelft.nl}
\maketitle
\oomit{
\begin{abstract} 
In this paper, we study the extension of 1-clock Alternating Timed Automata with ability to read in both forward and backward directions: 2-Way 1-clock Alternating Timed Automata (2-Way 1-ATA). We show that the subclass of 2-Way 1-ATA with reset free loops (2-Way 1-ATA-rfl) is expressively equivalent to a subclass of MSO with Guarded Metric Quantifiers (GQMSO). The emptiness checking problem for 1-ATA-rfl (and hence GQMSO) is undecidable, in general. We propose a generalization of  the classical  non-punctuality  restriction, called non-adjacency, for 2-Way 1-ATA-rfl (and GQMSO) for which the emptiness checking (and satisfiability checking, respectively) becomes decidable.
\end{abstract}
}
\begin{abstract} 
 In this paper, we study the extension of 1-clock Alternating Timed Automata (1-ATA) with the ability to scan the timed behaviour in both forward and backward directions: the 2-Way 1-clock Alternating Timed Automata (2-Way 1-ATA). We show that the subclass of 2-Way 1-ATA with reset free loops (2-Way 1-ATA-rfl) is expressively equivalent to MSO[$<$] extended with Guarded Metric Quantifiers (GQMSO). The emptiness checking problem for 2-Way 1-ATA-rfl (and hence GQMSO) is undecidable, in general. We propose a generalization of the classical non-punctuality restriction, called non-adjacency, for 2-Way 1-ATA-rfl, and also for GQMSO, for which the emptiness (respectively, satisfiability) checking becomes decidable. Non-Adjacent 2-Way 1-ATA-rfl is the first class of timed automata with alternations and 2-wayness for which the emptiness checking is decidable with \emph{elementary} complexity. We also show that 2-Way 1-ATA-rfl, even with the non-adjacent restrictions, can express properties that are not recognizable by 1-ATA. 
\end{abstract}
\section{Introduction and Related Work}
Exploring connections between different logics (e.g. the Kamp Theorem) and also between logics and automata (e.g. the B\"{u}chi Theorems) has been an active and influential area of work. Such connections often bring the ability to analyze logical questions algorithmically. Unfortunately, it has been challenging to find such tight connections between numerous timed logics and timed automata which have been proposed in the literature.

1-way 1-clock Alternating Timed Automata (1-ATA) were proposed as a Boolean closed model of timed languages with decidable emptiness. These were used to show the decidability of the future fragment of real-time logic $\mtl[\until]$ (see \cite{Ouaknine05} \cite{LW05})\footnote{These results are proved for automata and logics over finite timed words with point-wise interpretation. We shall also follow this interpretation in the current paper.}.  However, the logic was not expressively complete for these automata. Exploring connections between real-time classical and temporal logics,
Rabinovich \cite{rabin} as well as Hunter \cite{hunter} showed that logic $\mitl[\until, \since]$ extended with Pnueli modalities has the same expressive power as logic Q2MLO.  The current authors \cite{KKP18} defined a more expressive
and yet decidable extension of $\mtl[\until]$ called $\regmtl$, and showed that this is expressively equivalent to the subclass of 1-ATA where all loops are reset free (1-ATA-rfl). Moreover these are expressively equivalent to a future time logic QkMSO. 

The current paper explores a major extension of these results to logics and automata with both future and past. We show that the 2-Way extension of 1-ATA-rfl (2-Way 1-ATA-rfl) is expressively equivalent to an extension of MSO[$<$] with Guarded Metric Quantifiers (GQMSO).  The latter is a versatile and expressive logic, allowing properties of real-time systems to be defined conveniently. The use of Guarded Metric Quantifiers appeared in the pioneering formulations of logics QMLO and Q2MLO by Hirshfeld and Rabinovich \cite{rabin} and it was further explored by Hunter \cite{hunter}. We have \textbf{generalized} these to an anchored block of guarded quantifiers with arbitrary depth. This provides the required power to obtain expressive completeness. 

To show the reduction from GQMSO to  2-Way 1-ATA-rfl (and vice versa), the proof factors via a recently proposed extension of  MTL with ``\emph{Pnueli-Automata Modalities}''. This logic has been called Pnueli Extended Metric Temporal Logic (PnEMTL) \cite{KKMP21}. Hence, as our first main result we show, through effective reductions, the exact expressive equivalence $\equiv$ of the following:
\begin{equation}
\label{eqn:main1}
\text{2-way 1-ATA-rfl} ~~\equiv~~ \pnregmtl ~~ \equiv~~ \text{GQMSO}
\end{equation}
The readers may note the conceptual similarity of these results to the celebrated Kamp and B\"{u}chi Theorems.
Unfortunately, the full $\pnregmtl$, being a syntactic extension of $\mtl[\until,\since]$, is clearly undecidable. Hence, emptiness checking and satisfiability checking for both 2-Way 1-ATA-rfl and GQMSO are undecidable. 

In \cite{KKMP21}, we proposed a novel generalization of the non-punctuality condition of $\mitl$ to a \textbf{non-adjacency} condition and showed that the non-adjacent fragments of both $\pnregmtl$ as well as 1-$\tptl[\until,\since]$ have decidable satisfiability with EXPSPACE-complete complexity.  

As our second contribution we define the non-adjacency condition, suitably applied to 2-way 1-ATA automata and the logic GQMSO. We observe that the effective reductions between these formalisms and PnEMTL preserve this non-adjacency. From the previously established EXPSPACE-complete decidability of non-adjacent $\pnregmtl$ (see \cite{KKMP21}), it follows that emptiness of non-adjacent $\toatarfl$ as well as the satisfiability of  non-adjacent GQMSO are decidable. In fact, the former is EXPSPACE-complete. We also show that Non-Adjacent 2-Way 1-ATA-rfl can express properties that cannot be specified in 1-ATA, making their expressive powers incomparable.

To the best of our knowledge, this gives the first subclass of 2-way Alternating Timed Automata which has an  elementary complexity for emptiness checking. In the past, Alur and Henzinger have explored 2-way deterministic timed automata with bounded reversals (Bounded 2DTA) and shown that their non-emptiness is decidable with PSPACE complexity \cite{AHbacktofuture}. Ouaknine and Worrell  as well as Lasota and Walukiewicz  \cite{Ouaknine05} \cite{LW05} showed that emptiness checking of 1-ATA is decidable 
 with non-primitive recursive complexity over finite words and undecidable over infinite timed words. 
 Abdulla \emph{et al} \cite{ADOQW08} showed that  generalizing  1-ATA, by allowing  $\epsilon$-transitions, 2-wayness or omega words leads to undecidability of the emptiness checking problem. Thus,  our model  {\bf non-adjacent 2-Way 1-ATA with reset free loops}, is quite delicately poised. The expressively complete and decidable logic   {\bf Non-adjacent GQMSO} can be seen as a powerful decidable generalization of Hirshfeld and Rabinovich's  Q2MLO \cite{rabin} \cite{rabinovichY}.  Figure \ref{fig:intro} highlights the place of 2-Way extensions studied in the literature amongst the other studied variants of 1-ATA and logics in terms of expressiveness.
 \begin{figure}
     \centering
     \includegraphics[scale = 0.19]{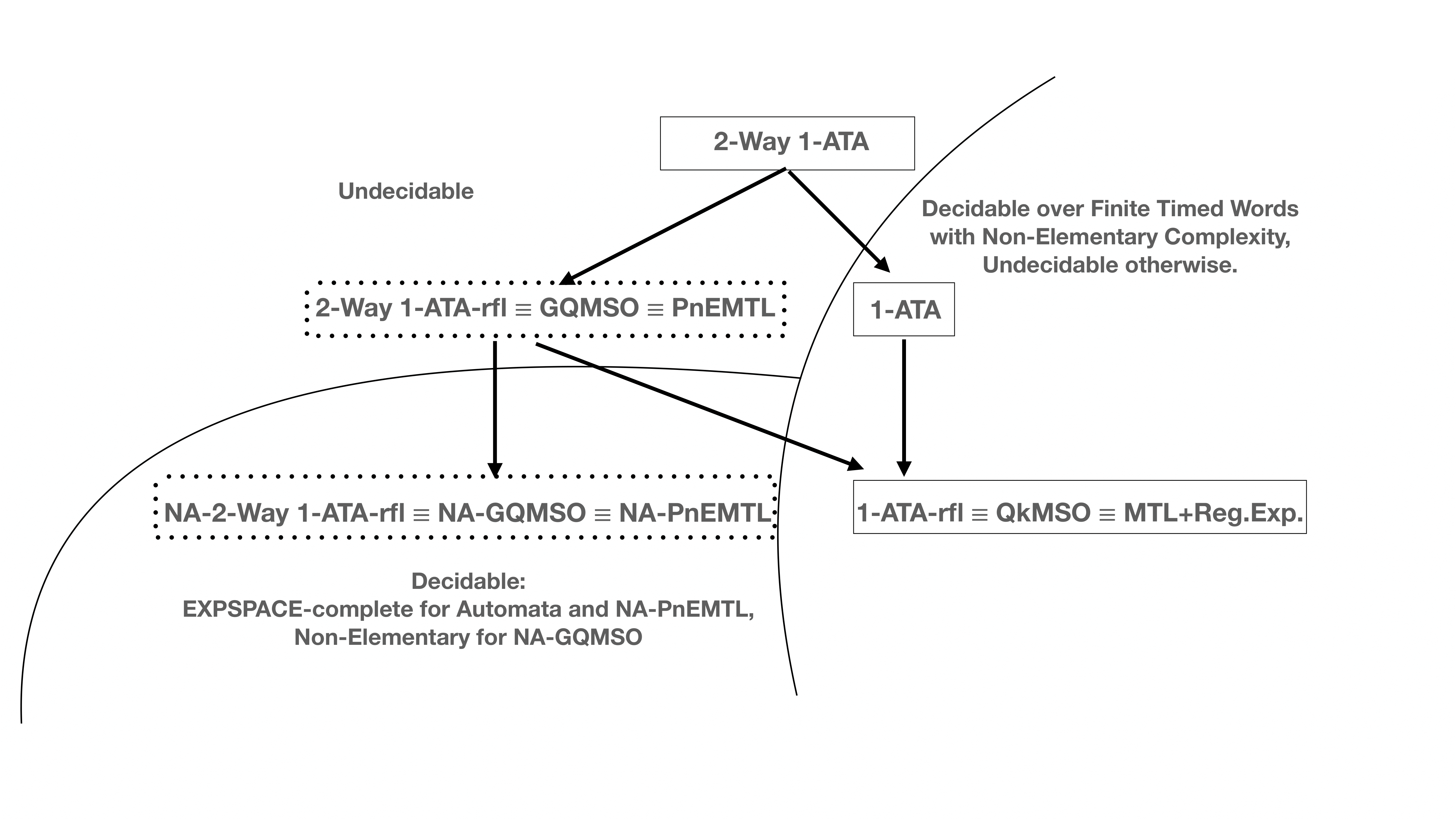}
     \caption{Comparison of expressive power and decidability of some variants of Alternating Timed Automata. An arrow from class A to B implies A is equally or more expressive than B. Classes (and the equivalences) within  dotted boxes are contributions of this paper. }
     \label{fig:intro}
 \end{figure}
\section{Preliminaries}
\label{sec:prelims}
Let $\Sigma$ be a finite set of propositions, and let  $\Gamma = 2^{\Sigma} \setminus \{\emptyset\}$. A (finite) word over $\Sigma$ is a (finite) sequence $\sigma = \sigma_1 \sigma_2 \ldots \sigma_n$, where $\sigma_i \in \Gamma$. A (finite) timed word $\rho$ over $\Sigma$ is a (finite) sequence of pairs in $\Gamma \times \R_{\geq 0}$;  
$\rho = (\sigma_1, \tau_1) \ldots (\sigma_n, \tau_n) \in (\Gamma \times \R_{\geq 0})^*$ 
where $\tau_1=0$ and $\tau_i \leq \tau_j$ for all $1 \leq i \leq j \leq n$.  The $\tau_i$ are called time stamps. 
For a timed or untimed word $\rho$, let $dom(\rho) = \{i ~|~ 1 \le i \le |\rho|\}$, where $|\rho|$ denotes length of $\rho$.  
Given a (timed) word $\rho$ and $i \in dom(\rho)$, a pointed (timed) word is the pair $\rho, i$. 
The set of all timed words over $\Gamma$ is denoted by $T\Gamma^*$. 
Let $\intintervaln$ ($\intintervalneg$)  be the set of open, half-open or closed time intervals containing real numbers, such that the end points of these intervals are in $\mathbb{N} \cup \{0,\infty\}$ (($\mathbb{Z} \cup \{-\infty\}) \setminus \mathbb{N}$, respectively). Let $\intinterval = \intintervaln \cup \intintervalneg$.
For $\tau {\in} \R$ and interval  $\langle a, b\rangle$, with $\langle {\in} \{(,[\}$ and $\rangle \in \{ ],) \}$, 
${\tau+\langle a, b\rangle}$ stands for the interval 
${\langle \tau+a, \tau+b\rangle}$.
\subsection {Anchored Interval Word Abstractions}
Let $I_{\mu}\subseteq \intinterval$. 
An $I_{\mu}$-interval word over $\Sigma$ is a word $\kappa$ of the form 
$a_1 a_2  \dots a_n \in 
(2^{\Sigma \cup \{\anch\} \cup  I_\mu})^*$. There is a unique $i \in dom(\kappa)$ called the \emph{anchor} of $\kappa$ such that $\anch \in a_i$ (also denoted as $\anch(\kappa)$).
Let $J$ be any interval in $I_\mu$. We say that a point $i \in dom(\kappa)$ is a $J$-time restricted point if and only if, $J \in a_i$. 
A point $i$ is called a \emph{time restricted point} if and only if either $i$ is $J$-time restricted for some interval $J$ in $I_\mu$ or $\anch \in a_i$. 
\\\noindent \textbf{From $I_\mu$-interval words to timed languages}. Given an $I_\mu$-interval word $\kappa=a_1 \dots a_n$ over $\Sigma$ and 
a timed word $\rho=(b_1, \tau_1)\dots (b_m, \tau_m)$, 
the pointed timed word $\rho, i$ is consistent with $\kappa$ iff 
$dom(\rho){=}dom(\kappa)$, $i{=}\anch(\kappa)$, and, for all $j\in dom(\kappa)$, $b_j=a_j\cap \Sigma$ and for $j \ne i$, $I \in a_j \cap I_\mu$ implies 
$\tau_j - \tau_i \in I$. 
Intuitively, each point $j$ of $\kappa$ does the following. (i) It stores the set of propositions that are true at point $j$ of $\rho$ and (ii) It  also stores the set of intervals $\I \subseteq I_\mu$ such that the time difference between $\anch(\kappa)$ and $j$ of $\rho$ lies within $\bigcap \I$, thus abstracting the time differences from the anchor point($i$) using some set of intervals in $I_\mu$.
We denote the set of all the pointed timed words consistent with a given interval word $\kappa$ as $\mathsf{Time(\kappa)}$. Similarly, given a set $\Omega$ of $I_\mu$ interval words, $\mathsf{Time(\Omega)}{=}\bigcup \limits_{\kappa \in \Omega} (\mathsf{Time(\kappa)})$. 
\begin{example}
Let 
$\kappa{=}\small{\{a,b, (-1,0)\} \{b, (-1,0)\} 
\{a, \anch\} \{b,[2, 3]\}}$ be an interval \\word over the set of intervals 
$\{(-1,0),[2,3]\}$. So, $\anch(\kappa)=3$. 
For timed words
 $\rho{=}\small{(\{a,b\}, 0)(\{b\}, 0.5)(\{a\}, 0.95) (\{b\}, 3)}$, $\rho'{=}\small{(\{a,b\},0)(\{b\},0.8)(\{a\},0.9)(\{b\},3)}$.
 $\rho,3$ and $\rho',3$ are  consistent with $\kappa$.  For $\rho'' {=} (\{a,b\}, 0) (\{b\}, 0.5) (\{a\}, 1.1) (\{b\}, 3)$,  $\rho'', 3$ is not consistent with $\kappa$ as $\tau_1-\tau_3 \notin (-1,0)$ (and also $\tau_4-\tau_3 \notin [2,3]$).
\end{example}
Note that the ``consistency relation'' is a many-to-many relation. For set of intervals $I_\mu$, a pointed timed word $\rho,i$ can be consistent with more than one $I_\mu$-interval word and vice versa. Full technical details on interval words can be found in the Appendix \ref{app:interval word} and \cite{KKMP21}.
\subsection {MSO with guarded metric quantifiers, GQMSO}
We define a real-time logic GQMSO which is interpreted over timed words. 
It includes $\mso[<]$ over words with respect to some alphabet $\Sigma$. This is extended with a notion of time constraint formula $\psi(t)$, 
where $t$ is a free first order variable.
All variables in our logic range over positions in the timed word and not over time stamps (unlike continuous interpretation of these logics). 
There are two sorts of formulae in GQMSO which are mutually recursively defined : 
$\mso^{\mathsf{UT}}$ and $\mso^{\mathsf{T}}$ (where $\mathsf{UT}$ stands for untimed and $\mathsf{T}$ for timed). 
An $\mso^{\mathsf{UT}}$ formula $\phi$  has no real-time constraints except for the time constraint subformula 
$\psi(t) \in \mso^{\mathsf{T}}$. A formula $\psi(t)$ has {only one} free variable $t$ (called anchor), which is a first order variable. 
$\psi(t)$ is defined as a block of real-time constrained quantification applied to a GQMSO formula with no free second order variables; it has the form $\Qq_1t_1. \Qq_2t_2. \dots \Qq_jt_j. ~
\phi(t,t_1,\ldots t_j)$  where $\phi \in \mso^{\mathsf{UT}}$. All the metric quantifiers in the quantifier block constrain their variable relative only to the anchor $t$. The precise syntax follows below.\footnote{In \cite{KKP18}, a similar logic called QkMSO was defined. QkMSO had yet another restriction: it can only quantify positions strictly in the future, and hence was not able to express past timed specifications.}
\\\textbf{Remark}: This form of real time constraints in first order logic were pioneered by Hirshfeld and Rabinovich \cite{rabin} in their logic Q2MLO (with only non-punctual guards) and its punctual extension was later shown to be expressively complete to FO[$<,+1$] by Hunter \cite{hunter} over signals. Here we extend the quantification to an \textbf{anchored block of quantifiers}.

We have a two sorted logic consisting of $\mso^{UT}$ formulae $\phi$ and time constrained formulae $\psi$. Let $a \in \Sigma$, and let $t,t'$ range over first order variables, while $T$ range over second order variables. The syntax of $\phi \in \mso^{\mathsf{UT}}$ is given by: 
\\$ t{=}t'~|~t{<}t'~|~Q_a(t)~|~T(t) \mid \phi {\wedge} \phi~|~{\neg} \phi~|~
 {\exists} t. \phi~|~{\exists}  T \phi~ |~\psi(t)$. \\
Here, $\psi(t) \in \mso^{\mathsf{T}}$
is a time constraint formula whose syntax and semantics are given a little later. A formula in $\mso^{\mathsf{UT}}$ with first order free variables $t_0,t_1, \ldots t_k$ and second-order free variables $T_1, \ldots, T_m$ is denoted $\phi(t_0,\ldots t_k,T_1, \ldots, T_m)$. 
The semantics of such formulae is as usual. Let $\rho = (\sigma_1, \tau_1) \ldots (\sigma_n, \tau_n)$ be a timed word over $\Sigma$. Given $\rho$, positions $i_0, \ldots, i_k$ in $dom(\rho)$, and sets of positions $A_1, \ldots, A_m$ with $A_i \subseteq dom(\rho)$, we define 
\\$\rho,(i_0,i_1,\ldots,i_k,A_1,\ldots,A_m) {\models}\phi(t_0 ,t_1, \ldots t_k,T_1, \ldots, T_m)$
inductively in $\mso[<]$.  
\begin{itemize}
 \item  $(\rho,i_0,\ldots,i_k,A_1,\ldots,A_m) {\models} $ $t_x {<} t_y$ 
 iff $i_x {<} i_y$,  
 \item $(\rho,i_0,\ldots,i_k,A_1,\ldots,A_m) {\models} $ 
 $Q_a(t_x)$ iff $a {\in} \sigma_{i_x}$,
 \item  $(\rho,i_0,\ldots,i_k,A_1,\ldots,A_m) {\models} $ $ T_j(t_x)$ iff $i_x {\in} A_j$, 
 \item $(\rho,i_0,\ldots,i_k,A_1,\ldots,A_m) {\models} $ $\exists t'.\phi(t_0, \ldots t_k,t',T_1, \ldots, T_m)$ iff  \\ $(\rho,i_0,\ldots,i_k,i',A_1,\ldots,A_m) \models \phi(t_0, \ldots t_k,t',T_1, \ldots, T_m)$ for some $i' {\in} dom(\rho)$. 
\end{itemize}
The {\bf time constraint} formula $\psi(t) \in \mso^{\mathsf{T}}$ has the form:
\\$\Qq_1t_1. \Qq_2t_2. \dots \Qq_jt_j. ~
\phi(t,t_1,\ldots t_j)$ 
where $t_1, \ldots, t_j$ are first order variables and $\phi \in \mso^{\mathsf{UT}}$. Each quantifier 
$\Qq_x t_x$ has the form $\texists t_x \in t+ I_x$ or
$\tforall t_x \in t+ I_x$ for a time interval  $I_x \in \intinterval$. 
$\Qq_x$ is called a metric quantifier. Note that each metric quantifier constrains its variable only relative to the anchor variable $t$. Moreover, $\psi(t)$ has no free second order variables.  The semantics of such an anchored metric quantifier is as follows. Let 
\\$(\rho,i_0) \models \texists t_1 \in t{+}I. \phi(t,t_1,\ldots t_j)$ iff $\begin{Bmatrix}
\text{there exists } i_1 \text{ such that }\tau_{i_1} \in \tau_{i_0}+I 
\text{ and,}\\ 
(\rho,i_0,i_1 \ldots i_j) \models  \phi(t,t_1,\ldots, t_j)
\end{Bmatrix}$,
\\$(\rho,i_0) \models \tforall t_1 \in t{+}I. \phi(t,t_1,\ldots t_j)$ iff $\begin{Bmatrix}
\text{for all } i_1 \text{ such that }\tau_{i_1} \in \tau_{i_0}+I 
\text{ implies,}\\ 
(\rho,i_0,i_1 \ldots i_j) \models  \phi(t,t_1,\ldots, t_j)
\end{Bmatrix}$.
\\ Note that metric quantifiers quantify over positions of the timed word and the metric constraint is applied on the timestamp of the corresponding positions.
Each time a constraint formula in GQMSO has exactly one free variable; variables $t_1,\ldots,t_j$ are called time constrained in $\psi(t)$. If we restrict the grammar of a time constrained formula $\psi(t) \in \mso^{\mathsf{T}}$ to contain only a single metric quantifier (i.e. $\Qq_1t_1. \phi(t,t_1)$) and disallow the usage of second order quantification, we get the logic \textbf{Q2MLO} of \cite{rabinovichY}.
\begin{example} 
Consider a sequence over events $\Sigma =\{a,b\}$ such that from every $a$ there were positive even number of $b$'s in the previous unit interval.
\\$\phi = \forall t. Q_a(t) \rightarrow \psi(t)$ where 
\\$\psi(t) = [ \exists t_{f} \in t+ [-1,0]. \exists t_{l} \in t+[-1,0] \forall t' \in t+[-1, 0]. \gamma(t,t_f,t_l,t')$ where 
\\$\gamma(t,t_f,t_l,t') = t_f \le t' \le t_l \wedge \exists X_{o}. \exists T_e. T_o(t_f) \wedge T_e(t_l) \wedge \forall t_1. \forall t_2.$ \\ $[\{Q_b(t_1) \wedge Q_b(t_2) \wedge \forall t_3. (t_1<t_3<t_2 \rightarrow \neg Q_b(t_3))\} \rightarrow$ \\$\{(X_o(t_1) \wedge \neg X_e(t_1) \wedge X_e(t_2) \wedge \neg X_o(t_2)) \vee (X_e(t_1) \wedge \neg X_e(t_1) \wedge X_o(t_2) \wedge \neg X_o(t_2))\}]$. 
Here  
$\phi$ is a formula of type $\mso^{\mathsf{UT}}$ containing the subformula $\psi(t)$ of type $\mso^T$ which in-turn contains the formula $\gamma(t,t_f,t_l,t')$ of type $\mso^{\mathsf{UT}}$.
\end{example}
Note that, while GQMSO extends classical MSO$[<]$,  GQMSO is not closed under second order quantification:
arbitrary use of second order quantification is not allowed, and its syntactic usage as explained above is restricted to prevent a second order free variable from occurring in the scope of the real-time constraint (similar to \cite{Raskin-Thesis}, \cite{icalp-raskin} and  \cite{Wilke}). For example, $\exists X. \exists t. [X(t) \wedge \texists t' {\in} t{+}(1,2) Q_a(t')]$ is a well-formed GQMSO formula while, $\exists X. \exists t. \texists t' {\in} t{+}(1,2)[Q_a(t')\wedge X(t)]$ is not, since $X$ occurs freely within the scope of the metric quantifier.

\begin{example}
\label{ex:insterr}
We define a language $\mathsf{L_{inst err}}$ over the singleton alphabet $\Sigma = \{b\}$ accepting words satisfying the following conditions:
\\1. One $b$ with timestamp 0 at the first position. (Positions are counted $1,2,3,\ldots$).
\\2. Exactly two points in the interval $(0,1)$ at positions $2$ and $3$ with timestamps called $\tau_2$ and $\tau_3$, respectively.
\\3. Exactly one $b$ in  $[\tau_2+1, \tau_3 +1]$ at some position $p$. Other $b$'s can occur freely elsewhere.
\\The above language was proposed by Lasota and Walukiewicz \cite{LasotaW08} (Theorem 2.8) as an example of language not recognizable by 1-ATA but expressible by a Deterministic Timed Automata with 2 clocks. Let $S(u,v)$ be the FO[$<$] formula specifying  the successor relation (i.e. $u = v+1$). This can be specified as the GQMSO formula $\psi = \psi_1 \land \psi_3$, where 
\begin{enumerate}
    \item Let $Pos_1(t) {=} \neg \exists w. S(t, w)$, $Pos_i(t) {=} \exists t'. S(t,t') \wedge Pos_i(t')$. Hence, $Pos_i(t)$ holds only when $t=i$, where $i \in \{1,2,3,4\}$.
    \item Let $\psi_1{=}\exists t_1. ~Pos_1(t_1) {\land} (\texists t_2 \in t_1+(0,1) . \texists t_3 \in t_1 +(0,1). [Pos_2(t_2) \wedge Pos_3(t_3) \wedge 
\neg \texists t \in t_1+(0,1). Pos_4(t)]$. This states that exactly two positions exist in the initial unit time interval $(0,1)$. Let their time stamps be $\tau_2$ and $\tau_3$.
    \item Let $\psi_2(p) = [~\texists t \in p+[-1,0). Pos_3(t) ~~\wedge~~ \neg \texists t \in p+(-1,0). Pos_2(t)~]$. This states that position $p$ lies within $[\tau_2+1, \tau_3 +1]$.
    \item $\psi_3 ~=~ \exists p . ~[\psi_2(p) ~\land (\forall q. \psi_2 (q) ~\rightarrow (p=q))]$ states that there is exactly one position satisfying property $\psi_2$.
\end{enumerate}
\end{example}
{\bf{Metric Depth}}. The \emph{metric depth} of a formula $\varphi$ denoted ($\mtd(\varphi)$) gives the nesting depth of time constraint constructs and is defined inductively: For atomic formulae $\varphi$,  $\mtd(\varphi)=0$.
$\mtd[\varphi_1 \land \varphi_2]= \mtd[\varphi_1 \vee \varphi_2] = max(\mtd[\varphi_1],\mtd[\varphi_2])$ and $\mtd[\exists t. \varphi(t)]{=}\mtd[\neg \varphi]{=}\mtd(\varphi(t))$. $\mtd[\Qq_1t_1 \ldots \Qq_j t_j \phi]$ $=$ $\mtd[\phi] + 1$.\\For example, the sentence $\forall t_3~\tforall t_1 \in t_3 + (1,2)~ \{Q_a(t_1) {\rightarrow} (\texists t_0 \in t_1 + [1,1]~ Q_b(t_0))\}$ accepts all timed words such that for each $a$ which is at distance $(1,2)$ from some time stamp $t$, there is a $b$ at distance 1 from it. This sentence has metric depth two with time constrained variables $t_0,t_1$.
\subsubsection{GQMSO with Alternation Free Metric Quantifiers (AF-GQMSO)} 
\label{sec:afgqmso}
We define a syntactic fragment of GQMSO, called AF-GQMSO, where all the metric quantifiers in the outermost quantifier block of every MSO$^{\mathsf{T}}$ subformulae are existential metric quantifiers. More precisely, AF-GQMSO is a syntactic fragment of GQMSO where the {\bf time constraint} $\psi(t_0)$ has the form 
$\texists t_1 \in t_0+I_1. \texists t_2 \in t_0 + I_2. \dots \texists t_j \in t_0 +I_j. ~
\phi(t_0,t_1,\ldots t_j)$ with $\phi \in \mso^{\mathsf{UT} }$. Hence,  there  is  no  alternation of metric quantifiers within a block of the metric quantifier. Note that the negation of the timed subformula is allowed in the syntax of GQMSO (and hence AF-GQMSO). Hence,  alternation free $\tforall^*$ formulae can also be expressed using AF-GQMSO. Later, we show that AF-GQMSO is as expressive as GQMSO.

\subsection{Metric Temporal Logic ($\mtl$)}
$\mtl$ is a real-time extension of $\ltl$ where the modalities until ($\until$) and since ($\since$) are guarded with intervals. Formulae of $\mtl$ are built from $\Sigma$ using Boolean connectives and 
time constrained versions  $\until_I$ and $\since_I$ of the standard $\until,\since$ modalities, where 
 $I \in \intintervaln$. 
Intervals of the form $[x,x]$  are called  punctual; a non-punctual interval is one which is not punctual. Formulae in $\mtl$ are defined as follows.  
$\varphi::=a ~|\top~|\varphi \wedge \varphi~|~\neg \varphi~|
~\varphi \until_I \varphi ~|~ \varphi \since_I \varphi$, 
where $a \in \Sigma$ and  $I \in \intintervaln$.    
For a timed word $\rho = (\sigma_1, \tau_1 ) (\sigma_2, \tau_2) \ldots (\sigma_n, \tau_n) \in (\Gamma \times \R_{\geq 0})^*$, 
with $\Gamma=2^{\Sigma}\backslash \emptyset$, 
a position 
$i \in dom(\rho)$, an $\mathsf{MTL}$ formula $\varphi$, the satisfaction of $\varphi$ at a position $i$ 
of $\rho$, denoted $\rho, i \models \varphi$, is defined below. We discuss the time constrained modalities. 
\begin{itemize}
\item $\rho,i\ \models\ \varphi_{1} \until_{I} \varphi_{2}$  $\iff$  $\exists j > i$. 
$\rho,j\ \models\ \varphi_{2}, \tau_{j} - \tau_{i} \in I$, and  $\forall$ $i< k <j$. $\rho,k\ \models\ \varphi_{1}$ , 
\item $\rho,i\ \models\ \varphi_{1} \since_{I} \varphi_{2}$  $\iff$  $\exists j < i$. $\rho,j\ \models\ \varphi_{2}, \tau_{i} - \tau_{j} \in I$, and   $\forall$ $j< k <i$. $\rho,k\ \models\ \varphi_{1}$.
\end{itemize}
The language of an $\mtl$ formula $\varphi$ is defined as $L(\varphi) {=} \{\rho | \rho, 1 \models \varphi\}$. We say that a formula $\varphi$ is \textbf{satisfiable} iff $L(\varphi)\neq \emptyset$.
The subclass of $\mathsf{MTL}$ where punctual intervals are disallowed is called Metric Interval Temporal Logic $\mathsf{MITL}$. As we are using strict semantics of $\until$ and $\since$, $next$ and $previous$ are trivially definable. 
Satisfiability checking is undecidable for $\mtl[\until,\since]$ \cite{AH93} and EXPSPACE-complete for $\mathsf{MITL}$ \cite{AFH96}.
\subsubsection{MTL extended with Automata Modalities}
There have been several attempts to extend the logic $\mtl[\until]$ with regular expression/automaton
modalities \cite{Wilke,KKP17,F18,H19}. Among these, \cite{Wilke} was the first to extend the logic $\mitl$ with automata modalities, called Extended Metric Interval Temporal Logic ($\emitl$). In our very recent work \cite{KKMP21}, we use a generalization of these automata modalities to give the logic Pnueli-Extendend Metric Temporal Logic ($\pnregmtl$). For any Finite Automaton (NFA) $A$, let $L(A)$ denote the language of $A$.

For an alphabet $\Sigma$, the formulae of $\pnregmtl$ have the following syntax: \\
$\varphi::{=}a~|\varphi \wedge \varphi~|~\neg \varphi~| \fregm^k_{I_1,\ldots,I_k} (\re_1,\ldots, \re_{k+1})(S)~|~ 
\sregm^k_{I_1,\ldots,I_k} (\re_1,\ldots,\re_{k+1})(S)$\\ where $a \in \Sigma$, 
$I_1, I_2, \ldots I_k \in \intintervaln$ and $\re_1, \ldots \re_{k+1}$ are automata over $2^S$ where $S$ is a set of formulae from $\pnregmtl$. 

Let $\rho {=} (a_1, \tau_1),\ldots (a_n, \tau_n) \in T\Gamma^*$, $x,y \in dom(\rho)$, $x{\le} y$ and $S {=} \{\varphi_1,\ldots, \varphi_n\}$ be a given set of $\pnregmtl$ formulae. Let $S_i$ be the exact subset of formulae from $S$ evaluating to true 
at $\rho, i$, and let  $\mathsf{Seg^+}({\rho},{x},{y},S)$ and $\mathsf{Seg^{-}}({\rho},{y},{x},S)$
be the  untimed words $S_x S_{x+1} \ldots  S_y$ and $S_y  S_{y{-}1} \ldots  S_x$ respectively.   
Then, the semantics for $\rho,i_0$ satisfying a $\pnregmtl$ formula $\varphi$ is defined recursively as :
\begin{itemize}
\item $\rho,i_0{\models}\fregm^k_{I_1,\ldots,I_k}(\re_1,\ldots,\re_{k+1})(S)$ iff 
 ${\exists} {i_0{ {<} }i_1{<} i_2 \ldots {<} i_k {<} n}$ s.t.\\ $\bigwedge \limits_{w{=}1}^{k}{[(\tau_{i_w} {-} \tau_{i_0} {\in} I_w)}
 \wedge \mathsf{Seg^+}(\rho, i_{w{-}1}+1, i_w,S) {\in} L({\re_w})] \wedge 
 \mathsf{Seg^+}(\rho, i_{k}, n,S) {\in} L({\re_{k+1}}) $\\
\item $\rho,i_0 \models \sregm^k_{I_1,I_2,\ldots,I_k} (\re_1,\ldots,\re_k, \re_{k+1})(S)$ iff  
${\exists} i_0  {>} i_1 {>} i_2 \ldots {>} i_k > 1$ s.t. \\
$\bigwedge \limits_{w{=}1}^{k}[(\tau_{i_0} {-} \tau_{i_w} {\in} I_w)
 \wedge \mathsf{Seg^{-}}(\rho, i_{w{-}1}-1, i_w,S) {\in} L({\re_{w}})] \wedge 
 \mathsf{Seg^{-}}(\rho, i_{k}, n,S) {\in} L({\re_{k+1}})$
 \footnote{Unlike \cite{KKMP21}, we introduce the strict version of modalities, without loss of generality, for technical reasons. This doesn't affect the complexity of satisfiability checking for its non-adjacent fragment.}. 
\end{itemize}
\begin{figure}[h]
\scalebox{0.6}{
    \includegraphics{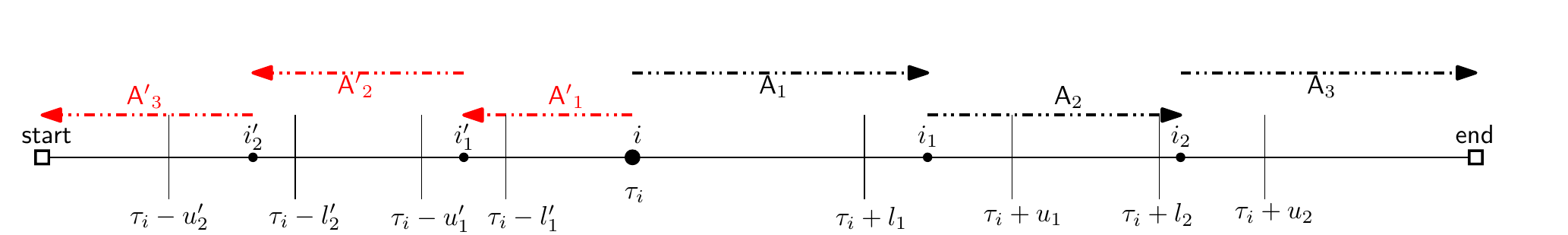}} 
    \caption{Semantics of $\pnregmtl$.$\rho, i{\models}\fregm^{2}_{I_1, I_2}(\mathsf{\re_1, \re_2, \re_3})$ {\&} $\rho, i {\models} \sregm^2_{J_1, J_2}(\mathsf{\re'_1, \re'_2, \re'_3})$ where $I_1 {=} \langle l_1, u_1\rangle, I_2 {=} \langle l_2, u_2\rangle, J_1 {=} \langle l'_1, u'_1\rangle$, $J_2 {=} \langle l'_2, u'_2\rangle$}
    \label{fig:pnregmtl}
\end{figure}
The language of any $\pnregmtl$ formulae $\varphi$ is $L(\varphi) = \{\rho | \rho,1 \models \varphi\}$. 
Given a $\pnregmtl$ formula $\varphi$, its arity is the maximum number of intervals appearing in any $\fregm, \sregm$ modality of $\varphi$. For example, the arity of $\varphi=\fregm^2_{I_1,I_2}(\re_1,\re_2, \re_3)(S_1) \wedge \sregm^1_{I_1} (\re_1,\re_2)(S_2)$ is 2. 
\begin{example}
Consider the formula $\fregm^2_{(1,2)(2,3)}(\{a\}^*\{b\},\{b\}^*\{a\},\{a\}^*)(\{a,b\})$. This formula specifies, that there are sequences of points where $a$ consecutively holds, followed by a sequence of $b$'s again followed by a sequence of $a$'s. Moreover, the first sequence ends within the time interval $(1,2)$ while the second sequence ends within the interval $(2,3)$ from the present point.
\end{example}
 \noindent \textbf{Modal Depth}.
The Modal Depth of a formula $\varphi$, denoted $\md(\varphi)$, is a measure of the nesting of its temporal modalities defined recursively as follows. $\md(a){=} 0$ for any proposition $a$, $\md(\varphi \vee \psi) =  \mathsf{Max}(\md(\varphi), \md(\psi)), \md(\neg \varphi) = \md(\varphi)$,  \\$\md(\mathcal{M}_{I_1,\ldots,I_k}(\re_1,\ldots,\re_{k+1})\langle S\rangle){=}  \mathsf{Max}_{\varphi {\in} S}(\md(\varphi)){+}1$, where $\mathcal{M} \in \{\fregk,\sregk\}$.

\subsection {Expressive Completeness and Strong Equivalence}
Given any specification (formulae or automata) $X$ and $Y$, $X$ is \textit {equivalent} to $Y$ when for any pointed timed word $\rho,i$, $\rho,i \models X \iff \rho,i \models Y$.
We say that a  formalism $\mathcal{X}$ (logic or machine) is \emph{expressively complete} to $\mathcal{Y}$, denoted by $\mathcal{Y} \subseteq \mathcal{X}$, if and only if, for any formulae/automata $X \in \mathcal{X}$ there exists an equivalent $Y \in \mathcal{Y}$. 
$\mathcal{X}$ is said to be \emph{expressively equivalent} to $\mathcal{Y}$, denoted by $\mathcal{X} \cong \mathcal{Y}$ when $\mathcal{X} \subseteq \mathcal{Y}$ and $\mathcal{Y} \subseteq \mathcal{X}$.

\section{Two Way 1-clock Alternating Timed Automata}
\label{sec:toata}
We now define an extension of 1-ATA \cite{Ouaknine05} \cite{LW05}, with ``two wayness''.
Let $\Sigma$ be a finite alphabet. A $\toata$ is a 6 tuple $\mathcal{A}=(\Sigma, Q^+, Q^-, init, \top ,\bot, \delta, \G)$, where 
$Q^+ \cap Q^-  = \emptyset$, $Q = Q^+ \cup Q^-$, and $Q^+$ and $Q^-$ are finite sets of \emph{forward} and \emph{backward} moving locations, respectively. $init \in (Q^+ \cup Q^- \cup \{\top, \bot\})$  is the initial location , $\top$ and $\bot$ are accepting and rejecting locations, respectively. 
Let $x$ denote the clock variable as in 1-ATA (but it can take negative values unlike 1-ATA), and $x \in I$ denote a clock constraint where $I \in \intinterval$. Then $\G$ is a finite set of clock constraints. We say that a real number $\mu$ satisfies a clock constraint $x \in I$, denoted by $\mu \models x \in I$ iff $\mu \in I$.
\\Let $\Sigma' = \Sigma \cup \{\lem, \rem \}$ where $\lem , \rem$ are left and right end markers, respectively. Let $\rho$ be any word over $\Sigma$ with $\tau_{last}$ being the timestamp of the last time point. 
\\Let $Q = Q^+ \cup Q^- $. The transition function is defined as $\delta: Q \times \Sigma' \times \G \rightarrow \Phi(Q')$ where
$Q' = Q\cup \{\top, \bot\})$ and $\Phi(Q')$ is a set of formulae over $Q'$ defined by the grammar as follows. $\varphi ::= \psi \vee \varphi ~|~ \bot $, $\psi::=\psi \wedge \psi~|~q~|~x.q~|~ \top$, where  $q \in Q'$ and $x.q$ is a binding construct  resetting clock $x$ to 0. In other words, $\Phi(Q')$ is a family of positive boolean formulae in Disjunctive Normal Form (DNF) over literals from $Q' \cup \{x.q | q \in Q'\}$.
 
We denote by $\free(\varphi)$, the set of all the locations in $Q$ which do not appear within the 
scope of a reset construct in $\varphi$. Similarly, we denote by $\bound(\varphi)$, the set of all the locations in $Q$ which appear within the scope of a reset construct in $\varphi$. Note that $\free(\varphi)$ and $\bound(\varphi)$ are not necessarily disjoint sets as any location $q$ can be both within and beyond the scope of a reset construct. For example, in $\varphi = q \wedge x.q$, $\free(\varphi) = \bound(\varphi) = \{q\}$.
We define   $\tran(q, a, \mu) = \bigvee \limits_{g \in \G \wedge \mu \models g} \delta(q,a,g)$ \footnote{We define Alternating Finite Automata (AFA) in a similar way as a 7 tuple, $\mathcal{A}=(\Sigma, Q^+, Q^-, init, \top ,\bot, \delta)$. The transition function is $Q \times \Sigma \rightarrow \B(Q\cup \{\top, \bot\})$, where $\B(Q\cup \{\top, \bot\})$ is a Boolean formula (in DNF) over $Q \cup \{\top, \bot \}$, and, $\tran(q,a) = \delta(q,a)$.}.
Given any set of locations $W$, we denote by $W_x$ the set   $\{x.q | q \in W\}$. We apply the following restrictions on transition functions to make sure that the automaton does not ``fall off'' the timed word. For any $q \in Q^-$ and $q' \in Q^+$, $\delta (q,\lem)$ and $\delta (q',\rem)$ are expressions of the form $\Phi(Q^+ \cup \{\top, \bot\})$ and $\Phi(Q^-\cup \{\top, \bot\})$, respectively.

Let $q\in Q$ and $0 \le \h \le m+1$.
A state of a 2-way 1-ATA is either a $\top$ (accepting state) or a $\bot$ (reject state) or a tuple of the form $(q, \mu, \h)$ where $q \in (Q^+ \cup Q^-)$, $\mu$ is a clock valuation and $\h$ is the head position. Formally, a state is an element of $S = ((Q^+ \cup Q^-) \times \mathbb{R} \times (\{0\}\cup \mathbb{N})) \cup \{\top, \bot\}$. A configuration is a set of states.
For any $\toata$, we define a function $\succs$ (which depends solely on the transition function of the given $\toata$) from  a word $\rho$ and a state $s$  to a set of configurations,  $\succs: T\Sigma^* \times S \rightarrow 2^{2^{S}}$, as follows:
\begin{itemize}
\item Let $\rho = (a_1,\tau_1),(a_2,\tau_2),\ldots, (a_m,\tau_m)$. Let $\tau_0 {=} 0$,  $\tau_{m+1}{ = }\tau_m$, $a_0 {= }\lem$ and $a_{m+1} {=} \rem$.
\item $\succs(\rho, \top) = \{\top\}$, $\succs(\rho, \bot) = \{\bot\}$
\item Let $s = (q,\mu, \h)$ be any state, where $0 \le \h \le m+1$. Let $\h' = \h + 1$ if $q\in Q^+$. Otherwise, $\h' = \h - 1$. Let $\mu' = \mu + \tau_{\h'} - \tau_{\h}$.  Let $\tran(q, a, \mu) = \bigvee \limits_{i=1}^{n} (\varphi_i)$ where  $\varphi_i = \top$, $\varphi_i = \bot$ or $\varphi_i$ is of the form $(\bigwedge Q_i \wedge \bigwedge \{x.q|q\in Q'_i\})$ where $Q_i, Q'_i \subseteq Q$. Any configuration $\C \in \succs(\rho,s)$ if and only if there exists $1\le i \le n$,  $\C = \{(q',\mu',\h') | q' \in \free(\varphi_i)\} \cup \{(q',0,\h') | q' \in \bound(\varphi_i)\}$.
    Intuitively, if $q$ is a forward (or backward) moving state then the $\h$ is shifted forward (or backward, respectively) to $\h'$, the valuation of clock $\mu$ is updated to $\mu'$  by adding (or subtracting, respectively) the time delay incurred,  the set of propositions at the new position $\h'$ is read, and non-deterministically, a conjunct (of the DNF) from an  outgoing transition satisfied by the clock valuation $\mu'$ is chosen. The state makes a transition to all the locations appearing in the chosen conjunct simultaneously with the clock valuation as $\mu'$ if a location is free and $0$ if a location is within the scope of a reset.
\end{itemize}
We lift the definition of $\succs$ to configurations, $\s: T\Sigma^* \times 2^{S} \rightarrow 2^{2^{S}}$. Given any two configurations, $\C, \C'$,  $\C' \in \s(\rho, \C)$ if and only if $\C = \{s_1, \ldots s_m\}$ and 
$\C' = \C_1 \cup \ldots \cup \C_m$ such that for every $1\le i \le m, \C_i \in \succs(\rho, s_i)$. Let $\C' \in \s^0(\rho, \C)$ iff  $\C = \C'$. Then we define a function $\s^i$, $\s^i: T\Sigma^* \times 2^{S} \rightarrow 2^{2^{S}}$ such that $\C' \in \s^i(\rho, \C)$ iff  there exists a $\C'' \in \s^{i-1}(\rho, \C) $ such that  $ \C' \in \s(\rho, \C'')$. A configuration $\C$ is accepting if and only if $\C = \{\top\}$. A configuration $\C$ is a rejecting if and only if $\bot \in \C$. 
 Given $\rho$, We say that a configuration $\C$ is $i^{th}$ successor of a configuration $\C'$  with respect to $\rho$ if and only if $\C \in \s^i(\rho, \C')$. A configuration $\C$ is eventually accepting on $\rho$ iff there exists a non-negative integer $n$ such that $\s^n(\rho, \C)= \{\top\}$. 
 
 We say that a pointed timed word $\rho, i \models_\A (q, \mu)$, iff $\{(q, i, \mu)\}$ is eventually accepting on $\rho$.  We say that a pointed timed word $\rho,i$ is accepted by automata $\A$ if and only $\{(init,0,i)\}$ is eventually accepting on $\rho$.
Similarly, a timed word $\rho$ is accepted by automata $\A$ if and only if $\rho,0$ is accepted by $\A$. The language of $\A$, denoted by $L(\A)$, is the set of all timed words accepted by $\A$. To check whether language of a given automaton is empty is called \textbf{emptiness checking}. 
\begin{example}
Consider a 2-Way 1-ATA $A = (\Sigma, Q^+, Q^-, q_0, \top, \bot, \delta, \G)$ where $\Sigma = \{a,b\}$, $Q^+ = \{q_0, q_1\}$, $Q^- = \{p_1\}$, $\G = \{x \in (1,2), x \in (0,1)\}$ and the transition relation is defined as follows. \begin{itemize}
    \item $\delta(q_0, a, x \in (0,1)) = q_0 \wedge x.q_1$, $\delta(q_1, a, x \ne 1) = \delta(q_1, b, x \ne 1) = q_1$,  $\delta (q_1, b, x=1) = \top$. These transitions only allow behaviours where for every occurrence of $a$ within time interval $(0,1)$ there is an occurrence of $b$ exactly after $1$ time units.
    \item $\delta(q_0, \rem) = \top$, $\delta(q_0, b, x \in (1,2)) = x.p_1$, $\delta(p_1, a, x \ne -1) = \delta(p_1, b, x \ne -1) = p_1$,  $\delta (p_1, a, x=-1) = \top$. These transitions only allow behaviours where for every $b$ within time interval $(1,2)$ there was an occurrence of $a$ exactly before $1$ time units.
    \item Moreover the transitions outgoing from $q_0$ make sure that all the $a$'s and $b$'s occur with timestamps in $(0,1)$ and $(1,2)$, respectively.
\end{itemize}
\end{example}
Hence, the above automata accepts words 
whose untimed sequence is of the form $a^nb^n$ for any $n \in \mathbb{N}$. Note that this specification cannot be expressed without the 2-Way extension used here. 
\subsection{Island Normal Form}
We define a normal form for 2-way 1-ATA similar to the normal form of 1-ATA defined in \cite{KKP18}. A 2-way 1-ATA $A = (\Sigma, Q, i, \top ,\bot, \delta,\G)$ is said to be in \emph{Island Normal Form} iff $Q$ can be partitioned into $Q_1, \ldots, Q_n$ and each $Q_i$ has a location called the \emph{header location} $q_{i,r}$ such that:
 \begin{itemize}
 \item For every $a \in \Sigma$ and $q \in Q_i$, $\free(\delta(q,a))\subseteq Q_i\setminus\{q_{i,r}\}$. Hence, all  non-reset transitions outgoing from any location $q\in Q_i$ leads to a non-header location within $Q_i$.
\item For any location $q \in Q$ and $a \in \Sigma$, $\bound(\delta(q,a)) \subseteq \{q_{1,r},\ldots,q_{n,r}\}$. 
\end{itemize}
We call the elements of such partitions as islands. Thus, any transition on which a clock variable is reset, can only lead to the header location of one of the islands.  Therefore, once we enter an island, the only way to leave the island is via a reset transition. Moreover, entry to any island is via reset transition to its header location.
Note that as opposed to the normal form of \cite {KKP18} for 1-ATA, each island here is a reset-free 2-way 1-ATA. 
\subsection{2-way 1-ATA-rfl}
$\A$ is a 2-way 1-ATA-rfl if and only if it satisfies the following:
There is a partial order $(Q_r,\preceq)$ on the header locations (equivalently, on islands $Q_1, \ldots, Q_n$). Moreover, for any location $p \in Q_i$ and a location $q$, 
if $x.q$ occurs in $\delta(p,a)$ for any $a$ (hence $q=q^r_j$) then $q^r_j \prec q^r_i$ ($Q_j \prec Q_i$). Thus,
islands (which are only connected by reset transitions) form a DAG, and
every reset transition goes to a lower level island. Moreover, all transitions within an island are reset-free, but can form cycles. Hence, a cycle can never contain a transition with clock reset.
An island $Q_i$ is a \emph{terminal} island if there is no reset outgoing from any of its states. Hence, all terminal islands are essentially reset free 2-way 1-ATA. Similarly, an island $Q_j$ is said to be initial if its header state, $q^r_j$, is the initial state of $\A$. Note that terminal islands are minimal elements of $\prec$, while the initial island is the maximal element of $\prec$. Note that automata whose island normal form follows the above restrictions are simply automata whose transition graph contains cycles without a reset. The argument for this is similar to that of 1-ATA given in \cite{KKP18}.
\\\textbf{Reset Depth} of any 2-way 1-ATA-rfl $\A$ is the maximum number of reset transitions required to reach a terminal island from the initial island. Hence, the reset depth of a reset free automaton is 0. Similarly, the reset depth of a 2-way 1-ATA-rfl containing only 2 islands is 1.
 \\\noindent\textbf{Boolean Closure of 2-way 1-ATA-rfl} 2-way 1-ATA (rfl) are closed under intersection, union and complementation. The proof of this statement is identical to the case of 1-ATA (Proposition 4 \cite{LW05} or Propositions 7,8 of \cite{Ouaknine05}).
 \begin{lemma}
Any $\toata$ $A$ can be reduced to an equivalent automata in island normal form.
\end{lemma}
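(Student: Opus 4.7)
The plan is to transform $A = (\Sigma, Q^+, Q^-, init, \top, \bot, \delta, \G)$ by state duplication so that each state in the resulting automaton belongs to exactly one island. First, identify the set $R \subseteq Q \cup \{init\}$ of \emph{entry points}, consisting of $init$ together with every location $q$ such that the binding construct $x.q$ appears on the right-hand side of $\delta(p,a,g)$ for some $p, a, g$. Each $r \in R$ will become the header of its own island $Q_r$.

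For each $r \in R$, construct $Q_r$ as follows: include a header copy $r^h$, and for every location $q \in Q$ reachable from $r$ along a sequence of \emph{reset-free} transitions (i.e.\ following only free occurrences of locations inside the DNF formulae on the right-hand sides of $\delta$), include a non-header copy $q^{(r)}$. Set $Q^+_{new}$ and $Q^-_{new}$ by inheriting polarity from the original locations. Rewire the transitions so that from $r^h$ (respectively $q^{(r)}$) the automaton mimics $\delta(r,\cdot)$ (respectively $\delta(q,\cdot)$), but each free occurrence of a location $p$ is replaced by its non-header copy $p^{(r)}$, while each bound occurrence $x.p'$ (necessarily with $p' \in R$) is replaced by $x.(p')^h$. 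Declare the initial location of the new automaton to be $init^h$.

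By construction, every non-reset transition from a state in $Q_r$ lands inside $Q_r \setminus \{r^h\}$, and every reset transition --- regardless of its source island --- targets some header, so both bullet conditions of the island normal form are met. Correctness follows by defining the projection $\pi(q^{(r)}) = \pi(r^h) = q$ (resp.\ $r$) on locations and extending it pointwise to configurations; one checks by induction on $i$ that $\pi(\s^i_{new}(\rho,\C)) = \s^i_A(\rho,\pi(\C))$, which preserves acceptance since $\pi(\top) = \top$ and $\pi(\bot) = \bot$, and clock valuations and head positions are untouched by the construction.

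The main obstacle is condition requiring $\free(\delta(q,a)) \subseteq Q_i \setminus \{q_{i,r}\}$, which forbids non-reset transitions inside $Q_r$ from re-entering the header $r^h$. This is precisely what the non-header copy $r^{(r)}$ is there for: if the original automaton contains a reset-free path from $r$ back to $r$, then $r^{(r)}$ is created and all internal non-reset transitions in $Q_r$ that would target $r$ are redirected to $r^{(r)}$, while only reset transitions from elsewhere can land on $r^h$. Alternation and two-wayness enter only through the DNF structure of $\delta$ and the polarities $Q^+, Q^-$; since the island partition is based purely on the reset/non-reset distinction in transitions and not on the direction of head movement, they cause no additional difficulty, and the argument is structurally analogous to the normal-form construction for 1-ATA in \cite{KKP18}.
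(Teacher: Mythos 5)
Your construction is correct and is essentially the state-duplication argument that the paper itself defers to (the normalization of \cite{KKP18}): one header copy per entry point, one non-header copy of each reset-free-reachable location per island, free occurrences redirected to in-island copies and bound occurrences to headers, with correctness via the projection map. The one technical caveat is that your claimed set equality $\pi(\s^i_{new}(\rho,\C)) = \s^i_A(\rho,\pi(\C))$ only holds as the inclusion $\supseteq$ once a configuration contains two distinct copies of the same location with identical valuation and head position (the copies may select different disjuncts, so the projected successor need not be a legal single-disjunct successor of the merged state); the routine fix is to prove instead the per-state acceptance equivalence $\rho,i \models_{new} (q^{(r)},\mu) \iff \rho,i \models_A (q,\mu)$ by induction on the number of steps to acceptance.
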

The proof is identical to the normalization of 1-ATA described in \cite{KKP18} and \cite{khushraj-thesis}. Hence, without loss of generality we can assume that a given 2-way 1-ATA is in island normal form. 

\section{Non-Adjacent 2-Way 1-ATA-rfl and GQMSO}
\label{sec:non-adacent classes}

Recently, a generalization of non-punctuality restriction called, \textbf{non-adjacency}, was explored in the context of logics $\tptl$ and $\pnregmtl$ \cite{KKMP21} to gain decidability. We propose similar non-adjacent subclasses of 2-Way 1-ATA and GQMSO in this section and show the decidability for these fragments in section \ref{sec:toatapnemtl}. Any set of intervals $\I$ is said to be non-adjacent iff for any $\mathsf{I_1, I_2} \in \I$, $\mathsf{inf(I_1) \ne sup (I_2)}$. For example, $\{(2,3),(4,5), (2,5)\}$ is  non-adjacent but $\{(0,1), (1,2)\}$ and $\{[1,1]\}$ are adjacent. Note that $[1,1]$ is adjacent to itself and hence it fails the test. Hence, non-adjacency is a generalization of non-punctual restriction of $\mitl$.
\smallskip
\\{\bf Non-Adjacent $\pnregmtl$ (NA-PnEMTL) \cite{KKMP21}}  is defined as a subclass of PnEMTL where every modality $\fregkm$ and $\sregkm$ is such that the given set of intervals $\{\mathsf{I_1, \ldots, I_k}\}$ is a non-adjacent set of intervals. (Note that the same interval can appear several times in the list.)
\subsection{Non-Adjacent 2-way 1-ATA-rfl (NA-2-way-1-ATA-rfl)}
Consider any 2-way 1-ATA $\A = (\Sigma, Q, init, \top ,\bot, \delta)$  with islands $Q_1, \ldots, Q_n$. $\A = (\Sigma, Q, init, \top ,\bot, \delta)$ is non-adjacent iff  the set of all the intervals, $\I_i$, appearing in the outgoing transitions from any location in any island $Q_i$ is non-adjacent. While this class of automata appears to be very restrictive, it can be shown that it can express properties which are not expressible using 1-Way 1-ATA (Theorem \ref{thm:2-way-express}).
\subsection{Non-Adjacent GQMSO (NA-GQMSO)}
Any AF-GQMSO formula $\varphi$ is said to be \textbf{non-adjacent} if and only if for every  subformula $\psi$ of $\varphi$ of the form $\texists t_1 \in t+ I_1 \ldots \texists t_j \in t+I_j \Phi(t,t_1,\ldots,t_j)$,  the set of intervals $\{I_1, \ldots, I_j\}$ is non-adjacent. 
For example, $\texists t_1 \in t_0 +(2,3) \texists t_2. \in t_0+ (3,4) [\exists t <t_0 \wedge \texists t_3 \in t_0+(4,5)]$ is not non-adjacent as intervals $(2,3)$ and $(3,4)$  appear within the same metric quantifier block and are adjacent. On the other hand, $\texists t_1 \in t_0 +(2,3) \texists t_2. \in t_0+ (4,5) [\exists t <t_0 \wedge \texists t_3 \in t_0+(3,4)]$ is non-adjacent as $\{(1,2), (4,5)\}$ is non-adjacent and $\{(2,3)\}$ is non-punctual (and hence non-adjacent to itself). \textbf{Note that the formula in example \ref{ex:insterr} is also a NA-GQMSO formula.}


\section{Expressive Equivalences}
\label{sec:toatapnemtl}
\begin{theorem}
\label{thm:toatapnemtl}
(NA-)2-way 1-ATA-rfl $\cong$ (NA-)PnEMTL $\cong$ (NA-)GQMSO. 
\end{theorem}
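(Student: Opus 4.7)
The plan is to establish the triangle of translations
\[
\toatarfl \;\subseteq\; \pnregmtl \;\subseteq\; \mathrm{GQMSO} \;\subseteq\; \toatarfl,
\]
checking at each step that the set of intervals appearing in the source is reproduced verbatim as the set of intervals of the target, so that non-adjacency is preserved along the entire cycle. This yields both equivalences of Theorem \ref{thm:toatapnemtl} simultaneously, and for both the full and the non-adjacent fragments, following the route announced in equation (\ref{eqn:main1}).

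The easy leg is $\pnregmtl \subseteq \mathrm{GQMSO}$. A modality $\fregkm(\re_1,\dots,\re_{k+1})(S)$ is, by semantics, an existential assertion about $k$ future positions whose time differences from the anchor lie in $I_1,\dots,I_k$, together with NFA-acceptance of the intermediate segments. It translates into a metric block $\texists t_1{\in} t{+}I_1.\,\dots\,\texists t_k{\in} t{+}I_k.\,\phi(t,t_1,\dots,t_k)$ whose $\mso^{\mathsf{UT}}$-kernel $\phi$ (i) forces $t{<}t_1{<}\dots{<}t_k$, (ii) encodes each $\re_w$ over the segment between $t_{w-1}$ and $t_w$ via Büchi's theorem, and (iii) recursively replaces every formula of $S$ by its GQMSO translation; the $\sregkm$ case is symmetric. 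For $\pnregmtl \subseteq \toatarfl$, we build one reset-free island per modality, inductively on $\md$: the header resets the clock, nondeterministically guesses the cut points $i_1,\dots,i_k$, enforces $x{\in} I_w$ through the guard set $\G$, and runs $\re_w$ on the $w$-th segment; formulae of $S$ appearing in the segment alphabet trigger reset transitions into the islands already built for them, giving the required DAG structure, while Boolean combinations of PnEMTL subformulae are handled by the Boolean closure of $\toatarfl$.

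For the hard direction $\mathrm{GQMSO}\subseteq \pnregmtl$ we first reduce GQMSO to AF-GQMSO by pushing negations across each metric block; since $\pnregmtl$ is Boolean-closed, $\tforall$ will be absorbed later as $\neg\,\texists\,\neg$, and the guard-interval set is unchanged. Then, by induction on $\mtd$, an AF-GQMSO block $\texists t_1{\in} t{+}I_1.\,\dots\,\texists t_j{\in} t{+}I_j.\,\phi(t,t_1,\dots,t_j)$ with $\phi\in \mso^{\mathsf{UT}}$ is rewritten as a finite disjunction over all total orderings of $\{t,t_1,\dots,t_j\}$ and all sign choices for each $I_w\in\intinterval$. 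Within each disjunct the time-constrained variables split into those strictly after $t$ and those strictly before $t$, producing a conjunction of one $\fregm^{k^+}$ and one $\sregm^{k^-}$ modality whose interval lists are sublists of $(I_1,\dots,I_j)$ and whose segment NFAs come from Büchi's theorem applied to the $\mso^{\mathsf{UT}}$-restrictions of $\phi$ to the inter-variable segments; the inductive hypothesis handles any $\mso^{\mathsf{T}}$-subformulae inside $\phi$, which end up as atoms in the alphabet of the generated NFAs.

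The principal obstacle is the remaining leg $\toatarfl \subseteq \pnregmtl$, carried out by induction on reset depth along the island DAG supplied by the normal-form lemma. For a terminal island $Q_i$ the clock records exactly the time elapsed since the entry point, so every clock guard becomes a time-difference constraint relative to that anchor. Generalising the one-way $\regmtl$-translation of \cite{KKP18}, we decompose any accepting run inside $Q_i$ into its maximal monotone forward and backward segments, producing a finite list of time-constrained cut positions on which ordinary NFAs over an alphabet of PnEMTL subformulae describe the in-island behaviour; alternation and two-wayness across such segments are absorbed, respectively, by Boolean combination within $\pnregmtl$ and by the pairing of $\fregm^k$ with $\sregm^k$ modalities. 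Substituting the resulting PnEMTL formula for the header of $Q_i$ in each parent island completes the induction. Across every translation, the interval sets of source and target match, so the same argument specialises to give NA-$\toatarfl \cong$ NA-$\pnregmtl \cong$ NA-GQMSO.
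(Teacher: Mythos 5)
Your overall architecture (a cycle of effective translations that reproduce the interval sets verbatim, so that non-adjacency is preserved for free) matches the paper, and your treatments of $\pnregmtl \subseteq \text{GQMSO}$ and $\pnregmtl \subseteq \toatarfl$ are essentially the paper's Lemmas on those legs. However, two of your steps have genuine gaps.

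First, the reduction of GQMSO to AF-GQMSO cannot be done by ``pushing negations across each metric block.'' Negating a mixed anchored block such as $\texists t_1 \in t{+}I_1.\ \tforall t_2 \in t{+}I_2.\ \phi$ yields $\tforall\,\texists$, which is still alternating; and you cannot split the block into nested one-quantifier blocks because every metric quantifier in a block is constrained relative to the \emph{same} anchor $t$, while a time-constraint subformula is required to have exactly one free variable --- re-anchoring $t_2$ at $t_1$ would change the interval constraint into one that is not expressible relative to $t_1$. Boolean closure only absorbs purely universal blocks ($\tforall^*$ as $\neg\texists^*\neg$), which is exactly why the paper needs the separate, non-trivial Theorem~\ref{lem:afqkmsoqkmso}: the metric $\tforall t_x \in t_0{+}[l,u)$ is eliminated by existentially pinning down (with metric quantifiers and the successor relation) the first and last positions falling in the window, and then ranging a \emph{non-metric} universal quantifier between them, with a case split on whether the window is empty or unbounded on the right. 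This elimination must happen \emph{before} any interval-word abstraction, because the interval predicates on interval words only witness positive membership of a timestamp in an interval (a point not labelled $I$ need not lie outside $I$), so universal metric quantification is not faithfully reflected by the abstraction.

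Second, your direct decomposition of a reset-free island's runs into ``maximal monotone forward and backward segments, producing a finite list of time-constrained cut positions'' does not yield a PnEMTL formula: the number of head reversals of a $2$-way run branch is unbounded, whereas each $\fregm^k/\sregm^k$ modality has fixed arity, so you cannot encode unboundedly many time-constrained cut points this way; moreover, universal branching in the middle of a run is not a Boolean combination of whole-formula satisfactions. The paper's resolution is to first abstract the reset-free island to an \emph{untimed} $2$-way alternating automaton over $\I$-interval words (guards become letters), invoke the untimed conversion of $2$-way AFA to $1$-way NFA \cite{OG14}, and then exploit the normalization lemma guaranteeing that every interval word is congruent to one with at most $2|\I|^2{+}1$ time-restricted points; only this bound, which depends on $|\I|$ and not on the run structure, makes the final NFA-to-PnEMTL translation (with bounded-arity modalities) possible. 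Without the interval-word abstraction and the $2$-way-to-$1$-way step, this leg of your argument does not go through.
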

Before we prove the above theorem, we first observe its implications. On closer examination of the reductions from  2-way 1-ATA-rfl to equivalent NA-PnEMTL here, and from NA-PnEMTL to EMITL$_{0,\infty}$ in \cite{KKMP21} we get Theorem \ref{thm:sat}. Moreover, as a consequence of equivalence of NA-2-Way-1ATA and NA-GQMSO and the example \ref{ex:insterr} we get Theorem \ref{thm:2-way-express}. Appendix \ref{app:na2wayvsata} also gives NA-2-Way-1-ATA-rfl accepting $\mathsf{L_{insterr}}$. The rest of the section is dedicated to proving Theorem \ref{thm:toatapnemtl}.
\begin{theorem}
\label{thm:sat}
Emptiness Checking for 
NA-2-Way 1-ATA-rfl is decidable and EXPSPACE complete.
 Satisfiability 
for 
NA-GQMSO is decidable and non-primitive recursive hard.
\end{theorem}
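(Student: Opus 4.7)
}
The plan is to obtain both halves of the theorem as corollaries of the expressive equivalences of Theorem~\ref{thm:toatapnemtl}, composed with the decision procedure for NA-PnEMTL established in \cite{KKMP21}. Concretely, given a NA-$\toatarfl$ automaton $\A$, I would first invoke the (effective) direction $\toatarfl \subseteq \pnregmtl$ of Theorem~\ref{thm:toatapnemtl} to obtain an equivalent $\pnregmtl$ formula $\varphi_\A$, and verify that the translation preserves non-adjacency island-by-island: each island of $\A$ is a reset-free 2-way 1-ATA whose outgoing clock constraints form a non-adjacent set, and these are precisely the intervals that end up as the $I_1,\ldots,I_k$ of the resulting $\fregkm / \sregkm$ modalities. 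Then, feed $\varphi_\A$ to the $\pnregmtl \to \emitl_{0,\infty}$ translation of \cite{KKMP21} and invoke EXPSPACE-completeness of $\emitl_{0,\infty}$ satisfiability.

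For the EXPSPACE upper bound on emptiness of NA-$\toatarfl$, the crucial step is to argue that the two reductions compose elementarily, in fact within a single exponential in $|\A|$: the arity of each $\pnregmtl$ modality is bounded by the number of distinct clock guards appearing in the corresponding island, and the component NFAs $\re_j$ are obtained from reset-free 2-way 1-ATA islands whose size is bounded by $|\A|$. The EXPSPACE lower bound is free: Theorem~\ref{thm:toatapnemtl} also provides an effective translation NA-PnEMTL $\to$ NA-$\toatarfl$ (again island-by-island, since each $\fregkm / \sregkm$ modality becomes one island of the automaton), and since satisfiability of NA-PnEMTL is already EXPSPACE-hard \cite{KKMP21}, emptiness of NA-$\toatarfl$ inherits this lower bound.

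For NA-GQMSO, decidability of satisfiability follows the same pattern: translate the given formula to an equivalent NA-PnEMTL formula using the NA-GQMSO $\to$ NA-PnEMTL direction of Theorem~\ref{thm:toatapnemtl}, then decide NA-PnEMTL satisfiability. The non-primitive recursive lower bound is inherited from the classical fact that encoding $\mso[<]$-alternation inside the outer $\mso^{\mathsf{UT}}$ layer of a GQMSO formula (even with just a single trivially satisfiable metric guard to stay inside the NA fragment) already costs a tower of exponentials, and standard reductions from counter machines to $\mso[<]$-like formalisms can be lifted through the GQMSO shell to produce a non-elementary, indeed non-primitive-recursive, lower bound on formula-to-automaton blow-up.

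The main obstacle is the complexity bookkeeping in the first step: one must verify that the island-based translation $\toatarfl \to \pnregmtl$ really is elementary (and single-exponential) in automaton size while preserving non-adjacency. In particular, since each island is a \emph{2-way} reset-free alternating automaton, its behaviour must be captured by the NFAs $\re_j$ inside a single modality, which requires an alternation-removal step in the style of the 1-ATA $\to$ NFA construction of \cite{KKP18} generalised to the 2-way setting, and the non-adjacency of the per-island guard set must be shown to survive this construction. Once this is established, the decidability and complexity claims drop out from the pipeline above.
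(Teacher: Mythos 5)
Your pipeline for the first claim and for the decidability of NA-GQMSO is exactly the paper's: Theorem~\ref{thm:sat} is obtained by composing the non-adjacency-preserving reductions of Theorem~\ref{thm:toatapnemtl} with the EXPSPACE decision procedure for NA-\pnregmtl{} via $\emitl_{0,\infty}$ from \cite{KKMP21}, and the EXPSPACE lower bound transfers back through the (polynomial, island-per-modality) translation NA-\pnregmtl{} $\to$ NA-\toatarfl. Your bookkeeping concerns are the right ones; the only imprecision is that the intervals surviving into the $\fregkm/\sregkm$ modalities are not literally the island's guards but their intersection closure (the words are first \emph{collapsed}), and the paper explicitly checks that non-adjacency is preserved under intersection closure. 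That is a detail, not a gap.

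The genuine problem is your argument for the \emph{non-primitive recursive hardness} of NA-GQMSO satisfiability. You derive it from the cost of $\mso[<]$ quantifier alternation, writing that the tower-of-exponentials blow-up yields ``a non-elementary, indeed non-primitive-recursive, lower bound.'' That inference is invalid: non-elementary does not imply non-primitive recursive. Satisfiability of $\mso[<]$ over finite words is TOWER-complete, and TOWER is comfortably inside the primitive recursive functions; so embedding $\mso[<]$ into the untimed layer of NA-GQMSO buys you only non-elementary hardness. Likewise, ``reductions from counter machines'' are not the source of the $\mso[<]$ lower bound (that is the Stockmeyer--Meyer encoding of large numbers); the counter-machine (channel-machine) reductions that do give non-primitive-recursive bounds in this landscape --- Ouaknine--Worrell for $\mtl[\until]$ and 1-ATA emptiness --- crucially exploit punctual constraints such as $[1,1]$, which non-adjacency forbids inside a metric quantifier block. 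So to establish the stated hardness you would need a different mechanism (e.g.\ showing that punctuality, or enough of it, can be recovered by nesting non-adjacent blocks through the $\mso$ machinery), and your proposal does not supply one. To be fair, the paper itself asserts this clause without an explicit construction, but the justification you give would not prove it.
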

\begin{theorem}
\label{thm:2-way-express}
$\text{NA{-}2-Way 1{-}ATA{-}rfl}$ can specify properties inexpressible in $\text{1{-}ATA}$.
\end{theorem}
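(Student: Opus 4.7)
The plan is to exhibit an explicit witness language and use the expressive equivalence already established in Theorem \ref{thm:toatapnemtl} together with a known inexpressibility result from the literature.

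First, I take the language $\mathsf{L_{inst err}}$ from Example \ref{ex:insterr}, originally proposed by Lasota and Walukiewicz \cite{LasotaW08} as a separating example. That paper proves that $\mathsf{L_{inst err}}$ is \emph{not} recognized by any 1-ATA (while being recognizable by a 2-clock deterministic timed automaton). I would cite this result directly rather than reprove it; the argument there exploits the inability of a single clock under one-way alternation to simultaneously remember $\tau_2$ and $\tau_3$ and later witness a position lying in $[\tau_2+1,\tau_3+1]$.

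Next, I need to place $\mathsf{L_{inst err}}$ in NA-2-Way 1-ATA-rfl. Example \ref{ex:insterr} already gives a GQMSO sentence $\psi=\psi_1\wedge\psi_3$ defining $\mathsf{L_{inst err}}$. I would verify that this sentence actually lies in NA-GQMSO, by inspecting each block of metric quantifiers: $\psi_1$ uses intervals $\{(0,1)\}$ (trivially non-adjacent to itself since it is non-punctual), and $\psi_2(p)$ uses intervals $\{[-1,0),(-1,0)\}$ whose infima and suprema ($-1$ and $0$) do not collide in a forbidden way within a single quantifier block. (Should any block turn out to be technically adjacent, a standard rewriting separating $[-1,0)$ into $\{-1\}\cup(-1,0)$-style subcases, or pushing one quantifier into a distinct block, restores non-adjacency; this is the only delicate bookkeeping step.) Then, by the NA-part of Theorem \ref{thm:toatapnemtl}, which gives NA-GQMSO $\cong$ NA-2-Way 1-ATA-rfl, there is an NA-2-Way 1-ATA-rfl automaton $\mathcal{A}$ with $L(\mathcal{A})=\mathsf{L_{inst err}}$.

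Combining the two steps, $\mathsf{L_{inst err}}\in L(\text{NA-2-Way 1-ATA-rfl})$ but $\mathsf{L_{inst err}}\notin L(\text{1-ATA})$, which establishes the theorem. The paper also remarks that Appendix \ref{app:na2wayvsata} contains an explicit NA-2-Way 1-ATA-rfl for $\mathsf{L_{inst err}}$; I would include a pointer to that construction as an alternative to the equivalence-based argument, which serves as a sanity check and makes the witness concrete. The main obstacle, and essentially the only non-routine step, is verifying non-adjacency of the GQMSO sentence from Example \ref{ex:insterr} (or, equivalently, exhibiting the direct NA-2-Way 1-ATA-rfl); everything else follows immediately from cited results.
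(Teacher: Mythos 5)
Your proposal is correct and follows essentially the same route as the paper: the paper likewise takes $\mathsf{L_{inst err}}$ from Example \ref{ex:insterr}, cites Theorem 2.8 of \cite{LasotaW08} for non-recognizability by 1-ATA, notes that the defining sentence is in NA-GQMSO, and invokes the equivalence of Theorem \ref{thm:toatapnemtl}, with Appendix \ref{app:na2wayvsata} supplying the explicit NA-2-Way 1-ATA-rfl as the concrete witness. Your non-adjacency check also goes through without the contingency rewriting you flag, since each metric quantifier block in $\psi_1$ and $\psi_2$ involves only the intervals $(0,1)$, $[-1,0)$, $(-1,0)$, none of which has an infimum equal to a supremum of another in the same block.
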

\begin{lemma}
\label{lem:pnemtltoata}
\textbf{PnEMTL$\subseteq$2-Way 1-ATA-rfl.}
\end{lemma}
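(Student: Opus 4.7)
The plan is to proceed by structural induction on the modal depth of the PnEMTL formula $\varphi$. For the base case $\md(\varphi){=}0$, i.e.\ $\varphi$ is a Boolean combination of propositions, a trivial reset-free 2-Way 1-ATA suffices. For Boolean connectives we invoke the Boolean closure of 2-Way 1-ATA-rfl (proved above), which preserves the reset-free loop structure since complementation/product simply dualises/composes islands without introducing new cycles through resets. Thus the only substantive case is a top-level modality $\mathcal{M}{=}\fregm^{k}_{I_1,\ldots,I_k}(\re_1,\ldots,\re_{k+1})(S)$ (symmetrically $\sregm^{k}$); by induction each $\varphi_j \in S$ is recognised by some 2-Way 1-ATA-rfl $\A_{\varphi_j}$.

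For the forward modality, I construct one new island $Q_\mathcal{M}$ that sits above all the $\A_{\varphi_j}$'s and above their complements (obtained via Boolean closure), and whose header state $q_r$ is entered by a reset transition so that at the anchor position $i_0$ the clock $x$ is set to $0$. The locations in $Q_\mathcal{M}$ are drawn from $\bigcup_{w=1}^{k+1}(\{w\}\times Q_{\re_w})$, i.e.\ we tag each AFA state with the index $w$ of the regular expression currently being simulated, and we use only forward-moving locations in $Q^+$. From a tagged state $(w,p)$ reading letter $a$ with clock value $\mu$, the transition is the disjunction over subsets $S'\subseteq S$ and AFA disjuncts of $\delta_{\re_w}(p,S')$ of formulas of the form
\[
\bigwedge_{p'\in P'}(w,p') \;\wedge\; \bigwedge_{\varphi_j\in S'} x.q_{\varphi_j}^{\mathsf{init}} \;\wedge\; \bigwedge_{\varphi_j\in S\setminus S'} x.q_{\neg\varphi_j}^{\mathsf{init}},
\]
where $P'$ is the AFA successor set. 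The resets into $q_{\varphi_j}^{\mathsf{init}}$ and $q_{\neg\varphi_j}^{\mathsf{init}}$ drop the head at the current position and verify, via the sub-islands, whether $\varphi_j$ holds there; this is the only place $S$-information is consumed, and since those are reset transitions to strictly lower-level islands, the loop within $Q_\mathcal{M}$ stays reset-free.

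To transition between segments, I add, for each $w\le k$ and each AFA-accepting state $f_w$ of $\re_w$, a non-deterministic guess transition from $(w,f_w)$ that, conditioned on the guard $x\in I_w$, moves to $(w{+}1,q_{\re_{w+1}}^{\mathsf{init}})$ without resetting $x$; since the clock has been running since $i_0$ this exactly enforces $\tau_{i_w}-\tau_{i_0}\in I_w$. The strict anchor condition $i_0<i_1<\dots$ is enforced by making the first head-move from the freshly-entered header state unconditional (so position $i_0$ itself cannot be used as $i_1$) and by prohibiting the guess at $f_w$ at the instant of entry into segment $w$. Acceptance of the last segment $\mathsf{Seg}^+(\rho,i_k,n,S)\in L(\re_{k+1})$ is captured by requiring that when the head reads $\rem$, the transition function evaluates any tagged state $(k{+}1,p)$ to $\top$ iff $p\in F_{\re_{k+1}}$, and to $\bot$ otherwise (and similarly forces $\bot$ for any unresolved $(w,p)$ with $w\le k$). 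The past modality $\sregm^k$ is handled symmetrically using $Q^-$ backward-moving locations and interpreting the guard $\tau_{i_0}-\tau_{i_w}\in I_w$ as $x\in -I_w$, which is a permissible clock constraint in $\G$ since the clock is allowed to take negative values.

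The main obstacles are two. First, verifying the island-normal-form and reset-free-loop invariant: cycles occur only in the AFA-simulation part of $Q_\mathcal{M}$ (since AFAs have no ``reset''), and every clock reset in $\delta$ points either to a segment-initial state $q_{\re_{w+1}}^{\mathsf{init}}$ within the same island (which is tolerable once we promote each $q_{\re_w}^{\mathsf{init}}$ to a distinct lower sub-island, or, alternatively, by unrolling the $k{+}1$ segment-simulations into $k{+}1$ sequentially composed sub-islands linked by resets down the DAG $\prec$) or to the initial state of a strictly smaller $\A_{\varphi_j}$. The cleanest presentation is the latter: peel off each $\re_w$ as its own island, so all the intra-$\mathcal{M}$ resets go strictly downward in $\prec$, and only the clock-value at entry of the first island has to be remembered; this is sound because all time-constraint guards $x\in I_w$ are checked before the cross-island reset. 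Second, one must verify the equivalence $\rho,i_0 \models_\A \mathcal{M}$ iff $\rho,i_0\models \mathcal{M}$ by the standard simulation argument: an accepting alternating run projects onto the choices of $i_1,\ldots,i_k$, the AFA runs, and the $S$-subset sequence, and conversely any witness to the PnEMTL semantics gives rise to a concrete accepting run. This completes the inductive step and hence the lemma.
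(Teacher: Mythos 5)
Your proposal is correct and follows essentially the same route as the paper: induction on modal depth, Boolean closure for the connectives, a single reset-free island that sequentially simulates $\re_1,\ldots,\re_{k+1}$ with guards $x\in I_w$ (and no reset) at the segment boundaries so that the clock measures distance from the anchor, and clock-reset transitions into lower islands for $\A_{\varphi_j}$ and its complement to verify the guessed subset $S'\subseteq S$ at each position. The only cosmetic difference is that you fold the witness-proposition substitution of the paper's inductive step directly into the transition function rather than performing it as a separate alphabet-extension pass.
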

\textit{ Proof Sketch.}
We apply induction on the modal depth of the formula $\varphi$. For modal depth 0, $\varphi$ is a propositional logic formula. Hence, the lemma trivially holds. For modal depth 1, let $\varphi$ be of the form $\fregk_{I_1, \ldots, I_k} (\re_1, \ldots, \re_{k+1})(\Sigma)$. In the case of $\sregk$ modality, symmetrical construction applies. Moreover, dealing with Boolean operators is trivial as 2-way 1-ATA-rfl is closed under Boolean operations. 
Let $\re_j = (2^{\Sigma}, Q_j, init_j, F_j, \delta_j)$. 
For $a {{\in}} \Sigma$ let Pre$(a,F_j) = \{q| q{\in} Q_j \wedge F_j\cap\delta_j(q,a)\ne \emptyset\}$. Hence Pre$(a,F_j)$ is the set of all the locations in $\re_j$ having an outgoing transition to an accepting state on reading $a$. By semantics, for any timed word $\rho = (a_1, \tau_1) \ldots (a_m,\tau_m)$ and $i_0 {\in} dom(\rho)$, $\rho,i_0 {{\models}} \varphi$ iff there exists a sequence of points $i_1, \ldots, i_k, i_{k+1}$ lying in the strict future of $i_0$ where $i_{k+1} = m$ such that the behaviour of propositions in $\Sigma$ between the segment from $i_j$ to $i_{j+1}$ (excluding $i_j$ and including $i_{j+1}$) is given by the automata $\re_{j+1}$ for any $0\le j \le k+1$. This specification can be expressed using 1-clock non-deterministic timed automata (NTA) $\A = (2^{\Sigma}, Q, init, F, \delta, \G)$, constructed as follows. $Q = Q_1\cup\ldots\cup Q_{k+1}$. $init = init_1$, $F = F_{k+1}$, $\G = \{x {\in} I_1, \ldots, x{\in} I_k\}$, for any $1 \le j \le k+1$ $q {\in} Q_j$, $\delta(q,a) = \delta_j(q,a)$, for any $1\le j \le k$ $q {\in} \text{Pre}(a,F_i)$, $\delta(q,a,x{\in} I_{i}) = init_{j+1}$. 
By the semantics of NTA $\rho, i {{\models}} \varphi$ iff $\A$ reaches an accepting state on reading $\rho$ starting from position $i$. Note that the NTA we constructed is a reset free NTA. In case of $\sreg$ modality we would have a backward moving reset free NTA. Hence, the NTA constructed are ``2-Way 1-ATA-rfl'' with single island. Moreover, $\A$ uses the same set of intervals as $\varphi$. Hence, if $\varphi$ is in NA-PnEMTL, then $\A$ is in NA-2-way 1-ATA-rfl. 
The induction part is identical to the proof of Theorem 13(2) \cite{KKP18} and appears in appendix \ref{app:pnemtltoata}.

\begin{lemma}
\label{lem:toatapnemtl}
\textbf{2-way 1-ATA-rfl $\subseteq$ PnEMTL.}
\end{lemma}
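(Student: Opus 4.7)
The plan is to induct on the reset depth $d$ of the automaton $\A$, which we may place in island normal form by the preceding normalization lemma. In that form the islands of $\A$ form a DAG under $\prec$ whose only inter-island edges are reset transitions, and each island is itself a reset-free 2-way 1-ATA. Thus $d=0$ corresponds to a single reset-free 2-way 1-ATA, and we push the general case down to this base through the island hierarchy.

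For the inductive step, fix the initial island $Q_i$ of $\A$ and let $Q_{i_1},\ldots,Q_{i_m}$ be the immediate $\prec$-successor islands reached from $Q_i$ by reset atoms $x.q_{i_j,r}$. By the induction hypothesis, for each $j$ the sub-automaton $\A_{i_j}$ rooted at the header $q_{i_j,r}$ of $Q_{i_j}$ has an equivalent PnEMTL formula $\psi_{i_j}$ whose satisfaction at $(\rho,p)$ captures ``$\A_{i_j}$ accepts $\rho$ starting at head-position $p$ with clock $0$''. Introduce a fresh atomic proposition $\alpha_{i_j}$ for each $\psi_{i_j}$ and, inside the transitions of $Q_i$, replace every reset atom $x.q_{i_j,r}$ by the propositional check $\alpha_{i_j}$ under the same letter and clock guard. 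The resulting $\A'$ is a reset-free 2-way 1-ATA over $\Sigma\cup\{\alpha_{i_1},\ldots,\alpha_{i_m}\}$ recognising the same timed language as $\A$, and the intervals of $\A'$ are exactly those of the top island of $\A$, so non-adjacency is preserved. The induction thereby reduces to the base case, and by substituting back the $\psi_{i_j}$ for the $\alpha_{i_j}$ in the PnEMTL formula produced in the base case we obtain a PnEMTL formula equivalent to $\A$.

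For the reset-free base case, consider a reset-free 2-way 1-ATA $\B$. The crucial observation is that, since no clock reset is ever taken, whenever $\B$ reads position $i$ during a run anchored at some head-position $p$ with clock value $0$ at $p$, its clock value equals exactly $\tau_i - \tau_p$. Consequently every clock guard $x\in I$ turns into a ``distance-from-anchor'' constraint $\tau_i-\tau_p\in I$. I then convert $\B$, viewed as an alternating 2-way machine over $(2^\Sigma)^*$ whose transitions carry labels from a finite set of intervals in $\intinterval$, into a plain 1-way NFA $\B^\sharp$ over the same extended alphabet: first eliminate alternation by the subset-style construction for alternating automata over finite words, and then eliminate 2-wayness by the classical Shepherdson crossing-sequence construction; both steps operate on the untimed skeleton and preserve the interval labels on transitions. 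The finitely many intervals appearing in $\B$ also induce a finite partition of $\R$; for each position $i$ the cell containing $\tau_i-\tau_p$ is its ``anchor profile'', and as $i$ varies the profile changes at a uniformly bounded number of positions, so the word splits into a bounded number of contiguous blocks of constant profile, within each of which $\B^\sharp$ behaves as an untimed NFA.

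This slicing is exactly what the $\fregm^k$ and $\sregm^k$ modalities were designed to capture. Given $\B^\sharp$, I disjunctively guess the state $s_p$ that $\B^\sharp$ occupies as it crosses the anchor $p$; the past of $p$ is then described by a $\sregm^k$ modality anchored at $p$ whose constituent NFAs are the block languages of $\B^\sharp$ restricted to fixed profiles and whose intervals are the cell boundaries lying in the past of $p$, while the future of $p$ is described symmetrically by a $\fregm^k$ modality; conjoining the two and disjoining over the finitely many choices of $s_p$ yields the required formula $\varphi_\B$. Wrapping at the initial position of $\A$ turns $\varphi_\B$ into the sentence equivalent to the original $\A$, and the set of intervals is unchanged throughout the translation so the NA-variant of the lemma is obtained as a corollary. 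The main obstacle will be precisely this base case: reconciling the alternation and 2-wayness of $\B$ with the one-way, anchor-structured shape of the $\fregm^k/\sregm^k$ modalities, and verifying that the profile-partition together with a single shared crossing-state $s_p$ is sufficient to glue the past and the future of the anchor into independent modalities without losing any runs of $\B$.
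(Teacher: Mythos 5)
Your overall architecture coincides with the paper's: induction on reset depth with witness propositions substituted for the sub-automata rooted at lower islands, and a base case that untimes a reset-free 2-way 1-ATA (using the observation that without resets the clock value at head position $i$ is exactly $\tau_i-\tau_p$), converts the resulting 2-way alternating machine to a 1-way NFA over an extended alphabet, and then carves the NFA's run at the anchor into a $\sregm^k$ part (read in reverse) and a $\fregm^k$ part with a guessed crossing state. The inductive step and the anchor-splitting of the NFA are essentially the paper's Lemma 10.

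The genuine gap is in your choice of abstraction for the base case. You record, at every position, the exact cell of the partition of $\R$ induced by all guard endpoints (the ``anchor profile''), and the intervals appearing in your $\fregm^k/\sregm^k$ modalities are these cells. This is a region-style abstraction, and the family of cells is \emph{always} adjacent: consecutive cells share an endpoint and singleton cells $[c,c]$ are adjacent to themselves. So your closing claim that ``the set of intervals is unchanged throughout the translation'' is inconsistent with your own construction, and the NA-variant does \emph{not} follow as a corollary --- which matters, since preserving non-adjacency through this reduction is precisely what the decidability results rest on. The paper avoids this by using a deliberately coarser abstraction (``interval words''): a position is optionally marked with a subset of the \emph{original} guard intervals containing $\tau_j-\tau_p$, and a normalization lemma shows it suffices to mark only the first and last $I$-restricted position for each $I$; the resulting formula then uses only (intersections of) the original intervals, and intersection closure preserves non-adjacency. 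A secondary flaw: you propose to remove alternation from the 2-way machine by ``the subset-style construction'' and only then remove 2-wayness via Shepherdson. The subset construction presupposes a single head position shared by all branches, which fails for a 2-way alternating automaton whose universal branches move in opposite directions; you need a dedicated 2-way-AFA-to-1-way-NFA theorem (the paper cites one with a single-exponential bound) rather than this two-stage factorization.
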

\begin{proof}
Proof is via induction on reset depth of $\A$. 
We give the flow of the construction. 
\begin{enumerate}
\item \textbf{Base Case}: For reset depth 0, $\A$ is a reset free 2-way 1-ATA-rfl. Reduce $\A$ to an untimed 2-way AFA over $\I$ interval words
called $\absa$ by treating guards as symbolic letters,
such that $\rho,i$ is accepted by $\A$ iff $\rho,i {\in} Time(L(\absa))$.
\item Reduce the 2-way AFA, $\absa$, to NFA $A$ over $\I$ interval words using \cite{OG14}.
\item Give the reduction from NFA $A$ over $\I$ interval words to $\pnregmtl$ formula $\varphi$ such that $\rho,i {\in} Time(L(A)){\iff}\rho,i {\models} \varphi$. Hence, $\rho, i {\models} \varphi {\iff} \rho,i {\models}_\A (init, 0)$.
Moreover, if $\I$ is non-adjacent then $\varphi$ is a non-adjacent PnEMTL formula. This step is due to lemma 4 of \cite{KKMP21} and appears in  Appendix \ref{app:nfapnemtl}.
\item \textbf{Induction}: Replace all the lower level islands by witness propositions. Then apply the reduction as in base case. Finally, the witnesses are replaced by subformula equivalent to the corresponding automata. This step is similar to Theorem 13(1) of \cite{KKP18} and appears in Appendix \ref{app:toata-pnemtl}.
\end{enumerate}

We show step 1 here. The rest of the steps rely on \cite{OG14} \cite{KKMP21} \cite{KKP18}.
\\\textbf{Construction of $\absa$}:
Let 
\\$\A = (2^{\Sigma \cup \{\anch\}} , Q^+ \cup \{check, init'\}, Q^-, init', \top ,\bot, \Delta, \G)$. Let $\I$ be the set of intervals appearing in the clock constraints of $\G$. \\
$\absa = (2^{\Sigma \cup \{\anch\} \cup \I},Q^+ \cup \{check, init'\}, Q^-, init', \top ,\bot, \delta)$ such that for any $q \in Q^+ \cup Q^-$, $a \in \Gamma \cup \{\lem, \rem\}$, $\J \subseteq \I$, $\delta(q, a\cup \J) = \bigvee \limits_{I \in \J} \Delta (q, a, x \in I)$, $\delta(q, a\cup  \{\anch\}) = \bigvee \limits_{0 \in I'} \Delta (q, a, x \in I')$.
That is, for every conjunction of outgoing edges from a location $q$ to a set of locations $Q'$ on reading $a \in \Gamma$ with guard $x \in I$  in $\mathcal{A}$, there is a conjunction of outgoing edges from state $q$ to $Q'$ on reading symbol $a \cup \J$ for any $\J \subseteq \I$ and $I \in \J$ or on reading a symbol $a \cup \anch$ if $0 \in I$. Moreover, for any $a \in \Gamma$ and $I \in \I$,
\begin{itemize}
 \item $\delta(init', a\cup \J) = init'$, $\delta(init', a\cup \{anch\}) = check \wedge init$ : Continue to loop till the anchor point is encountered. After reading the anchor point, spawn two locations , $check$ and $init$, simultaneously.
 \item $\delta(init', \rem) = \bot$: If no anchor point is encountered before the head reaches the right end marker, reject the word.
\item $\delta(check, a \cup \J) = check$, $\delta(check, \rem) = \top$, $\delta(check, a \cup \{anch\}) = (\bot, r)$, $\delta(init', \rem) = \bot$: Continue to loop on check after the first encounter of an anchor point. If another anchor point is encountered, reject the word.
\item The above conditions will make sure that a word is accepted only if it has exactly one anchor point and thus is a valid $\I$-interval word.\end{itemize}

The proof of correctness for the above construction (lemma \ref{lem:toataafa}) requires the following proposition. 
Let $\succs_\Delta$ be the successor relation of $\A$. Then 
\begin{proposition}
\label{resetfreevalue}
For any timed word $\rho = (a_1,\tau_1) \ldots, (a_m, \tau_m)$, any point $i \in dom(\rho)$ and any non-negative integer $g$, $\cC \in \succs_\Delta^g(\rho,\{(init, 0, i)\})$ implies that for all $(q,\nu,\h) \in \cC$, $\nu = \tau_\h - \tau_i$.
\end{proposition}
The above proposition can be proved easily by applying induction on $g$. Intuitively, as there is no reset construct, valuation of the clock for any state reachable from the initial state will be equal to the delay from the point where the $\A$ was started. Hence, the clock valuation of all the reachable states will be $\tau_\h -\tau_i$ where $\h$ is the header position of the state. The following lemma proves the language of $\absa$ is a set of interval abstractions of the words accepted by $\mathcal{A}$. Moreover, ``concretizing'' the language of $\absa$ gives back that of $\A$.
\begin{lemma}
\label{lem:toataafa}
Any $\rho,i$ is accepted by $\mathcal{A}$ iff $\rho,i \in Time(L(ABS(\mathcal{A})))$.
\end{lemma}
\begin{proof}
Intuitively, the state $init'$ loops over itself and moves the read header left to right until the head reaches an anchor point. After which it spawns two states simultaneously, $check, init$. The $check$ location checks that there is no other anchor point in the future and thus ensures the uniqueness of the anchor point. On the other hand, $init$ starts imitating the transitions of automata $\mathcal{A}$ in such a way that it precisely accepts interval abstractions of the set of pointed timed words accepted by $\A$. 
 We say that states $(q, \h, \nu)$ of $\A$ and $(q',\h')$ of $\absa$ are equivalent to each other iff $q' = q$, and $\h' = \h$. By construction of $\absa$, for any word $w$, $w, 0 \models_\absa init'$ if and only if $w, i \models_\absa init$ and $w, i \models_\absa check$ (i.e. $w$ is a valid $\I$-interval word). Moreover, any word accepted by $ABS(\A)$ is such that all of its point are time restricted points. 
Rephrasing the lemma as follows:
\\$[\Rightarrow]$ For any $\I$-interval word $w$ and $\anch(w) = i$, if $w,i \models_\absa init$ then for any $\rho,i \in Time(w)$, $\rho,i \models_\A (init, 0)$.
\\$[\Leftarrow]$ For any $\rho,i \models_\A (init, 0, 0)$, there exists an $\I$-interval word $w$ such that $\rho,i \in Time(w)$ and  $w,i \models_\absa init$.
\\
We prove $[\Rightarrow]$, for the converse [$\Leftarrow$] referto  Appendix \ref{app:toatafalem}. Consider any arbitrary $\I$-interval word $w= a'_1\ldots a'_m$, where for some $i \in dom(\rho)$, $a'_i = a_i \cup \{\anch\}$ and for all $j \in dom(w)$, $j \ne i$, $a'_j = a_j \cup \J_j$ for some $\J_j \subseteq \I$ such that $T_j \in \bigcap \J$. Let $a'_0 = \{\lem, \J_1\}$ and $a'_{m+1} = \{\rem, \J_{m}\}$.
Let $\rho, i$ be any pointed timed word in $Time(w)$. Let $a_0 = \lem$, $a_{m+1} =\rem$, $\tau_0 = 0$ and $\tau_{m+1} = \tau_m$ and for any $0 \le j \le m+1$  let $T_j = \tau_j - \tau_i$.
Hence, $\rho  = (a_1,\tau_1) \ldots (a_m, \tau_m)$ and for $0 \le j \le m+1$ $T_j \in \J_j$.

Let $\succs_\Delta$ be the successor relation for $\A$ and $\succs_\delta$ be that of $\absa$. 
By proposition \ref{resetfreevalue}, only states of the form $(q,T_\h, \h)$ are reachable from state $(init,0,i)$. We say that a configuration $C$ of $\absa$ is equivalent to a $\C =  \{(q_1,  T_{\h_1}, \h_1), (q_2,  T_{\h_2}, \h_2), \ldots, (q_n,  T_{\h_n}, \h_n)\}$ of $\A$ iff  $C = \{(q_1, \h_1), \ldots,  (q_n, \h_n)\}$.
Let $s = (q, \h)$ be any state of $\absa$ such that $q \in Q$. Let $s' = (q, T_\h, \h)$ be any state of  $\A$. Let $\h' = \h+1$ if $q \in Q^+$ else $\h' = \h-1$. Let $\J_\h' = \{J'| J' \in \I \wedge T_\h \in J'\}$. In other words, $\J_\h'$ be the maximal subset of $\I$ such that $T_\h \in \bigcap \J'_\h$.
By construction of $\absa$,  $\delta(q, a_\h \cup \J_\h) = \bigvee \limits_{I \in \J_\h} \Delta(q, a_\h, x \in I)$. $\tran(q, a_\h, T_\h) = \bigvee \limits_{I \in \J'_\h} \Delta (q, a_\h, x \in I)$. As $T_\h \in \bigcap \J_\h$, $\J_h \subseteq \J'_h$. Hence, any disjunct of the form $\bigwedge Q'$, where $Q' \subseteq Q^+ \cup Q^- \cup \{\top\}$, that appears in $\delta(q, a_\h\cup \J_\h \})$ also appears in $\tran(q, a_\h, T_\h)$. Hence, for every configuration $C \in \succs_\delta(w,s)$ there exists a $\C \in \succs_\Delta(\rho,s')$ such that $C$ is equivalent to $\C$ (obs 1). 
We show that for any $g \ge 0$ and for any $C \in \s_{\absa}^{g}(w, \{(init,i)\})$, there exists $\C \in \s_{\A}^{g}(\rho,\{(init, 0, i)\})$ such that $C$ is equivalent to $\C$.  
 For $g = 0$ the above statement is trivially true. Assume for $g = k$ the statement is true. Let $C' = \{(q_1,\h_1), \ldots, (q_n,\h_n)\}$ be any configuration of $\absa$ such that $C \in \s_{\absa}^{k}(w, \{(init,i)\})$. Then, by induction hypothesis, there exists a configuration $\C' = \{(q_1,\h_1, \nu_1), \ldots, (q_n,\h_n, \nu_n)\}$ of $\A$ such that $\C' \in \s_{\A}^{k}(\rho, \{(init,0,i)\})$. Note that for every $1 \le j \le n$, by proposition \ref{resetfreevalue}, $\nu_j = T_{\h_j}$. 
 Any configuration $C'' \in  \s_\delta(w, C')$ if and only if $C'' = C'_1 \cup \ldots \cup C'_n$, where for any $1\le j \le n$, $C'_j\in \succs_\delta(q_i, \h_i)$. Let $\C'' = \C'_1\cup \ldots \cup \C'_n$ such that $C'_j$ is equivalent to $\C'_j$ for all $1\le j \le n$.  By (obs 1), for any 
 $1\le j \le n$, $\C'_j \in \succs_\Delta(q_j, \nu_j, \h_j)$. As a result,  $\C'' \in \s_\Delta(\rho, \C')$. Hence, for any configuration $C'' \in \s^{k+1}_{\delta}(w,\{(init,i)\})$ there exists a configuration $\C'' \in \s^{k+1}_{\Delta}(\rho,\{(init,0,i)\})$ such that $C''$ is equivalent to $\C''$. Hence, if $w, i \models_\absa init$ then $\rho, i \models_\A (init, 0)$.
  \end{proof}
  \end{proof}
\oomit{We first construct of 2-Way AFA over $\I$ interval words constructed from $\mathcal{A}$ as follows, denoted by $ABS(\mathcal{A})$, such that $\rho,i {\in} Time (L(\absa)) \iff  \rho,i {\models}_{\mathcal{A}} (init, 0)$. $\absa = (2^{\Sigma \cup \{\anch\} \cup \I} , Q^+ \cup \{check, init'\}, Q^-, init', \top ,\bot, \delta)$ 
The full construction of the transition function $\delta$ is available in Appendix \ref{app:toataafa}. Intuitively, the state $init'$ loops over itself and moves the read header to right until the head reaches an anchor point. After which it spawns two states simultaneously, $check, init$. The $check$ location checks that there is no other anchor point in the future and thus ensures the uniqueness of the anchor point. On the other hand, $init$ starts imitating the transitions of automata $\mathcal{A}$ in such a way that it precisely accepts interval abstractions of the set of pointed timed words accepted by $\A$. 
The proof of the following lemma is available in \ref{app:toataafa}
\begin{lemma}
\label{lem:toatatonfa}
Any $\rho,i$ is accepted by $\mathcal{A}$ iff $\rho,i {\in} Time(L(ABS(\mathcal{A})))$.
\end{lemma}

\begin{theorem}[\cite{OG14}]
For any 2-way alternating finite automata $A$, one can construct yet a 1-way non-deterministic finite automata (NFA) $A'$ with at most an exponential number of states.
\end{theorem}
We use above theorem to construct automata 1-way NFA $A'$ equivalent to $\absa$.
This NFA can be reduced to an equivalent PnEMTL formula using the following lemma proved in \cite{KKMP21} (lemma 4 of \cite{KKMP21}) and appears in appendix \ref{app:nfapnemtl}.
\begin{lemma}[NFA over Interval Words to PnEMTL]
\label{lem:nfatopnmtl}
Given any NFA $A'$ over $\I$ interval words, we can construct a PnEMTL formula $\varphi$ s.t. ${\rho,i{\models}\varphi}$ iff ${ \rho, i {\in} Time(L(A'))}$. Moreover, if $\I$ is non-adjacent then $\varphi$ is in NA-PnEMTL.
\end{lemma}
As a consequence of \ref{lem:toatatonfa} and \ref{lem:nfatopnmtl}, any (Non-Adjacent) 2-way 1-ATA-rfl of reset depth 0 can be reduced to an equivalent (Non-Adjacent) PnEMTL formulae. 
}

Lemma \ref{lem:toatapnemtl} and \ref{lem:pnemtltoata} imply that (NA-)PnEMTL $\cong$ (NA-)2-way 1-ATA-rfl.
\begin{lemma}
\label{lem:pnemtlqkmso}
\textbf{PnEMTL $\subseteq$ GQMSO.}
\end{lemma}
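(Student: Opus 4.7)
The plan is to proceed by induction on the modal depth $\md(\varphi)$ of a PnEMTL formula $\varphi$, producing for each $\varphi$ an equivalent GQMSO formula $\tau(\varphi)(t_0)$ with a single free first-order variable $t_0$ such that $\rho, i_0 \models \varphi$ iff $\rho, i_0 \models \tau(\varphi)(t_0)$. The base case $\md(\varphi) = 0$ is immediate: each atomic proposition $a$ becomes $Q_a(t_0)$, and Boolean connectives are preserved, yielding an $\mso^{\mathsf{UT}}$ formula (hence a GQMSO formula with one free first-order variable).

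For the inductive step with a forward modality $\varphi = \fregm^k_{I_1,\ldots,I_k}(\re_1,\ldots,\re_{k+1})(S)$ where $S = \{\psi_1,\ldots,\psi_p\}$, the inductive hypothesis provides GQMSO translations $\chi_j(t) := \tau(\psi_j)(t)$. I would set
\[
\tau(\varphi)(t_0) \;:=\; \texists t_1 \in t_0 + I_1.\, \texists t_2 \in t_0 + I_2. \cdots \texists t_k \in t_0 + I_k.\; \Psi(t_0, t_1, \ldots, t_k),
\]
where $\Psi \in \mso^{\mathsf{UT}}$ asserts: (a) $t_0 < t_1 < \cdots < t_k$; (b) for each $w = 1,\ldots,k$ there exists an accepting run of $\re_w$ on the untimed segment from position $t_{w-1}{+}1$ to $t_w$; and (c) there exists an accepting run of $\re_{k+1}$ on the segment from $t_k$ to the last position. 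Each accepting-run condition is the standard B\"uchi-style second-order encoding: for each state $q$ of $\re_w$ introduce a set variable $X^w_q$ and existentially quantify it inside $\Psi$; the initial, acceptance and local transition constraints are written in first order using successor, and crucially they use the inductively built $\chi_j(u)$ to determine which letter of $2^S$ labels each position $u$ inside the segment. The past modality $\sregm^k_{I_1,\ldots,I_k}$ is translated symmetrically, with intervals drawn from $\intintervalneg$ and the ordering reversed to $t_0 > t_1 > \cdots > t_k$. Boolean connectives are translated homomorphically.

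The main point to verify is well-formedness under GQMSO's restriction forbidding free second-order variables from appearing within the scope of a metric quantifier. Each $X^w_q$ is bound by an $\exists X^w_q$ located inside $\Psi$, so it is never free within the metric block; the $\chi_j$ are themselves well-formed GQMSO formulas by the inductive hypothesis, with all their second-order variables properly scoped. Thus the output is a valid GQMSO formula, in fact a formula of AF-GQMSO since the metric block consists only of existential quantifiers. Moreover, the intervals appearing in the outer metric block are exactly $I_1, \ldots, I_k$, so non-adjacency is preserved, simultaneously yielding NA-PnEMTL $\subseteq$ NA-GQMSO. The only mild obstacle is the careful bookkeeping for the NFA-run encoding and the segment endpoints (successor and ``last position'' are standard MSO abbreviations), which is tedious but routine.
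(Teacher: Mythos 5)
Your proposal is correct and follows essentially the same route as the paper: induction on modal depth, a block of existential metric quantifiers $\texists t_1\in t_0+I_1\ldots\texists t_k\in t_0+I_k$ capturing the timing constraints, and a B\"uchi--Elgot--Trakhtenbrot-style MSO$[<]$ encoding of the conditions $\mathsf{Seg}(\cdot)\in L(\re_w)$, with non-adjacency preserved because the outer block uses exactly the intervals of the modality. The only cosmetic differences are that you inline the second-order run encoding and substitute the inductively obtained $\chi_j$ directly, whereas the paper invokes BET as a black box and routes the induction through witness propositions.
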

The key observation is that conditions of the form $\mathsf{Seg}(i,j,\rho,S) {\in} L(\re)$ can be equivalently  expressed as  MSO[$<$] formulae $\psi_{\re}(i,j)$  using B\"{u}chi Elgot Trakhtenbrot (BET) Theorem \cite{buchi}\cite{Elgot}\cite{Trakhtenbrot}. Replacing the former with latter we get an equivalent GQMSO formula. 
See Appendix \ref{app:pnemtlqkmso} for detailed proof.
We first prove the following theorem which will be essential for proving the converse. Recall the fragment AF-GQMSO in section \ref{sec:afgqmso}.
\begin{theorem}
\label{lem:afqkmsoqkmso}
The subclass AF-GQMSO is expressively equivalent to GQMSO.
\end{theorem}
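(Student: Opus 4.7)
The direction AF-GQMSO $\subseteq$ GQMSO is immediate from the syntactic inclusion: AF-GQMSO is by definition a syntactic fragment of GQMSO. For the non-trivial direction GQMSO $\subseteq$ AF-GQMSO, I would proceed by structural induction on the metric depth $\mtd(\varphi)$ of a GQMSO formula $\varphi$. The base case $\mtd(\varphi) = 0$ is pure $\mso[<]$, which trivially lies in AF-GQMSO. For the inductive step, the Boolean connectives, first-order, and second-order quantifiers of $\mso^{\mathsf{UT}}$ are preserved verbatim, so the only real work is in converting a time constraint subformula $\psi(t_0) = \Qq_1 t_1 \in t_0+I_1. \ldots \Qq_j t_j \in t_0+I_j. \phi(t_0,t_1,\ldots,t_j)$ into AF-GQMSO, assuming by the inductive hypothesis that the body $\phi$ is already in AF-GQMSO.

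The first step in handling $\psi(t_0)$ is to dispose of two easy subcases. If every $\Qq_i = \texists$, we are already in AF-GQMSO. If every $\Qq_i = \tforall$, the identity $\tforall t_1. \ldots \tforall t_j. \phi \equiv \neg (\texists t_1. \ldots \texists t_j. \neg \phi)$ does the job: the inner block is a valid AF-GQMSO time constraint subformula (anchor $t_0$, $\texists$-only prefix, body in $\mso^{\mathsf{UT}}$), and wrapping it with $\neg$ is permissible since time constraint subformulae appear as atoms inside $\mso^{\mathsf{UT}}$, which is closed under negation.

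The main obstacle is the genuinely mixed case, say $\tforall t_1 \in t_0+I_1. \texists t_2 \in t_0+I_2. \phi(t_0,t_1,t_2)$. A naive ``negation push'' reformulation $\neg \texists t_1 \in t_0+I_1. \neg \texists t_2 \in t_0+I_2. \phi$ is blocked by the fundamental syntactic restriction that every nested time constraint subformula must have \emph{exactly one} free first-order variable: the inner $\texists t_2 \in t_0+I_2. \phi$ would then have both $t_0$ and the outer-bound $t_1$ free. My plan here is to Skolemize the alternation using bounded second-order quantification inside the body. Concretely, introduce in $\mso^{\mathsf{UT}}$ a bound set-variable $W$ representing the set of $t_2$-witnesses; then $\psi(t_0)$ becomes equivalent to the conjunction of (a) ``$W$ lies entirely in $t_0+I_2$'' and (b) ``every $t_1 \in t_0+I_1$ has some $w \in W$ with $\phi(t_0,t_1,w)$''. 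Statement (a) is a pure universal metric condition on $W$ that can be handled by the pure-$\tforall$ case above, while (b) is again of pure universal form once $w$ is an ordinary (non-metric) $\mso$ existential over the bound set $W$. Iterating this transformation, innermost alternation first, collapses every mixed block into nested $\texists$-only blocks combined via $\mso^{\mathsf{UT}}$ operations.

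The hard part will be the careful bookkeeping of free variables and anchors during Skolemization: we must ensure that the auxiliary second-order variable $W$ introduced for a given alternation is bound within the same $\mso^{\mathsf{UT}}$ layer in which it is used (so that no free second-order variable ever occurs inside a time constraint subformula), and that every metric quantifier subformula produced by the rewriting references only a single first-order free variable playing the role of its anchor. Once this is done, each resulting block either has only $\texists$-metric quantifiers or is the negation of such a block, which is exactly the form allowed by AF-GQMSO.
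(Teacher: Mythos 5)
Your reduction is on the right track up to the crux: the inclusion AF-GQMSO $\subseteq$ GQMSO is trivial, the pure-$\texists$ and pure-$\tforall$ blocks are handled exactly as the paper intends (negating an all-existential block is legal because time-constraint subformulae occur as atoms of $\mso^{\mathsf{UT}}$, which is closed under negation), and you correctly diagnose why naive negation-pushing fails on a mixed block such as $\tforall t_1 \in t_0{+}I_1.\,\texists t_2 \in t_0{+}I_2.\,\phi$: the inner block would acquire two free first-order variables. The gap is in your proposed repair. Skolemizing with a set variable $W$ forces $\exists W$ to sit in an $\mso^{\mathsf{UT}}$ layer that scopes over \emph{both} of your conditions (a) and (b); but each of (a) (``$W$ lies in $t_0{+}I_2$'') and (b) (``every $t_1\in t_0{+}I_1$ has a witness in $W$'') must itself contain a metric quantifier anchored at $t_0$, so $W$ occurs free inside the scope of a metric quantifier. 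That is precisely the syntactic pattern GQMSO forbids (the paper's example $\exists X.\,\exists t.\,\texists t'\in t{+}(1,2)[Q_a(t')\wedge X(t)]$ is declared ill-formed for exactly this reason), and it is forbidden in AF-GQMSO a fortiori, since AF-GQMSO is a fragment of GQMSO. Moving $\exists W$ into the body of one block detaches it from the other, and the metric condition (a) on $W$ cannot be expressed inside $\mso^{\mathsf{UT}}$ without again placing $W$ under a metric quantifier. So the ``careful bookkeeping'' you defer to is not bookkeeping; it is the whole difficulty, and Skolemization does not resolve it.

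The missing idea is to exploit monotonicity of timestamps rather than a second-order witness. Since the $\tau_i$ are nondecreasing, the set of positions whose timestamp lies in $\tau_{t_0}+[l,u)$ is a \emph{contiguous} segment of positions. One therefore existentially guesses its first and last elements $t_x^-,t_x^+$ with metric quantifiers anchored at $t_0$ (certifying firstness and lastness by guessing a predecessor in $t_0{+}[0,l)$ and a successor in $t_0{+}[u,\infty)$, with a case split on whether such neighbours, or any point of $[l,u)$, exist at all), and then replaces $\tforall t_x\in t_0{+}[l,u)$ by the \emph{non-metric} universal $\forall t_x.\,(t_x^-\le t_x\le t_x^+)\rightarrow\cdots$ inside the $\mso^{\mathsf{UT}}$ body. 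All added metric quantifiers are existential and anchored at $t_0$, so the block stays in AF-GQMSO, and iterating on the outermost remaining universal metric quantifier eliminates the whole alternation. Your proposal needs this (or an equivalent) device; as written it does not compile into the logic's syntax.
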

\begin{proof}
Given formula $\psi(t_0)$, we first eliminate the outermost universal metric quantifier (shown underlined), using four additional existential quantifiers and some non-metric universal quantifiers. We consider intervals of the form $[l,u)$ where $l>0$ for simplicity. This could be analogously generalized for other type of intervals.
Let \\
$\psi(t_0) = 
\texists t_{1} {\in} t_0{+}I_{1}\ldots \texists t_{x-1} {\in} t_0{+}I_{x-1}\underline{\tforall t_x {\in} t_0{+}[l,u)} \mathcal{Q}_{x{+}1} t_{x{+}1}\ldots \mathcal{Q}_j t_j \varphi(t_0, \ldots, t_{j})$. \\ We eliminate $\tforall t_x {\in} t_0{+}[l,u)$ as follows. There are 3 possible cases:
\begin{enumerate}
\item There is no point within $[l,u)$ of $t_0$. In this case, $\psi$ will be vacuously true, 
\\${C_1} = \neg \texists t {\in} t_0 {+} [l,u).$
\item There exists a point within $[l,u)$ and a point in $[u,\infty)$ from $t_0$. In this case, we replace the universal quantifier, with 4 existential metric quantifiers and a non-metric universal quantifier (underlined) as follows.
\\$\begin{matrix}
 &  \texists t_{1} {\in} t_0{+}I_{1}\ldots\underline {\texists t'_{x} {\in} t_0{+}[0,l)}& \\
C_2 =  & \underline { \texists t^-_{x} {\in} t_0{+}[l,u) 
\texists t^{+}_{x}. {\in} t_0{+}[l,u) } &\\ 
 & \underline {\texists t''_{x} \in t_0{+}[u,\infty)} \ldots  \mathcal{Q}_j t_j & 
\end{matrix}
\begin{Bmatrix}
S(t'_{x}, t^-_x) \wedge  \\ 
 (\underline{\forall t_x t^-_x\le t_x \le t^{+}_x)} \rightarrow \varphi(t_0, \ldots, t_j)]\wedge \\ 
S(t^{+}_{x}, t''_{x})\\
\end{Bmatrix}$, 
\\where $S$ is the successor relation definable in MSO[$<$]. 
The formula states that there exists a point $t^-_x$ and $t^{+}_x$ (not necessarily distinct) within $[l,u)$ of $t_0$ such that the previous of $t^-_x$ is in $[0,l)$ and next of $t^{+}_x$ is in $[u, \infty)$. This makes $t^-_x$ and  $t^{+}_x$ as first and last point in interval $[l,u)$, respectively. This implies, $\forall t_x. t^-_x\le t_x \le t^{+}_x$ is equivalent to $\tforall t_x {\in} t_0{+}[l,u)$. 
\item There exists a point within $[l,u)$ and no point within $[u,\infty)$ from $t_0$.
This case is similar to the previous ones. We just need to assert that $t^+_x$ is the last point of the timed word. \\$\begin{matrix}
 &  \texists t_{1} {\in} t_0{+}I_{1}\ldots\underline {\texists t'_{x} {\in} t_0{+}[0,l)}& \\
C_3 =  & \underline { \texists t^-_{x} {\in} t_0{+}[l,u) 
\texists t^{+}_{x}. {\in} t_0{+}[l,u) } &\\ 
 & \ldots \ldots  \mathcal{Q}_j t_j & 
\end{matrix}
\begin{Bmatrix}
S(t'_{x}, t^-_x) \wedge  \\ 
 (\underline{\forall t_x t^-_x\le t_x \le t^{+}_x)} \rightarrow \varphi(t_0, \ldots, t_j)]\wedge \\ 
\neg \exists t > t^+_x\\
\end{Bmatrix}$.
\end{enumerate}

Then,{ $C_1 \lor C_2 \lor C3$} is the required formula.
\end{proof}
\begin{lemma}
\label{lem:qkmsopnemtl}
\textbf{GQMSO $\subseteq$ PnEMTL.}
\end{lemma}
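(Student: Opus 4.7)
By Theorem~\ref{lem:afqkmsoqkmso}, we may assume without loss of generality that the input formula $\varphi$ is in AF-GQMSO, so every time constraint subformula has the form
$\psi(t_0) = \texists t_1 {\in} t_0{+}I_1. \ldots \texists t_j {\in} t_0{+}I_j. \phi(t_0, t_1, \ldots, t_j)$
with $\phi \in \mso^{\mathsf{UT}}$. The plan is to proceed by induction on the metric depth $\mtd(\varphi)$. The base case $\mtd(\varphi){=}0$ is a pure MSO[$<$] sentence over $\Sigma$: by the B\"{u}chi--Elgot--Trakhtenbrot (BET) theorem there is an NFA $\re$ with $L(\re){=}L(\varphi)$; since satisfaction of an MSO[$<$] formula with one free first-order variable (the anchor position~$1$ of the PnEMTL semantics) is determined by the state reached after reading the prefix and the state-to-final behaviour of the suffix, $\varphi$ can be written as $\sregm^{0}(\re^{-}) \wedge \fregm^{0}(\re^{+})$ for suitably projected NFAs $\re^{-},\re^{+}$, which is a PnEMTL formula.

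For the inductive step, consider a maximal time-constraint subformula $\psi(t_0)$ of $\varphi$ as above whose $\phi(t_0,\ldots,t_j)$ has no further time constraint subformulae at the same metric depth. Every proper timed subformula occurring in $\phi$ has strictly smaller metric depth, so by the induction hypothesis it is equivalent to a PnEMTL formula; replace each such subformula by a fresh propositional witness $p_\ell$. Now $\phi$ is an MSO[$<$] formula with $j{+}1$ free first-order variables over the enlarged alphabet $\Sigma \cup \{p_1, \ldots, p_m\}$. The goal is to rewrite $\psi(t_0)$ as a PnEMTL formula over this enlarged alphabet: after doing so, replacing each $p_\ell$ back by its PnEMTL definition yields the desired PnEMTL formula by the nesting of $\fregm$/$\sregm$ modalities (which allow arbitrary PnEMTL formulas in their alphabet $S$). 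Iterating the construction bottom-up over the subformula structure eliminates every time constraint.

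To rewrite $\psi(t_0)$ itself, take the finite disjunction over all order types of the tuple $(t_0,t_1,\ldots,t_j)$, i.e.\ over all ways of assigning each $t_i$ to \emph{future}, \emph{past}, or \emph{coincident} with $t_0$ together with a linear order on each group. Fix one order type: say the future witnesses in order are $t_{f_1}<\ldots<t_{f_p}$ with intervals $I_{f_1},\ldots,I_{f_p}$, and the past witnesses in order are $t_{b_1}>\ldots>t_{b_q}$ with intervals $I_{b_1},\ldots,I_{b_q}$ (the coincident ones are handled by local propositional constraints at $t_0$). Introduce fresh markers $X_{f_1},\ldots,X_{f_p},X_{b_1},\ldots,X_{b_q}$ that label these positions. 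By BET applied to the MSO formula $\phi(t_0,t_1,\ldots,t_j)$ (with the free variables encoded by the markers), we obtain an NFA $\mathcal{B}$ over the marked alphabet; standard splitting at the anchor position $t_0$ gives, for every pair of states $(s^-,s^+)$ of $\mathcal{B}$, regular languages $L_0^{w{-}1},L_1^{w{-}1},\ldots,L_p^{w{-}1}$ for the consecutive future segments and $L_0^{w{+}1},\ldots,L_q^{w{+}1}$ for the consecutive past segments whose conjunction is equivalent to $\phi$ holding. Each such combination becomes one conjunct of the form
\[
\fregm^{p}_{I_{f_1}^{+},\ldots,I_{f_p}^{+}}(\re^{+}_0,\ldots,\re^{+}_p) \;\wedge\; \sregm^{q}_{|I_{b_1}|,\ldots,|I_{b_q}|}(\re^{-}_0,\ldots,\re^{-}_q),
\]
where $I_{f_w}^{+} = I_{f_w}\cap(0,\infty)$ and $|I_{b_w}|$ is the reflection of $I_{b_w}\cap(-\infty,0)$ into the positives (as demanded by the signature of $\sregm$); the NFAs $\re^{\pm}_w$ simultaneously recognise the appropriate $L_w^{w\pm1}$ and check that each segment contains exactly the intended marker at its endpoint. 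Taking the disjunction over order types and over pairs $(s^-,s^+)$ yields the required PnEMTL formula equivalent to $\psi(t_0)$, and the intervals used are exactly those appearing in $\psi$, so non-adjacency is preserved.

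The main obstacle is managing the BET translation for formulas with many free first-order variables together with the splitting at the anchor: one must argue that the NFA for $\phi(t_0,\ldots,t_j)$ decomposes, between consecutive marked positions in the chosen order type, into regular segment languages, so that the $\fregm^{p},\sregm^{q}$ modalities can capture the whole property as a Boolean combination of segment constraints with the metric guards. Once this decomposition is in place, everything else is routine bookkeeping over finitely many order types and pairs of BET states, the preservation of non-adjacency being immediate because the intervals are inherited verbatim from the quantifier block.
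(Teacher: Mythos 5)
Your proposal is correct and follows essentially the same route as the paper: reduce to AF-GQMSO via Theorem~\ref{lem:afqkmsoqkmso}, induct on metric depth with witness propositions for nested timed subformulae, and convert the single existential metric-quantifier block to a disjunction of $\fregm^{p}\wedge\sregm^{q}$ formulae via B\"{u}chi--Elgot--Trakhtenbrot and a split at the anchor. The only difference is presentational: where you inline the case analysis over order types of $(t_0,\ldots,t_j)$ and the state-guessing decomposition of the BET automaton into segment languages, the paper packages exactly this argument as the passage through anchored $\I$-interval words and the NFA-to-PnEMTL lemma (Lemma~\ref{lem:nfatopnmtl}), whose appendix proof performs the same partition into type sequences and transition sequences that you describe.
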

\begin{proof}
It suffices to  show \textbf{AF-GQMSO $\subseteq$ PnEMTL}(thanks to theorem \ref{lem:afqkmsoqkmso}). 
The proof is done via induction on metric depth. Let $\psi(t_0) = \texists t_1 {\in} t_0 + I_1 \ldots \texists t_j {\in} t_0 + I_j. \varphi(t_0,t_1, \ldots, t_j)$ be any AF-GQMSO formula of metric depth $1$. 
\\1) By the semantics of GQMSO, any pointed word $\rho,i {\models} \texists t_1 {\in} t_0 + I_1 \ldots \texists t_j {\in} t_0 + I_j. \varphi(t_0,t_1, \ldots, t_j)$ iff $\exists i_1, i_2, \ldots, i_j$ such that $\tau_i - \tau_{i_1} {\in} I_1 \wedge \ldots \wedge \tau_i - \tau_{i_j} {\in} I_j$ and the untimed behaviour of the propositions in $\Sigma$ is given by the MSO[$<$] formulae $\varphi(t_0=i,t_1 = i_1, \ldots, t_j=i_j)$. We add extra monadic predicates from $\I \cup \{\anch\}$ to get an interval word encoding the timed behaviour of $\varphi$.
By definition of ``consistency relation'' for interval words, any pointed timed word $\rho,i {\models} \psi(t_0)$ iff there exists an $\I = \{I_1, \ldots, I_j\}$ interval word $w$ such that $w {\models} \psi_{ut}$ and $\rho,i{\in} Time(w)$, where $\psi_{ut} {=} \exists t_0. [\anch(t_0) \wedge \exists t_1 I_1(t_1) \ldots \exists t_j I_j(t_j). \varphi(t_0, \ldots, t_j) \wedge \forall t. (\anch(t) \rightarrow (t=t_0))]$. Hence, $\rho,i {\in} Time(L(\psi_{ut}))$ iff $\rho,i {\models} \psi(t_0)$. Note that $\psi_{ut}$ only accepts valid interval words.
Recall that for any $\I$ interval word $w$, for any point $i' {\in} dom(w)$, the truth of the predicate $I(i')$ implies that $\tau_{i'} - \tau_i {\in} I$ for some $I_x {\in} \I$. But $\tau_{i'} - \tau_i {\in} I$ doesn't necessarily imply that $I(i')$ is true. Hence, these monadic predicates only witness positive satisfaction of the timing condition. This is the key reason why we had to get rid of the universal metric quantifier.
For example, if $\psi(t_0) = \tforall t {\in} t_0 + I.~a(t)$, then corresponding $\psi_{ut} = \tforall t {\in} t_0 + I.~\varphi(t_0,t)$. But $\psi_{ut}$ vacuously accepts all the interval words where $I$ doesn't appear at all. Hence, $Time(\psi_{ut})$ accepts all the pointed timed words and fails to encode the language accepted by $\psi$ as $\psi$ only accepts words $\rho,i$ such that $a$ holds at all the points within interval $I$ from $i$.
\\2) By B\"{u}chi Elgot Trakhtenbrot Theorem \cite{buchi} \cite{Elgot} \cite{Trakhtenbrot}, a $\text MSO[<]$ sentence $\psi_{ut}$  can be reduced to an equivalent NFA $A = (2^{\Sigma\cup \I \cup \{\anch\}},Q, init, F, \delta)$ over $\I$ interval words.
By lemma \ref{lem:nfatopnmtl}(lemma 4 \cite{KKMP21}), for any NFA over $\I$ interval words we can construct a PnEMTL formulae $\phi$ such that for any pointed timed word $\rho,i$, $\rho,i {\in} Time (L(A))$ iff $\rho,i {\models} \phi$. Hence,  $\rho,i {\models} \psi(t_0)$ iff there exists $\I$ interval word $w {{\in}} L(A)$ such that $\rho,i {{\in}} Time(w)$ iff 
$\rho,i {{\in}} Time(L(A))$ iff $\rho, i {\models} \phi$. Moreover, if $\psi$ is non-adjacent then, $\I$ is non-adjacent and thus $\varphi$ is in NA-PnEMTL.
\\Assume that the lemma holds for all formula of depth less than $n$.
Let $\psi(t_0)$ be any AF-GQMSO formula of metric depth $n$. 
With every timed subformulae $\psi_i(t)$ of $\psi$, we associate a witness proposition $b_i$ such that $b_i$ holds iff $ \psi_i$ holds. Let $W$ be the set of witnesses.
We replace each subformula $\psi_i(t)$ of type $MSO^T$ with its corresponding witness getting a formula $\psi'(t_0)$. As $\psi'(t_0)$ doesn't contain any subformulae of the form $MSO^T$, its metric depth is 1.  As shown in the base case, we can construct a PnEMTL formula $\varphi'$ equivalent to $\psi'(t_0)$ containing symbols from $\Sigma \cup W$.
Note that all subformulae $\psi_i(t_0)$ of $\psi$ are of metric depth less than $n$. Hence, by the induction hypothesis, we can construct a PnEMTL formula $\varphi_i$ equivalent to $\psi_i(t_0)$. Hence, the witnesses for $\psi_i$ are also that for $\varphi_i$. Replacing the witnesses $b_i$ with its corresponding PnEMTL formulae $\varphi_i$, we get the required PnEMTL formulae $\varphi$. Also note that if $\psi$ is non-adjacent then all its subformulae $\psi_i$ and formula $\psi'$ are non-adjacent too. This implies that formulae $\varphi_i$, $\varphi'$ and, hence $\varphi$ are NA-PnEMTL formulae. Appendix \ref{app:gqmsored} gives an example of a GQMSO formula with its equivalent PnEMTL formula.
\end{proof}
\textbf{\large{6~~~~~Conclusion and Discussion}}
\\
\\\textbf{Conclusion}: We established the expressiveness equivalences between timed logics and automata as given in Equation (\ref{eqn:main1}) in the introduction. Thus, we have extended the results of \cite{KKP18} to logics and automata with both future and past. Doing this requires new techniques of abstracting timed words by symbolic anchored interval words, and leveraging the results on untimed logics and automata. Moreover, We have applied the newly proposed non-adjacency restriction from \cite{KKMP21} to the three formalisms of Equation (\ref{eqn:main1}) and shown that this makes them all decidable. The fact that the alternation of metric quantifiers in an anchored block can be eliminated using non-metric quantifiers in GQMSO (see Theorem \ref{lem:afqkmsoqkmso}) is an interesting result.
On careful inspection, it is evident that GQMLO (first order fragment of GQMSO) is equivalent to Partially Ordered 2-Way 1-ATA (PO 2-way 1-ATA).
\\\textbf{Discussion}: All our results, including decidability, extend to infinite timed words  by a suitable adaptation of our formalisms (i.e., B\"{u}chi acceptance condition for 2-Way 1-ATA, and allowing B\"{u}chi Automata modalities for PnEMTL).
Finally we pose the following open questions raised by the results introduced in this paper.
\\1) Unlike 1-ATA, the 2-Way 1-ATA are able to express the language $\mathsf{L_{insterr}}$ (see Example \ref{ex:insterr}). This poses a very natural question: what subclass of timed regular languages can be accepted by 2-Way 1-ATA? Does the clock hierarchy collapses at 1-clock by adding 2-Wayness?
\\2) Non-punctual Q2MLO \cite{rabin}, the most  expressive known decidable fragment of FO[$<,+n$], is a syntactic subclass of GQMLO. Is NA-GQMLO strictly more expressive than non-punctual Q2MLO? A positive answer would make NA-GQMLO the most expressive decidable fragment of FO[$<,+n$]. A negative answer would imply that non-punctual Q2MLO is equivalent to NA-PO 2-way 1-ATA.
\\3) Is non-adjacent PnEMTL strictly more expressive than EMITL \cite{Wilke}? A negative answer implies a tight automata and MSO logic characterization of EMITL.

%


\bibliographystyle{plain}
\bibliography{papers}

\newpage 
\appendix

\section{Useful Notations for the rest of the Appendix}
We give some useful notations that will be used repeatedly in the following proofs.
\begin{enumerate}
    \item For any set $S$ containing propositions or formulae, let $\bigvee S$ denote $\bigvee \limits_{s\in S}(s)$. Similarly, let $A = \{I_1, \ldots, I_n\}$ be any set of intervals. $\bigcap A = I_1 \cap \ldots \cap I_n, \bigcup A = I_1 \cup \ldots \cup I_n$. For any automaton $A$ let $L(A)$ denote the language of $A$.
 \item For any NFA $A = (Q,\Sigma, i,F,\delta)$, for any $q \in Q$ and 
 $F' \subseteq Q$. $A[q,F'] = (Q,\Sigma, q, F', \delta)$. In other words, $A[q,F']$ is the automaton where the set of states and transition relation are identical to $A$, but the initial state is $q$ and the set of final stats is $F'$. 
 For the sake of brevity, we denote $A[q,\{q'\}]$ as $A[q,q']$. Let $\rev(A) = (Q\cup\{f\},\Sigma, f,\{i\},\delta')$, where $\delta'(f,\epsilon) = F$, for any $a \in \Sigma,q \in Q$, $(q,a,q') \in \delta'$ iff $(q',a,q) \in \delta$. In other words, $\rev(A)$ is an automaton that accepts the reverse of the words accepted by $A$.
\item Given any sequence Str, let $|$Str$|$ denote length of the sequence Str. Str[$x$] denotes $x^{th}$ letter of the sequence if $x\le|$Str$|$. Str[$1...x$] denotes the prefix of the string Str ending at position $x$. Similarly, Str[$x...$] denotes the suffix of the string starting from $x$ position. Let $S_1, \ldots S_k$ be sets. Then, for any $t \in S_1 \times \ldots \times S_k$ if $t = (x_1, x_2, \ldots, x_k)$. $t(j)$, for any $j < k$, denotes $x_j$. 
\item For a timed word $\rho$, $\rho[i](1)$ gives the set of propositions true at point $i$.  $\rho[i](2)$ gives the timestamp of the point $i$.  
\end{enumerate}
\section{Interval Word Abstraction}
\label{app:interval word}
Let $I_{\nu}\subseteq \intinterval$. 
An $I_{\nu}$-interval word over $\Sigma$ is a word $\kappa$ of the form 
$a_1 a_2  \dots a_n \in 
(2^{\Sigma \cup \{\anch\} \cup  I_\nu})^*$. There is a unique $i \in dom(\kappa)$ called the \emph{anchor} 
of $\kappa$. At the anchor position $i$,   $a_i \subseteq \Sigma \cup \{\anch\}$, 
and $\anch \in a_i$.  
Let $J$ be any interval in $\I_\nu$. We say that a point $i \in dom(\kappa)$ is a $J$-time restricted point if and only if, $J \in a_i$. 
$i$ is called time restricted point if and only if either $i$ is $J$-time restricted for some interval $J$ in $I_\nu$ or $\anch \in a_i$. 
\smallskip

\noindent \textbf{From $I_\nu$-interval word to Timed Words} : 
Given a $I_\nu$-interval word $\kappa=a_1 \dots a_n$ over $\Sigma$ and 
a timed word $\rho=(b_1, \tau_1)\dots (b_m, \tau_m)$, 
the pointed timed word $\rho, i=(b_i, \tau_i) \dots, (b_m, \tau_m)$ is consistent with $\kappa$ iff 
$dom(\rho){=}dom(\kappa)$, $i{=}\anch(\kappa)$, for all $j\in dom(\kappa)$, $b_j=a_j\cap \Sigma$ and for $j \ne i$, $I \in a_j \cap I_\nu$ implies 
$\tau_j - \tau_i \in I$. 
Intuitively, each point $j$ of $\kappa$ does the following. (i) It stores the set of propositions that are true at point $j$ of $\rho$ and (ii) It  also stores the set of intervals $\I \subseteq I_\nu$ such that the time difference between point $i$ and $j$ of $\rho$ lies within $\bigcap \I$, thus abstracting the time differences from the anchor point($i$) using some set of intervals in $I_\nu$.
We denote the set of all the pointed timed words consistent with a given interval word $\kappa$ as $\mathsf{Time(\kappa)}$. Similarly, given a set $\Omega$ of $I_\nu$ interval words, $\mathsf{Time(\Omega)}{=}\bigcup \limits_{\kappa \in \Omega} (\mathsf{Time(\kappa)})$. 
\\\noindent {\it Example.}
Let 
$\kappa{=}\small{\{a,b, (-1,0)\} \{b, (-1,0)\} 
\{a, \anch\} \{b,[2, 3]\}}$ be an interval word over the set of intervals 
$\{(-1,0),[2,3]\}$. 
Consider timed words $\rho$ and $\rho'$ s.t.
\\ $\rho{=}\small{(\{a,b\}, 0)(\{b\}, 0.5), (\{a\}, 0.95) (\{b\}, 3)}$,$\rho'{=}\small{(\{a,b\},0)(\{b\},0.8)(\{a\},0.9)(\{b\},2.9)}$.
Then $\rho,3$ as well as $\rho',3$ are  consistent with $\kappa$ while $\rho,2$ is not.  Likewise, for the timed word $\rho'' {=} (\{a,b\}, 0), (\{b\}, 0.5), (\{a\}, 1.1) (\{b\}, 3)$,  $\rho'', 3$ is not consistent with $\kappa$ as $\tau_1-\tau_3 \notin (-1,0)$, as also $\tau_3-\tau_2 \notin [2,3]$.

Note that the ``consistency relation'' is a many-to-many relation. For the set of intervals $I_\nu$, a pointed timed word $\rho,i$ can be consistent with more than one $I_\nu$-interval word and vice versa.
Let $I_\nu, I_\nu' \subseteq \intinterval$. Let $\kappa=a_1\dots a_n$ and $\kappa'=b_1 \dots b_m$ be $I_\nu$ and $I_\nu'$ interval words, respectively. 
 $\kappa$ is \emph{similar} to $\kappa'$, denoted by $\kappa \sim \kappa'$ if and only if, (i) $dom(\kappa){=}dom(\kappa')$, (ii) for all $i \in dom(\kappa)$, $a_i \cap  \Sigma{=}b_i\cap \Sigma$, and (iii)$\anch(\kappa)=\anch(\kappa')$. 
 $\kappa$ is \emph{congruent} to $\kappa'$, denoted by $\kappa \cong \kappa'$, iff $\mathsf{Time}(\kappa){=}\mathsf{Time}(\kappa')$. In other words, $\kappa$ and $\kappa$ abstract the same set of pointed timed words. Note that $\kappa \cong \kappa'$ implies $\kappa \sim \kappa'$.
\\\noindent \textbf{Boundary Points}: For any $I \in I_\nu$, $\first(\kappa, I)$ and $\last(\kappa,I)$  respectively denote the first and last $I$-time restricted points in $\kappa$. If $\kappa$ does not contain any $I$-time restricted point, then both $\first(\kappa, I){=}\last(\kappa,I){=}\bot$. We define, $\boundaryint(\kappa){=}\{i | i \in dom(\kappa) \wedge \exists I \in I_\nu$ s.t. $(i{=}\first(\kappa,I) \vee i=\last (\kappa,I) \vee i=\anch(\kappa))\}$. 
\\\noindent \textbf{Collapsed Interval Words}. Given an $I_{\nu}$ interval word $\kappa{=}a_1 \dots a_n$, let $\Ii_j$ denote
the largest set of intervals from $I_{\nu}$ contained in $a_j$.
Let $\kappa'{=}\col(\kappa)$ be the word obtained by replacing $\Ii_j \subseteq a_j$ with $\bigcap_{I \in \Ii_j} I$ in $a_j$, for all $j{\in}dom(\kappa)$.  It is clear that 
$\mathsf{Time}(\kappa){=}\mathsf{Time}(\kappa')$. $\kappa'$ is a $\Clcap(I_\nu)$ interval word, where \\$\Clcap(I_\nu){=}\{I | I{=}\bigcap I', I' \subseteq I_\nu\}$. An interval word $\kappa$ is called \emph{collapsed} iff $\kappa{=}\col(\kappa)$. 
\begin{figure}[t]
    \centering
  \scalebox{0.6}{  \includegraphics{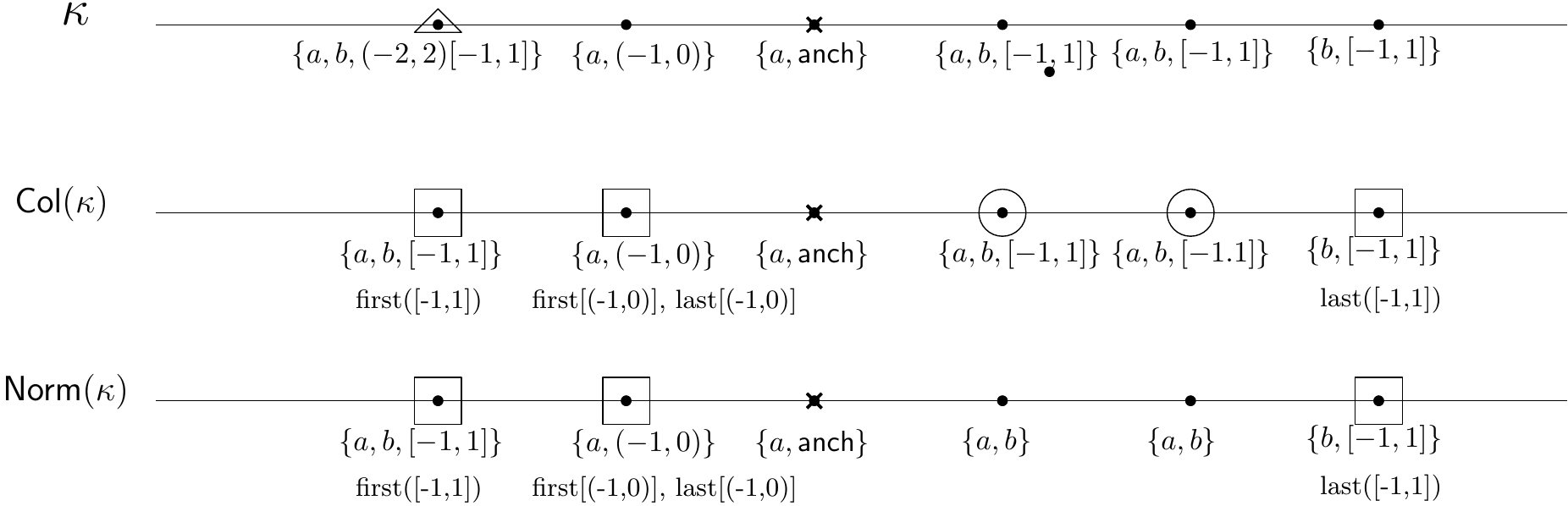}}
    \caption{The point within the triangle has more than one interval. The encircled points are intermediate points and carry redundant information. The required timing constraint is encoded by first and last time restricted points of all the intervals (within boxes). }
    \label{fig:norm_example}
\end{figure}

\noindent \textbf{Normalization of Interval Words}. Given an $I_{\nu}$ interval word $\kappa=a_1 \dots a_n$, 
we define the normalized word corresponding to $\kappa$, denoted $\normalize(\kappa)$ as a $\Clcap(I_\nu)$ interval word $\kappa_{nor}=b_1 \dots b_m, $ such that (i) 
$\kappa_{nor} \sim \col(\kappa)$, (ii) for all $I \in \Clcap(I_\nu)$, $\first(\kappa, I){=}\first(\kappa_{nor}, I)$,  $\last(\kappa, I){=}\last (\kappa_{nor}, I)$, and for all points $j \in dom(\kappa_{nor})$ s.t. $\first(\kappa, I) < j < \last(\kappa,I)$, 
$j$ is not a $I$-time constrained point. Thus, $\normalize(\kappa)$ is an $I_\nu$ interval word  \emph{similar} to $\kappa$, has identical first and last $I$-time restricted points and has no intermediate $I$-time restricted points, for any $I \in I_\nu$.  An $I_\nu$ interval word $\kappa$ is normalized iff $\normalize(\kappa){=}\kappa$. Hence, a normalized word is a collapsed word where for any $J\in I_\nu$ there are at most 2 $J$-time restricted points. 
Refer Figure \ref{fig:norm_example} for example.
\begin{lemma}
\label{lem:normalize}
$\kappa \cong \normalize(\kappa)$.Hence, any $I_\nu$ interval word, $\kappa$, can be reduced to a congruent word $\kappa'$ such that $\kappa'$ has at most $2 \times |I_\nu|^2{+}1$ time restricted points.
\end{lemma}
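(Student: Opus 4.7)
The plan is to prove the two claims separately: first the congruence $\kappa \cong \normalize(\kappa)$, which I would factor through the intermediate word $\col(\kappa)$ as $\kappa \cong \col(\kappa) \cong \normalize(\kappa)$; then the size bound, which reduces to counting the distinct intersections of subsets of $I_\nu$.

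For $\kappa \cong \col(\kappa)$ the argument is almost immediate: at any position $j$, if $\I_j$ is the set of intervals annotating $a_j$, then a pointed timed word satisfies the time constraint at $j$ iff $\tau_j - \tau_{\anch(\kappa)} \in I$ for every $I \in \I_j$, i.e. iff $\tau_j - \tau_{\anch(\kappa)} \in \bigcap \I_j$. Replacing $\I_j$ by the single interval $\bigcap \I_j$ is exactly what $\col$ does, so the consistency relation is preserved pointwise.

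For $\col(\kappa) \cong \normalize(\kappa)$ the direction $\mathsf{Time}(\col(\kappa)) \subseteq \mathsf{Time}(\normalize(\kappa))$ is trivial, since $\normalize(\kappa)$ only retains a subset of the interval annotations of $\col(\kappa)$. The converse is the crux. Suppose $\rho, i$ is consistent with $\normalize(\kappa)$ and fix any intermediate $I$-time restricted point $j$ of $\col(\kappa)$ that was dropped in normalization, so $\first(\col(\kappa), I) < j < \last(\col(\kappa), I)$. The normalized word still carries the $I$-annotation at both endpoints, so consistency with $\normalize(\kappa)$ forces $\tau_{\first} - \tau_{\anch} \in I$ and $\tau_{\last} - \tau_{\anch} \in I$. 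Since timestamps are non-decreasing in position, $\tau_{\first} \le \tau_j \le \tau_{\last}$, whence $\tau_j - \tau_{\anch}$ lies between two values of $I$. Convexity of the interval $I$ then gives $\tau_j - \tau_{\anch} \in I$, so every dropped constraint of $\col(\kappa)$ is still satisfied. This monotonicity--plus--convexity argument is the only non-routine step; it applies uniformly to intervals in $\intintervalneg$ as well, because $j \mapsto \tau_j - \tau_{\anch}$ is monotone on the entire domain regardless of sign.

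For the size bound, after normalization each $I \in \Clcap(I_\nu)$ labels at most two positions (its first and last occurrence), and in the collapsed form each position carries at most one such interval. Together with the anchor this yields at most $2|\Clcap(I_\nu)| + 1$ time-restricted points. Finally, any nonempty intersection $\bigcap \I'$ with $\I' \subseteq I_\nu$ is an interval whose left endpoint is the maximum of the left endpoints of the $I \in \I'$ and whose right endpoint is the minimum of their right endpoints; each such endpoint comes from the boundary of some interval in $I_\nu$. There are therefore at most $|I_\nu| \times |I_\nu|$ distinct intersections, so $|\Clcap(I_\nu)| \le |I_\nu|^2$, yielding the claimed bound $2|I_\nu|^2 + 1$.
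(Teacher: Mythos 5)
Your proposal is correct and follows essentially the same route as the paper: factoring the congruence as $\kappa \cong \col(\kappa) \cong \normalize(\kappa)$, with the collapse step handled by the equivalence of ``in every interval of $\I_j$'' with ``in $\bigcap \I_j$'', and the normalization step handled by the monotonicity-of-timestamps-plus-convexity-of-intervals argument (the paper's Lemmas on collapse and on matching first/last restricted points). Your counting of $|\Clcap(I_\nu)| \le |I_\nu|^2$ via the left/right endpoints of intersections is a detail the paper leaves implicit, and it correctly justifies the stated bound of $2|I_\nu|^2 + 1$.
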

We split the proof of Lemma \ref{lem:normalize} into two parts. First, Lemma \ref{c1} shows $\kappa \cong \col(\kappa)$. Lemma \ref{fl2} implies that $\col(\kappa) \cong \normalize(\kappa)$. Hence, both Lemma \ref{c1}, \ref{fl2} together imply Lemma \ref{lem:normalize}.
\begin{lemma}
\label{c1}
Let $\kappa$ be a $I_\nu$ interval word and. Then $\kappa \cong \col(\kappa)$.
\end{lemma}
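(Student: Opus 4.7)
The plan is to unpack the definitions of consistency and of $\col$, and observe that collapsing does not alter the timing constraint at any position because intersecting a finite family of intervals corresponds to conjoining their membership predicates. Concretely, I want to show for every pointed timed word $\rho, i$ the equivalence $\rho, i \text{ consistent with } \kappa \Leftrightarrow \rho, i \text{ consistent with } \col(\kappa)$, whence $\mathsf{Time}(\kappa) = \mathsf{Time}(\col(\kappa))$, i.e., $\kappa \cong \col(\kappa)$.

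First, I would observe that $\col$ only touches the interval component of each letter: writing $\kappa = a_1 \dots a_n$ and $\col(\kappa) = a'_1 \dots a'_n$, by construction $a'_j \cap \Sigma = a_j \cap \Sigma$, and $\anch \in a'_j$ iff $\anch \in a_j$. Hence $\col(\kappa) \sim \kappa$, $dom(\col(\kappa)) = dom(\kappa)$, and $\anch(\col(\kappa)) = \anch(\kappa)$, so the non-timing clauses of the consistency relation transfer verbatim between $\kappa$ and $\col(\kappa)$. The only thing left to verify is that the time-stamp constraints imposed by the two words coincide at every position $j \neq i$.

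The key algebraic step is the trivial identity $\tau_j - \tau_i \in \bigcap_{I \in \Ii_j} I \Leftrightarrow \forall I \in \Ii_j,\ \tau_j - \tau_i \in I$. Applied at every position $j$, this yields:
$$
\bigl(\forall I \in a_j \cap I_\nu,\ \tau_j - \tau_i \in I\bigr) \Longleftrightarrow \bigl(\tau_j - \tau_i \in \textstyle\bigcap \Ii_j\bigr) \Longleftrightarrow \bigl(\forall I' \in a'_j \cap \Clcap(I_\nu),\ \tau_j - \tau_i \in I'\bigr),
$$
where the rightmost equivalence uses the fact that, by definition of $\col$, $a'_j$ contains exactly the single interval $\bigcap \Ii_j$ from $\Clcap(I_\nu)$ (and no others). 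If $\Ii_j = \emptyset$ the quantifications are vacuously true on both sides and there is nothing to check.

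Combining the two observations — that $\kappa$ and $\col(\kappa)$ agree on $\Sigma$-projection, domain and anchor, and that they impose equivalent timing constraints at every position — gives that the consistency relation is preserved in both directions. Therefore $\mathsf{Time}(\kappa) = \mathsf{Time}(\col(\kappa))$, establishing $\kappa \cong \col(\kappa)$. There is no real obstacle here: the lemma is a direct unfolding of definitions, with the only substantive ingredient being that membership in an intersection is the conjunction of memberships. The interesting content will come in Lemma \ref{fl2}, where the redundant intermediate $I$-time-restricted points in a collapsed word must be shown to be eliminable, an argument that relies on monotonicity of the time stamps.
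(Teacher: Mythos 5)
Your proof is correct and follows essentially the same route as the paper's: unfold the consistency relation, note that $\col$ preserves the $\Sigma$-projection, domain and anchor, and use the fact that membership in $\bigcap \Ii_j$ is equivalent to the conjunction of memberships in each $I \in \Ii_j$. Your explicit handling of the vacuous case $\Ii_j = \emptyset$ is a small but welcome addition the paper leaves implicit.
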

\begin{proof}
A pointed word $\rho,i$ is consistent with $\kappa$ iff
\begin{itemize}
    \item[(i)] $dom(\rho){=}dom(\kappa)$, \item[(ii)]$i{=}\anch(\kappa)$, \item[(iii)] for all $j\in dom(\kappa)$, $\rho[j](1)=\kappa[j]\cap \Sigma$ and \item[(iv)] for all $j \ne i$, $I \in a_j \cap I_\nu$ implies $\rho[j](2) - \rho[i](2)\in I$.
    \item[(v)]$\kappa \sim \col(\kappa)$, by definition of $\col$.
\end{itemize}
 Hence given (v),
(i) iff (a) (ii)iff (b)(iii) iff (c) where:\\
(a) $dom(\rho){=}dom(\kappa){=}dom(\col(\kappa))$, (b) $i{=}\anch(\kappa) = \anch(\col(\kappa))$, (c) for all $j\in dom(\kappa)$, $\rho[j](1)=\kappa[j]\cap \Sigma = \col(\kappa)[j] \cap \Sigma$.
(iv) is equivalent to $\rho[j](2) - \rho[i](2) \in \bigcap(\kappa[j] \cap I_\nu)$, but $\bigcap(\kappa[j] \cap I_\nu) = \col(\kappa)[j]$. Hence, (iv) iff (d) $\rho[j](2) - \rho[i](2) \in \col(\kappa)[j]$.
Hence, (i)(ii)(iii) and (iv) iff (a)(b)(c) and (d). Hence, $\rho,i$ is consistent with $\kappa$ iff it is consistent with $\col(\kappa)$.
\end{proof}
\begin{lemma}
\label{fl2}
Let $\kappa$ and $\kappa'$ be $I_\nu$ interval words such that $\kappa \sim \kappa'$. If for all $I \in I_\nu$, $\first(\kappa, I) = \first(\kappa', I) $ and $\last(\kappa, I) = \last(\kappa', I)$, then $\kappa \cong \kappa'$.
\end{lemma}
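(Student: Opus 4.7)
The plan is to show $\mathsf{Time}(\kappa) = \mathsf{Time}(\kappa')$ by proving both inclusions. By symmetry, it suffices to prove that $\rho, i \in \mathsf{Time}(\kappa)$ implies $\rho, i \in \mathsf{Time}(\kappa')$; the hypotheses on boundary points are symmetric in $\kappa$ and $\kappa'$, and $\sim$ is an equivalence relation. I will walk through the four consistency conditions (i)--(iv) listed in the proof of Lemma \ref{c1} and show they transfer from $\kappa$ to $\kappa'$.

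Conditions (i), (ii), and (iii) are immediate from the assumption $\kappa \sim \kappa'$: the definition of $\sim$ directly gives $dom(\kappa) = dom(\kappa')$, $\anch(\kappa) = \anch(\kappa')$, and agreement of the propositional content at every position. So only condition (iv) requires work: for every $j \ne i$ and every $I \in \kappa'[j] \cap I_\nu$, I must verify that $\tau_j - \tau_i \in I$.

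The key step is a squeezing argument using convexity of $I$ and monotonicity of timestamps. Fix such a $j$ and $I$. Since $j$ is an $I$-time restricted point of $\kappa'$, we have $\first(\kappa', I) \le j \le \last(\kappa', I)$ as positions in the domain. By the hypothesis, $\first(\kappa, I) = \first(\kappa', I)$ and $\last(\kappa, I) = \last(\kappa', I)$ (in particular, both boundary points in $\kappa$ exist and are the same positions). Since $\rho, i \in \mathsf{Time}(\kappa)$ and both $\first(\kappa, I)$ and $\last(\kappa, I)$ are $I$-time restricted in $\kappa$, condition (iv) for $\kappa$ gives $\tau_{\first(\kappa,I)} - \tau_i \in I$ and $\tau_{\last(\kappa,I)} - \tau_i \in I$. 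Now the timestamps in a timed word are monotone non-decreasing, so
\[
\tau_{\first(\kappa,I)} - \tau_i \;\le\; \tau_j - \tau_i \;\le\; \tau_{\last(\kappa,I)} - \tau_i,
\]
and since $I$ is a (convex) interval of $\mathbb{R}$, we conclude $\tau_j - \tau_i \in I$, as required. The edge case $\first(\kappa', I) = \bot$ is vacuous: by hypothesis $\first(\kappa, I) = \bot$ too, meaning $\kappa'$ has no $I$-time restricted point at all, so no $j$ to check.

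There is no substantive obstacle here; the only delicate point is making sure the convexity argument is legitimate for intervals drawn from $\intinterval = \intintervaln \cup \intintervalneg$, i.e. including intervals with negative endpoints that sit entirely before the anchor in time. This is handled automatically: each individual $I$ is still a convex subset of $\mathbb{R}$, timestamps remain monotone regardless of whether $j$ lies before or after $i$ in the domain, and we never need to compare points across different intervals. Together with Lemma \ref{c1}, this yields Lemma \ref{lem:normalize}, since the operation $\normalize$ preserves $\first$ and $\last$ of each $I$ and only deletes intermediate $I$-time restricted labels, which by the above lemma leaves the set of consistent pointed timed words unchanged.
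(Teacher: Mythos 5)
Your proof is correct and follows essentially the same route as the paper: both reduce everything to condition (iv) via the definition of $\sim$, and both establish it by the same squeezing argument --- the agreement of $\first(\cdot,I)$ and $\last(\cdot,I)$, monotonicity of timestamps, and convexity of $I$. The paper merely phrases the key step as a proof by contradiction while you argue it directly, and your explicit handling of the $\first(\kappa',I)=\bot$ case is a harmless addition.
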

\begin{proof}
The proof idea is the following:
\begin{itemize}
    \item As $\kappa \sim \kappa'$, the set of timed words consistent with any of them will have identical untimed behaviour.
    \item As for the timed part, the intermediate $I$-time restricted points ($I$-time restricted points other than the first and the last) do not offer any extra information regarding the timing behaviour. In other words, the restriction from the first and last $I$ restricted points will imply the restrictions offered by intermediate $I$ restricted points.
\end{itemize}
Let $\rho= (a_1, \tau_1),\ldots (a_n,\tau_n)$ be any timed word.
$\rho, i$ is consistent with $\kappa$ iff
\begin{enumerate}
    \item 
    \begin{itemize}
        \item[(i)]$dom(\rho) = dom(\kappa)$, \item[(ii)] $i = \anch(\kappa)$, \item[(iii)] for all $j \in dom(\rho)$, $\kappa[j]\cap \Sigma = a_j$ and \item[(iv)]for all $j\ne i \in dom(\rho)$, $\tau_j -\tau_i  \in \bigcap (I_\nu \cap \kappa[j])$.
    \end{itemize}
    \smallskip 
    
Similarly, $\rho, i$ is consistent with $\kappa'$ if and only if 
\item 
\begin{itemize}
    \item[(a)] $dom(\rho) = dom(\kappa')$, \item[(b)]$i = \anch(\kappa')$, \item[(c)]for all $j \in dom(\rho)$, if $\kappa'[j]  \cap \Sigma = a_j$ and \item[(d)] for all $j\ne i \in dom(\rho)$, $\tau_j -\tau_i  \in \bigcap (I_\nu \cap \kappa'[j])$.

\end{itemize}

\end{enumerate}

Note that as $\kappa \sim \kappa'$, we have, $dom(\kappa) = dom(\kappa')$, $\anch(\kappa) = \anch(\kappa')$, for all $j \in dom (\kappa)$, $\kappa[j]\cap \Sigma = \kappa'[j]\cap \Sigma$. Thus, 2(a) $\equiv$ 1(i), 2(b) $\equiv$ 1(ii) and 2(c) $\equiv$ 1(iii).

Suppose there exists a $\rho, i$ consistent with $\kappa$ but there exists $j' \ne i \in dom(\rho)$, $\tau_j' -\tau_i  \notin I'$ for some $I' \in \kappa[j']$. By definition, $\first(\kappa',I') \le j' \le \last(\kappa',I')$. But $\first(\kappa',I') = \first(\kappa,I')$, $\last(\kappa',I') = \last(\kappa,I')$. Hence, $\first(\kappa,I') \le j' \le \last(\kappa, I')$. As the time stamps of the timed word increases monotonically, $x \le y \le z$ implies that $\tau_x \le \tau_y \le \tau_z$ which implies that $\tau_x - \tau_i \le \tau_y - \tau_i \le \tau_z -\tau_i$. Hence,  $\tau_{\first(\kappa,I')} - \tau_i \le \tau_{j'} - \tau_i \le \tau_{\last(\kappa,I')}- \tau_i$. But $\tau_{\first(\kappa,I')} - \tau_i \in I'$ and $\tau_{\last(\kappa,I')}- \tau_i \in I'$ because $\rho$ is consistent with $\kappa$. This implies, that $\tau_{j'} - \tau_i \in I'$ (as $I'$ is a convex set) which is a contradiction. Hence, if $\rho,i$ is consistent with $\kappa$ then it is consistent with $\kappa'$ too. By symmetry, if $\rho, i$ is consistent with $\kappa'$, it is also consistent with $\kappa$. Hence $\kappa \cong \kappa'$.
\end{proof}


\section{Non Adjacent GQMSO - Example}
\label{app:inster}
\begin{example} 
We give a NA-GQMSO formulae equivalent to $\mathsf{L_{inst err}}$ given Example \ref{ex:insterr} as follows:

$\begin{matrix} 
\exists t_0.&\begin{bmatrix}
First(t_0) \wedge\\ 
( \texists t_1 \in t_0 + (0,1).\texists t_2 \in t_0 + (0,1). ISx(t_1) \wedge ISy(t_2) \wedge\\ 
(\texists t_3 \in t_0 + (1,2). (\exists t. S(t, t_0) \wedge ISy(t))\\ 
\end{bmatrix} \wedge\\
\exists t_4. \exists t_5.&\begin{bmatrix} 
\{S(t_5, t_4) \wedge \texists t \in t_4 - (0,1). ISx(t)\} \wedge \\
\{\texists t' \in t_5 - (0,1). ISy(t')\}
\wedge \{\texists t \in t_5 - (1, \infty). ISx(t)\} \\
\wedge \{\exists t_6. S(t_6, t_5) \rightarrow (\texists t' \in t_6-(1,\infty). ISy(t')\}
\end{bmatrix}
\end{matrix}$\\

where $ISx()$ and $ISy()$ are defined in example \ref{ex:insterr} and $First(t_0)$ is defined in the same example as $\psi_1(t_0)$.

\end{example}

\section{Proofs for section \ref{sec:toatapnemtl}}
\subsection{NA-2-Way 1-ATA-rfl Accepting $\mathsf{L_{insterr}}$}
\label{app:na2wayvsata}
Theorem 2.8 \cite{LasotaW08} shows that language $\mathsf{L_{inst err}}$ presented in example \ref{ex:insterr} is not recognizable by 1-ATA. Hence, it suffices to show that the same can be expressed by NA 2-Way 1-ATA-rfl $A$ with islands $Q_0, \ldots, Q_5$ with header states $q_0, q_1, q^-_2, q^-_3,q^-_4, q^-_5$, respectively, and transition function $\delta$ detailed as follows. Backward moving locations are superscripted with $-$ sign in the following. Let $\rho = (b, 0) (b, \tau_2) (b,\tau_3) \ldots (b, \tau_n)$ be any timed word.
\begin{itemize}
    \item $Q_5 = \{q^-_{5}, q^-_{5,2}, q^-_{5,3}, q^-_{5,4}\}$, $\delta(q^-_{5}, b) = q^-_{5}$, $\delta(q^-_{5}, b, x \in (1,\infty)) = q^-_{5,2}$, $\delta(q^-_{5,2}, b) = q^-_{5,3}$, $\delta(q^-_{5,3}, b) = q^-_{5,4}$, $\delta(q^-_{5,4}, \lem) = \top$. When called from any point $i$ of $\rho$, this island makes sure that $\tau_i > \tau_3+1$.
    \item $Q_4 = \{q^-_{4}, q^-_{4,2}, q^-_{4,3}\}$, $\delta(q^-_{4}, b, x \in [0,1]) = q^-_{4} \vee q^-{4,2}$,
    $\delta(q^-_{4,2}, b) = q^-_{4,3}$,
    $\delta(q^-_{4,3}, \lem) = \top$. When called from any point $i$ of $\rho$, this island makes sure that $\tau_i \le  \tau_2+1$.
    \item $Q_3 = \{q^-_{3}, q^-_{3,2}, q^-_{3,3}\}$, $\delta(q^-_{3}, b) = q^-_{3}$, $\delta(q^-_{3}, b, x \in (1,\infty)) = q^-_{3,2}$,
    $\delta(q^-_{3,2}, b, = q^-_{3,3}$,
    $\delta(q^-_{3,3}, \lem) = \top$. When this island is called from any point $i$ of $\rho$, it makes sure that $\tau_i >  \tau_2+1$.
    \item $Q_2 = \{q^-_{2}, q^-_{2,2}, q^-_{2,3}, q^-_{2,4}\}$, $\delta(q^-_{2}, b, x\in [0,1]) = q^-_{2} \vee q^-_{2,2}$, $\delta(q^-_{2,2}, b) = q^-_{2,3}$, $\delta(q^-_{2,3}, b) = q^-_{2,4}$
    $\delta(q^-_{2,4}, \lem) = \top$. When called from any point $i$ of $\rho$, this island makes sure that $\tau_i \le \tau_3+1$.
    \item $Q_1 = \{q_1, q_{1,2}, q_{1,3}\}$, $\delta(q_1, b) = q_{1,2}$,  $\delta(q_{1,2}, b) = q_{1,3}$, $\delta(q_{1,3}, b, x\in (1,2)) = \top$. If this island is called after reading the the first position, then it makes sure that $\tau_4 \in (1,2)$.
    \item $Q_0 = \{q_0, q_{0,2}, q_{0,3},q_{0,4}, q_{0,5}, q^-_{0,6}\}$. $q_0$ is the initial location of the automata $A$.
    $\delta(q_0, b) = q_{0,2} \wedge x.q_1$. On reading the first symbol, this transition moves to a location $q_{0,2}$ and simultaneously calls island $Q_1$ from the position 1 (and timestamp 0). Moreover from $q_0$, there are two consecutive $b$ within interval $(0,1)$. This is expressed by transitions,
    $\delta(q_{0,2}, b, x \in (0,1)) = q_{0,3}$, $\delta(q_{0,3}, b, x \in (0,1)) = q_{0,4}$.
    Hence, $\rho$ is accepted by $A$ only if it has at least 4 points where $b$ holds. Moreover, the second and third points are within interval $(0,1)$ and the fourth point is within interval $(1,2)$(the latter is expressed by $Q_1$). $\delta(q_{0,4}, b) = q_{0,4}  \vee (x.q^-_2 \wedge x.q^{-}_{3} \wedge q_{0,5} \wedge q^-_{0,6})$, $\delta(q^-_{0,6}, b) = x.q^-_4$, $\delta(q_{0,5}, b) = x.q^-5$, $\delta(q_{0,5}, \rem) = \top$. Location $q_{0,4}$ loops on $b$ and non-deterministically chooses a position $i$ of $\rho$ from where it calls islands $Q_2$ and $Q_3$ simultaneously. Moreover, island $Q_4$ is called from the position $i-1$ and $Q_5$ from $i+1$. This implies that $\rho$ is accepted by $A$ iff $\tau_2, \tau_3 \in (0,1)$ and there exists exactly one point $i\in dom(\rho)$ such that $\tau_i  \in [\tau_2 + 1, \tau_3 +1]$ (as $\tau_{i-1}  \in [0, \tau_2 + 1)$, $\tau_i+1  \in (\tau_3 +1, \infty)$ or $\tau_i$ is the last position of $\rho$).
\end{itemize}

\subsection{Proof of Lemma \ref{lem:pnemtltoata}}
\label{app:pnemtltoata}
We apply induction on modal depth of the formulae $\varphi$. For modal depth 0, $\varphi$ is a propositional formulae. Hence, the lemma trivially holds. For modal depth 1, let $\varphi$ be of the form $\varphi = \fregk_{I_1, \ldots, I_k} (\re_1, \ldots, \re_{k+1})(\Sigma)$. In the case of $\sregk$ modality, symmetrical construction applies. Moreover, dealing with boolean operators is trivial as the resulting 2-Way 1-ATA-rfl are closed under boolean operations. 
Let $\re_j = (2^{\Sigma}, Q_j, init_j, F_j, \delta_j)$. 
For $a \in \Sigma$ let Pre$(a,F_j) = \{q| q\in Q_j \wedge F_j\cap\delta_j(q,a)\ne \emptyset\}$. Hence Pre$(a,F_j)$ denote set of all the locations in $\re_j$ that has a transition to an accepting state on reading $a$. By semantics, for any timed word $\rho = (a_1, \tau_1) \ldots (a_m,\tau_m)$ and $i_0 \in dom(\rho')$, $\rho,i_0 \models \varphi$ iff there exists a sequence of point $i_1, \ldots, i_k, i_{k+1}$ in strict future of $i_0$ where $i_{k+1} = m$ such that the behaviour of propositions in $\Sigma$ between the segment from $i_j+1$ to $i_{j+1}$ is given by automata $\re_{j+1}$ for any $0\le j \le k+1$. This specification can be expressed using 1-clock Non Deterministic Timed Automata, $\A = (2^{\Sigma}, Q, init, F, \delta, \G)$, constructed as follows. $Q = Q_1\ldots Q_{k+1}$. $init = init_1$, $F = F_{k+1}$, $\G = \{x \in I_1 \ldots, x\in I_k\}$, for any $1 \le j \le k+1$ $q \in Q_j$, $\delta(q,a) = \delta_j(q,a)$, for any $1\le j \le k$ $q \in \text{Pre}(a,F_i)$, $\delta(q,a,x\in I_{i}) = init_{j+1}$.
By semantics of NTA $\rho, i \models \varphi'$ if and only if, $\A$ reaches accepting state on reading $\rho$ starting from position $i$. Note that the NTA we constructed is a reset free NTA. In case, of $\sreg$ modality we will have a backward moving reset free NTA. Hence, the NTA constructed are ``2-Way 1-ATA-rfl'' with single island.

Let us assume that the lemma holds for every PnEMTL formulae of depth less than $n$. Let $\varphi$ be any PnEMTL formulae of modal depth $n$ of the form $\fregk_{I_1, \ldots, I_k} (\re_1, \ldots, \re_{k+1})(S)$.
As the $\md(\varphi) = n$,  any formulae $\phi_j \in S$ is s.t. $\md(\phi_i)< n$. We consider the set of timed words $T$ over extended set of propositions $\Sigma \cup W$, where $W$ is a set of propositions containing witness $b_j$ for each formulae $\phi_j \in S$ such that for any $\rho'\in T$ and $i \in dom(\rho')$, $\rho',j \models \phi_i \iff \rho', j \models b_i$.
Let $\varphi'$ be a formulae obtained from $\varphi$ by replacing occurrence of every $\phi_i \in S$ by its corresponding witness $b_i$. 
Given any word $\rho$ over $\Sigma$, 
Let $\rho' \Downarrow \Sigma$ denote a word $\rho$ over $\Sigma$ obtained from $\rho'$ by hiding symbols from $W$. 
For any $i \in dom (\rho')$, $\rho', i \in T$,  $\rho',i \models \varphi' \iff \rho'\Downarrow \Sigma,i \models \varphi$. Hence, any pointed word $\rho,i$ satisfies $\varphi$ if and only if it is a projection on $\Sigma$ of a timed word $\rho' \in T$ and $\rho',i \models \varphi'$.
Note that $\varphi'$ is a modal depth 1 formulae of the form $\fregk_{I_1, \ldots, I_k} (\re'_1, \ldots, \re'_{k+1})(\Sigma \cup W)$. Hence, we can construct a 2-Way 1-ATA-rfl $\A'$ with only 1 island over $\Sigma \cup W$.  

To get an automata $\A$ equivalent to $\varphi$, we need to make sure that the it accepts all and only those words $\rho, i$ where $\rho$ is timed word over $\Sigma$ which can be obtained from a word $\rho'$ over $\Sigma \cup W$ in $T$ such that $\rho',i$ is accepted by $\A$.
This can be done by as follows. By induction hypothesis, for any subformulae $\phi_i \in S$, we can construct a 2-Way 1-ATA-rfl $\A_i = (2^\Sigma, Q^+_i, Q^-_i,q_i,F_i, \delta_i,\G_i)$ and $\Ac_i = (2^\Sigma, Q'^+_i,Q'^-, q'_i,F'_i, \delta'_i,\G'_i)$ such that $\A_i$ is equivalent to $\phi_i$ and $\Ac_i$ to $\neg \phi_i$. Let $S = \{\phi_1,\ldots, \phi_n\}$, $\mathcal{Q}^{\sim} = Q^{\sim}\cup Q^{\sim}_1\cup \ldots\cup Q^{\sim}_n \cup   Q'^{\sim}_1 \cup\ldots \cup Q'^{\sim}_n$ for $\sim \in \{{+,-}\}$, $\mathcal{F} = F\cup F_1\cup \ldots\cup F_n \cup   F'_1 \cup\ldots \cup F'_n$, $\G = \G'\cup \G_1\cup\ldots \G_n \cup \ldots \cup\G'_1 \cup\ldots \G'_n$. We now construct the required automata $\A$. Intuitively, every transition (not) labelled by $b_i$ is conjuncted with a new transition to ($q'_j$) $q_j$, respectively. 
$\A = (2^\Sigma, \mathcal{Q}^+, \mathcal{Q}^-, init, \mathcal{F}, \delta, \G)$ where for any $a \in \Sigma$, $g \in \G'$, if $q \in Q$ then
\\$\delta(q,a,g) = \bigvee \limits_{W' \subseteq W} [\delta'(q,a\cup W',g) \wedge \bigwedge \limits_{b_i \in W'} x.q_i \wedge \bigwedge \limits_{b_i \notin W'} x.q'_i]$, \\if $q \in Q_i$ then
$\delta(q,a,g) = \delta_i(q,a,g)$, if $q \in Q'_i$ then $\delta(q,a,g) = \delta'_i(q,a,g)$. 

Note that, by construction, each of $Q, Q_1, \ldots, Q_n, Q'_1,\ldots, Q'_n$ for islands of $\A$. Moreover, if $\varphi$ is non adjacent then island $Q$ uses non adjacent set of intervals as all its outgoing transitions use the same set of intervals as used by $\Ac$. Also if $\varphi$ is non adjacent then all the its subformulae in $S$ are non adjacent. By inductive hypothesis islands $Q_1, \ldots, Q_n, Q'_1,\ldots, Q'_n$ are also non adjacent. This proves the lemma.
Note that if $\varphi$ is a $\sregk$ formula then the initial island would have been a backward moving island.\subsection{Proof of Lemma \ref{lem:toatapnemtl}}
We apply induction on reset depth of $\A$. The key difference between reduction from 1-ATA-rfl to $\regmtl$ in \cite{KKP18} is in the reduction of single island (or reset free) Automata to an equivalent formulae (base case). In \cite{KKP18}, the reduction was via region abstraction of words. These abstractions do not preserve the non adjacency restriction. In this case, we use a coarser abstraction of interval words which helps us to preserve non-adjacency while reduction and hence get decidable fragments for 2-Way 1-ATA. We give the flow of the construction here. 

We break the construction into following steps:
\begin{enumerate}
    \item \textbf{Base Case}: For reset depth 0, $\A$ is reset free 2 way 1-ATA-rfl. Reduce $\A$ to AFA $\absa$ over $\I$ intervals words such that $\rho,i \models_\A init$ if and only if $\rho,i \in Time(L(A))$.
    \item Reduce 2-Way AFA $\absa$ to 1-way NFA $A$ over $\I$ interval words using result from \cite{OG14}.
    \item Show that given any NFA $A$ over $\I$ interval words, one can construct a $\pnregmtl$ formula $\varphi$ such that $\rho,i \in Time(L(A)) \iff \rho,i \models \varphi$. Hence, $\rho, i \models \varphi \iff \rho,i \models_\A (init, 0)$. Moreover, if $\I$ is a non adjacent set of intervals then $\varphi$ is a non adjacent PnEMTL formula.
    \item \textbf{Induction}: Replace all the lower level islands by witness propositions. Then apply the reduction as in base case. Finally, the witnesses are replaced by subformula equivalent to the corresponding automata.
    
\end{enumerate}
\label{app:toatapnemtl}
\subsubsection{Example To Present the Reduction from 2-Way 1-ATA-rfl to PnEMTL}
\label{app:eg}
\begin{figure}
    \centering
    \includegraphics[scale=0.5]{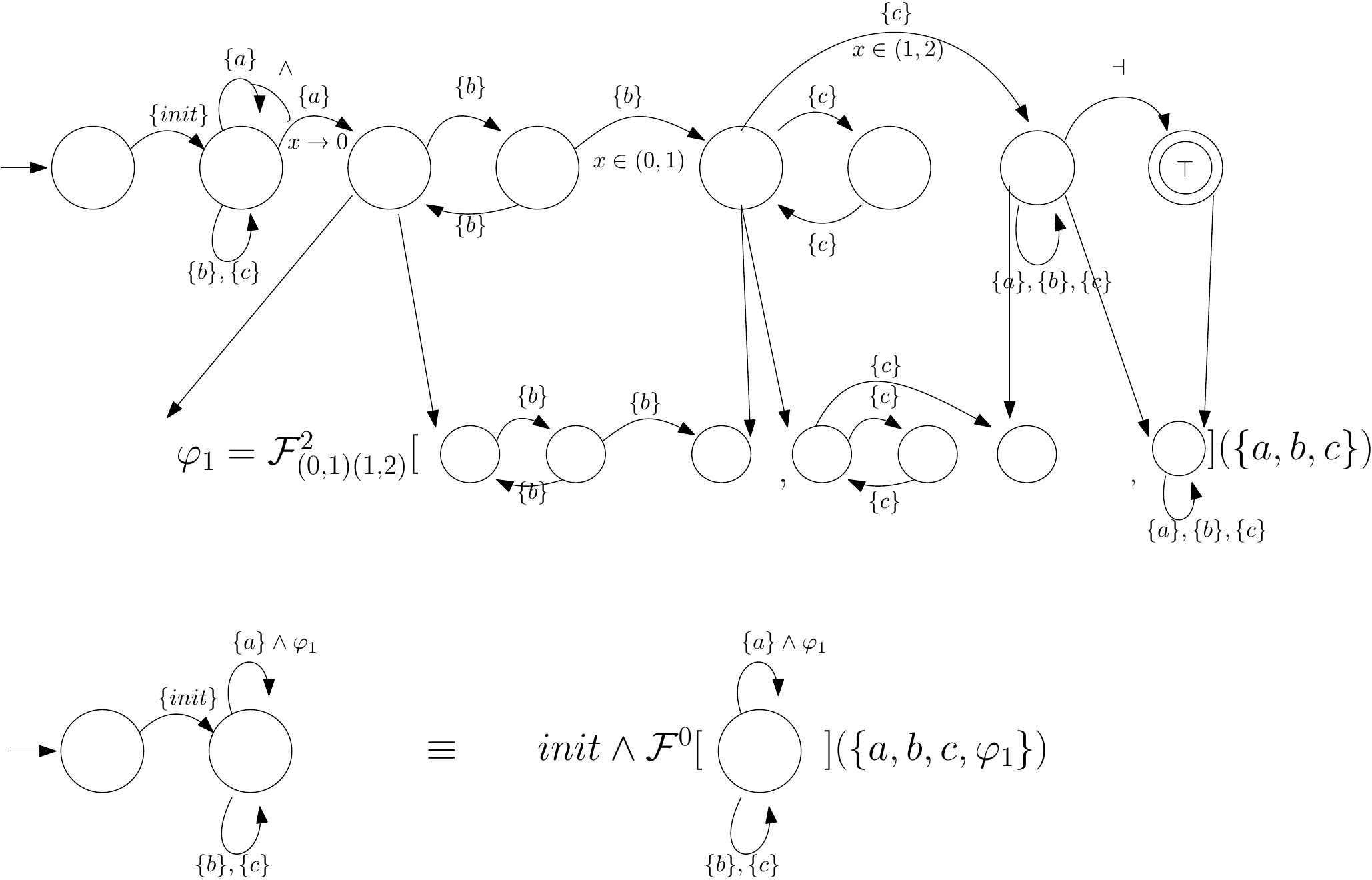}
    \caption{Example showing reduction from 2-Way 1-ATA-rfl to PnEMTL.}
    \label{fig:my_label}
\end{figure}
\subsubsection{From Reset Free 2-Way 1-ATA to AFA over interval words}
\label{app:toataafa}
We first give a construction of two way AFA over $\I$ interval words constructed from $\mathcal{A}$ as follows, denoted by $ABS(\mathcal{A})$, such that $\rho,i \in Time (L(\absa)) \iff  \rho,i \models_{\mathcal{A}} (init, 0)$. $\absa = (2^{\Sigma \cup \{\anch\} \cup I_\nu} , Q^+ \cup \{check, init'\}, Q^-, init', \top ,\bot, \delta)$ such that for any $q \in Q^+ \cup Q^-$, $a \in \Gamma \cup \{\lem, \rem\}$, $\J \subseteq \I$, $\delta(q, a\cup \J) = \bigvee \limits_{I \in \J} \Delta (q, a, x \in I)$, $\delta(q, a\cup  \{\anch\}) = \bigvee \limits_{0 \in I'} \Delta (q, a, x \in I')$.
That is, for every conjunction of outgoing edges from a location $q$ to a set of locations $Q'$ on reading $a \in \Gamma$ with guard $x \in I$  in $\mathcal{A}$, there is a conjunction of outgoing edges from state $q$ to $Q'$ on reading symbol $a \cup \J$ for any $\J \subseteq \I$ and $I \in \J$ or on reading a symbol $a \cup \anch$ if $0 \in I$. Moreover, for any $a \in \Gamma$ and $I \in I_\nu$,
\begin{itemize}
 \item $\delta(init', a\cup \J) = init'$, $\delta(init', a\cup \{anch\}) = check \wedge init$ : Continue to loop till the anchor point is encountered. After reading anchor point, spawn two locations , $check$ and $init$, simultaneously.
 \item $\delta(init', \rem) = \bot$: If no anchor point is encountered before the head reaches the right end marker, reject the word.
\item $\delta(check, a \cup \J) = check$, $\delta(check, \rem) = \top$, $\delta(check, a \cup \{anch\}) = (\bot, r)$, $\delta(init', \rem) = \bot$: Continue to loop on check after the first encounter of an anchor point. If another anchor point is encountered, reject the word.
\item The above conditions will makes sure that a word is accepted only if it has exactly one anchor point and thus is a valid $I_\nu$-interval word.\end{itemize}

\subsubsection{Proof of lemma \ref{lem:toataafa}-Converse Direction [$\Leftarrow$]} 
\label{app:toatafalem}
$\Leftarrow$ 
Let  $\rho  = (a_1,\tau_1) \ldots (a_m, \tau_m)$. Let $a_0 = \lem$, $a_{m+1} =\rem$, $\tau_0 = 0$ and $\tau_{m+1} = \tau_m$ and for any $0 \le j \le m+1$  let $T_j = \tau_j - \tau_i$.
Consider a word $w =a_1\cup \J_1 \ldots a_m\cup\J_m$ where $\J_i = \{anch\}$ and for all $j \in dom(w)$ and $j \ne i$, $\J'_j = {I | I \in \I \wedge T_j \in I}$.  Hence, $\J'_j$ is a maximal subset of $\I$ such that for all intervals $I \in \J_j$, $T_{j} \in I$. Clearly, $\rho,i \in Time(w)$. Moreover, $\delta(q, a_\h \cup \J_\h) = \tran(q, a_\h, T_\h)$ for any $0 \le h \le m+1$ and for any $q \in Q$. Let $s$ ($s'$) be any state of $\absa$ ( $\A$). Let $\{s\}$ be equivalent to $\{s'\}$.
Hence, for any configuration $C \in \succs_\delta(w,s)$ there exists a $\C \in \succs_\Delta(\rho,s')$ 
  such that $C$ is equivalent to $\C$. Moreover, for any configuration $\C \in \succs_\Delta(\rho, s')$ there exists a configuration $C' \in \succs_\delta(w,s)$ such that $C'$ is equivalent to $\C$. For any $g \ge 0$ and for any $\C \in \s_{\Delta}^{g}(\rho,\{(init, 0, i)\})$ there exists $C \in \s_{\delta}^{g}(w, \{(init,i)\})$ such that $C$ is equivalent to $\C$(obs 2).
For $g = 0$ the above statement is trivially true. Assume for $g = k$ the statement is true. Let $\C' = \{(q_1,\h_1, T_{\h_1}), \ldots, (q_n,\h_n, T_{\h_n})\}$ of $\A$ such that $\C' \in \s_{\Delta}^{k}(\rho, \{(init,0,i)\}\}$. 
Then by induction hypothesis, $C' = \{(q_1,\h_1), \ldots, (q_n,\h_n)\} \in \s_{\Delta}^{k}(w, \{(init,i)\})$. 
Any configuration $\C'' \in  \s_\Delta(\rho, \C')$ if and only if $\C'' = \C'_1 \cup \ldots \cup \C'_n$, where for any $1\le j \le n$, $\C'_j\in \succs_\A(q_i, T_{\h_i},\h_i)$. Let $C'' = C'_1\cup C'_n$ such that $C'_j$ is equivalent to $C'_j$ for all $1\le j \le n$.  By (obs 2), for any 
 $1\le j \le n$, $C'_j \in \succs_\delta(q_j,\h_j)$. As a result,  $C'' \in \s_\delta(\rho, \C')$. Hence, for any configuration $\C'' \in \s^{k+1}_{\A}(\rho,\{(init, 0, i)\})$ there exists a configuration $\C'' \in \s^{k+1}_{\Delta}(w,\{(init, i)\})$ such that $C''$ is equivalent to $\C''$. Hence, if $\rho, 0, i \models_\A init$ then there exists $w'$ such that $\rho,i \in Time(w')$ and $w', i \models_\absa init$.

\subsubsection{2-Way AFA to NFA}
\label{app:afanfa}
\begin{theorem}[\cite{OG14}]
\label{lem:afanfa}
For any 2-Way Alternating finite Automata $A$, one can construct a 1 Way Non Deterministic Finite Automata (NFA) $A'$ with at most exponential number of states.
\end{theorem}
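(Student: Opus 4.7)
The plan is to reconstruct the standard transition-profile construction that converts a 2-way alternating finite automaton directly into an equivalent 1-way NFA with a single exponential blow-up, rather than the naive doubly-exponential bound obtained by composing 2-way removal and alternation removal.

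First, I would fix a 2-way AFA $A = (\Sigma, Q, q_0, \delta, F)$ in a convenient normal form, distinguishing forward-moving and backward-moving states and ensuring the head cannot fall off the end markers (mirroring the setup used for $\toata$ in Section~\ref{sec:toata}). For every prefix $u$ of the input and every state $q \in Q$, I would define two kinds of \emph{behavior profiles}: a \emph{left-return profile} $L_u \subseteq Q \times 2^Q$ recording pairs $(q, S)$ such that, if the alternating computation enters the right boundary of $u$ in state $q$, then on the portion of the computation tree confined to $u$ it can eventually exit the right boundary with the multi-set of live states being exactly $S$; analogously a \emph{right-return profile} $R_u$ for computations that enter the left boundary of $u$. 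These profiles are the natural alternating generalizations of Shepherdson's crossing sequences and of Birget's transition tables, and each has size bounded by $2^{O(|Q|^2)}$.

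The key step is to show that these profiles can be updated compositionally when $u$ is extended by one symbol on the right, i.e.\ to define an update function $(L_u, R_u, a) \mapsto (L_{ua}, R_{ua})$. This requires solving a small alternating reachability problem inside $\{u, ua\}$ using the transitions of $A$ on $a$ and the profiles already computed for $u$: one computes the transitive closure of $\delta(\cdot, a, \cdot)$ combined with the ability to ``bounce back'' into $u$ captured by $L_u$ and $R_u$. Correctness follows by induction on the length of the alternating computation tree, viewing each subtree as a collection of segments that cross a fixed position of the tape. The 1-way NFA $A'$ then has as its states the possible profile pairs $(L_u, R_u)$, its initial state is the pair for the empty prefix, the transition relation is the update function above (nondeterministically guessing the eventual profile pair), and it accepts when the final profile for the full input witnesses an accepting alternating run starting from $q_0$ at the left end marker.

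The main obstacle is handling the alternation soundly in the composition step: universal branchings within $u$ must be resolved by supplying consistent continuations for \emph{all} children simultaneously, which is why profiles must record whole sets of live states $S$ rather than a single return state as in the nondeterministic 2-way case. Once this combinatorial point is handled, the state count of $A'$ is bounded by $2^{O(|Q|^2)}$, giving the claimed single-exponential blow-up and completing the construction.
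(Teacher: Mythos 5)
First, note that the paper does not prove this statement at all: it is imported verbatim from Geffert and Okhotin \cite{OG14} and used as a black box, so your proposal has to stand on its own as a proof of the cited result.

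The approach you take (Shepherdson/Birget-style behaviour tables generalized to alternation, composed prefix by prefix) is a reasonable strategy, but as written it has a genuine gap exactly where the theorem has its content, namely in the size bound. You define a profile $L_u$ as a \emph{relation} $L_u \subseteq Q \times 2^Q$, recording for each entry state $q$ all the exit-sets $S$ realizable by excursions into $u$. The number of such relations is $2^{|Q|\cdot 2^{|Q|}}$, which is doubly exponential in $|Q|$; the bound $2^{O(|Q|^2)}$ you claim simply does not follow, and a doubly-exponential NFA does not prove the theorem (it is no better than naively composing alternation removal with two-way removal, which is what you set out to avoid). To rescue the single-exponential bound you must argue that the profile can be collapsed to a \emph{partial function} $Q \rightharpoonup 2^Q$, committing to one exit-set per entry state: this is sound and complete because the target automaton is nondeterministic (so it may guess the committed behaviour) and because whether a boundary configuration $(q', |u|+1)$ leads to acceptance is independent of when it is reached, so a single realizable exit-set per $q$ can serve every excursion from $q$ in the run being certified. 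With that restriction the profile count drops to $(2^{|Q|}+1)^{|Q|} = 2^{O(|Q|^2)}$ and your construction goes through; without it, the proof establishes a strictly weaker statement. You should also say explicitly how non-halting branches are excluded (the profile must be defined as a least fixpoint over finite excursion trees, so that a universally-spawned branch that loops inside $u$ forever never contributes a certified exit). For comparison, \cite{OG14} avoids the relational profile altogether: they first convert the 2AFA into an equivalent one that halts on every input, and then have the 1NFA guess, position by position, the set of states from which the remaining computation accepts and check a local consistency condition between consecutive guesses, which directly yields a single-exponential state count.
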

We use above theorem to construct 1-Way NFA $A'$ equivalent to $\absa$.
\subsubsection{From NFA over Interval words to PnEMTL}
\label{app:nfapnemtl}
\begin{lemma}
\label{lem:nfatopnmtl} 
Given any NFA $A'$ over $\I$ interval words, we can construct a PnEMTL formula $\varphi$ such that $\rho,i \models \varphi \iff \rho, i \in Time(L(A'))$. Moreover, if $\I$ is non adjacent then $\varphi$ is a non adjacent PnEMTL formulae. 
\end{lemma}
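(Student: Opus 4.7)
The plan is to reduce $A'$ to a disjunction of PnEMTL conjuncts of the form ``future-$\fregk$ $\wedge$ past-$\sregk$'', one per skeleton of an accepting interval word. First, by Lemma \ref{lem:normalize}, $Time(L(A'))$ is unchanged if we restrict $A'$ to accept only normalized $\I$-interval words; call this trimmed NFA $A''$. In a normalized word, the anchor plus at most $2|\I|$ time-restricted points on each side of the anchor carry every interval label, while all remaining positions carry labels purely in $2^\Sigma$. Any accepting run of $A''$ on such a word is therefore determined by its \emph{skeleton}: the state $q_0$ reached at the anchor, ordered sequences (of length at most $2|\I|$) of future and past boundary positions with the subset $\J_w \subseteq \I$ of interval labels at each boundary, and the states of $A''$ attained at each boundary.

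For each fixed skeleton I would produce one PnEMTL conjunct as follows. The future part is captured by $\fregk_{I_1,\ldots,I_p}(\re_1,\ldots,\re_{p+1})(\Sigma)$, where each $I_w$ is an interval chosen from $\J_w$ (so that the timing constraint $\tau_{i_w} - \tau_{i_0} \in I_w$ of the modality matches one of the interval labels at that boundary) and each $\re_w$ is a ``segment NFA'' derived from $A''$ that simulates $A''$'s transitions on pure-$\Sigma$ letters between the guessed boundary states and, on its final letter, also reads the boundary label $\J_w$. The past is handled by a symmetrically built $\sregk$ subformula, and an additional propositional conjunct fixes the required $\Sigma$-label at the current position (the anchor). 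Taking the disjunction over all skeletons yields $\varphi$. Correctness follows because every accepting run of $A''$ on a normalized word determines a unique skeleton and, conversely, every skeleton witnessed by some $\rho, i$ can be lifted to a normalized interval word $w \in L(A'')$ with $\rho, i \in Time(w)$. For the non-adjacency claim, observe that every interval appearing inside any $\fregk$ or $\sregk$ modality of $\varphi$ is drawn directly from $\I$, since each boundary is by definition the first or last $I$-time-restricted point for some $I \in \I$; hence if $\I$ is non-adjacent, $\varphi$ lies in NA-PnEMTL.

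The main obstacle will be stitching the past and future runs of $A''$ through the anchor, because $\fregk$ and $\sregk$ are independent modalities evaluated at the same reference point: the anchor state $q_0$ must be guessed explicitly and threaded into both the initial state of the first future segment NFA and the (reversed) initial state of the first past segment NFA, while propositional atoms at the current position pin down the anchor label. A secondary technical point is designing the segment NFAs $\re_w$ to simultaneously simulate $A''$ on $\Sigma$-letters inside the segment and consume the correct boundary label $\J_w \subseteq \I$ on the final letter; this is routine finite-state bookkeeping but drives the case analysis.
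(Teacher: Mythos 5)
Your overall architecture matches the paper's: pass to normalized interval words so that only boundedly many positions carry interval labels, enumerate the finitely many ``skeletons'' (the paper's types $\seq$ together with the transition sequences $\Qseq$), emit one $\fregk\wedge\sregk$ conjunct per skeleton with segment automata cut out of the NFA, and disjoin. However, there is a genuine gap in how you pass to normalized words. You propose to ``restrict $A'$ to accept only normalized $\I$-interval words,'' i.e.\ to intersect $L(A')$ with the set of normalized words. That operation does not preserve $Time(\cdot)$: Lemma~\ref{lem:normalize} says every word is \emph{congruent} to its normalization, but the normalization of a word in $L(A')$ need not itself lie in $L(A')$. For instance, if $A'$ only accepts words with three $I$-time-restricted points, your trimmed $A''$ accepts nothing while $Time(L(A'))\neq\emptyset$. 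What you actually need is an NFA whose language is the normalization \emph{image} $\normalize(\col(L(A')))$ (restricted to each type), and building it is the real technical work: the automaton must, while reading a normalized word, nondeterministically re-insert the interval labels that normalization deleted at intermediate positions. This is precisely the paper's $A_\seq$ construction with state space $Q\times\{1,\dots,|\seq|+1\}$ and the two simulation lemmas proving $L(A_\seq)=\normalize(L(A)\cap W_\seq)$; your ``conversely, every skeleton witnessed by some $\rho,i$ can be lifted to a normalized word in $L(A'')$'' is exactly the claim that fails without it.

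A second, smaller problem is your handling of boundary labels $\J_w\subseteq\I$ with $|\J_w|\ge 2$. The timing information of an interval word cannot be read by the segment automata (the $\fregk/\sregk$ automata run over $\mathsf{Seg}(\rho,\cdot,\cdot,\Sigma)$, which contains no interval labels), so the entire constraint at a boundary must be pushed into the modality's interval subscript. Choosing ``an interval from $\J_w$'' enforces only one of the constraints and breaks the direction $\rho,i\models\varphi\Rightarrow\rho,i\in Time(L(A'))$; you must use $\bigcap\J_w$. The paper does this via the collapse step, producing intervals in the intersection closure $\Clcap(\I)$ rather than in $\I$, and then needs the (easy but necessary) observation that the intersection closure of a non-adjacent set is non-adjacent. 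Your non-adjacency argument, which claims all intervals are ``drawn directly from $\I$,'' therefore also needs repair, though the conclusion survives.
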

We encourage readers to first read section \ref{app:interval word} to for definition and notations not introduced in the main paper but is used in this proof.
\smallskip

\textbf{Automata over Collapsed Interval Word}- From $A'$, we  construct an automaton $A{=}(Q, \init, 2^{\Sigma'}, \delta, F)$  s.t. $L(A){=}\col(L(A_\alpha))$. For any $q,q' \in Q$, $S \subseteq \Sigma'$, $\I \subseteq I_\nu$, $(q,S \cup \I,q') \in  \delta'$ iff $(q,S\cup\{I\},q') \in  \delta$ where $I{=}\bigcap \I$,   
$(q,\{a, \anch\},q') \in  \delta'$  iff $(q,\{a,\anch\},q') \in  \delta$. $A$ is obtained from $A'$ by replacing  $\bigcap \I$ in place of $\I$ on the transitions. This gives $L(A){=}\col(L(A'))$. 
\smallskip

\textbf{Partitioning Interval Words}-
We discuss here how to partition $W$, the set of all $I_\nu$ interval words  using  some finite sequences $\seq$ over $I_{\nu} \cup \{\anch\}$.
For any collapsed $w \in W$, $\seq$  gives an ordering between $\anch (w)$, $\first(w,I)$ and $\last(w,I)$ for all $I \in I_\nu$, such that, any $I \in I_{\nu}$ appears exactly twice 
and $\anch$ appears exactly once in $\seq$. For instance, $\seq=I_1 I_1 \anch  I_2 I_2$ is a sequence 
different from $\seq'=I_1I_2 \anch I_2I_1$ since the relative orderings between the first and last occurrences of $I_1, I_2$ and $\anch$ differ in both. Let  $\mathcal{T}(I_\nu)$ be the set of all such sequences;  by definition, $\mathcal{T}(I_\nu)$ is finite. 
Given $w \in W$, let $\boundaryint(w){=}\{i_1, i_2, \ldots, i_k\}$ be the 
positions of $w$ which are either $\first(w,I)$ or $\last(w,I)$ for some $I \in I_{\nu}$ 
or is $\anch(w)$. Let $w\downarrow_{\boundaryint(w)}$ be the subword of $w$ obtained by projecting $w$ to the positions in $\boundaryint(w)$, restricted to the sub alphabet $2^{I_{\nu}}\cup\{\anch\}$. For example, $w=\{a,I_1\}\{b,I_1\}\{c,I_2\}\{\anch,a\}\{b,I_1\}\{b,I_2\}\{c,I_2\}$
gives $w\downarrow_{\boundaryint(w)}$ as $I_1I_2\anch I_1I_2$. 
Then $w$ is in the partition $W_{\seq}$ iff  $w\downarrow_{\boundaryint(w)}=\seq$. Clearly, $W=\uplus_{\seq \in \mathcal{T}(I_\nu)}W_{\seq}$.
Continuing with the example above, $w$ is a collapsed $\{I_1, I_2\}$-interval word over $\{a,b,c\}$, with $\boundaryint(w)=\{1,3,4,5,7\}$, and $w\in W_{\seq}$ for $\seq=I_1I_2\anch I_1 I_2$, while 
$w\notin W_{\seq'}$ for $\seq'=I_1I_1\anch I_2 I_2$. 
Finally, all the timed words abstracted by interval words in a partition $W_{\seq}$ for $\seq{=}I_1'\ldots I_m' \anch I_1 \ldots I_n$ 
is expressed using (disjunction of) formulae of the form $\fregm^n_{I_1,{\ldots}, I_n} (\re_1,{\ldots}, \re_{n+1}){\wedge} \sregm^m_{I'_1, {\ldots}, I'_m} (\re'_1,{\ldots}, \re'_{m+1})$. 
\smallskip

\textbf{Construction of NFA for each type}-
Let $\seq$ be any sequence in $\mathcal{T}(I_\nu)$. In this section, given $A{=}(Q, \init, 2^{\Sigma'}, \delta', F)$ as constructed above, we construct an NFA 
$A_\seq{=}(Q \times \{1,2,\ldots |\seq|+1\}\cup \{\bot\}, (\init, 1), 2^{\Sigma'}, \delta_\seq, F \times \{|\seq|+1\})$ such that $L(A_\seq){=}\normalize(L(A) \cap W_\seq)$. 
Intuitively, the second element of the state indicates the next time restricted point expected to be read. 
More precisely, from any state $(q,j)$ the automaton does not have any transition on a time restricted point labelled $S$ if $S \cap \seq[j]{=}\emptyset$. Moreover, from any state $(q,j)$, on reading an unrestricted point of the form $S \subseteq \Sigma$, it non determinstically proceeds to a state $(q',j)$ if and only if, in automaton $A$, there is a transition of the form $q \stackrel{S\cup{J}}{\rightarrow}q'$ where $J{=}\emptyset$ or $J$ is any interval in $I_\nu$ such that $\first(J,w)$ has already been read and $\last (J,w)$ is yet to be read in the future.  

Let $\seq$ be any sequence in $\mathcal{T}(I_\nu)$. Given $A{=}(Q, \init, 2^{\Sigma'}, \delta', F)$ over collapsed interval words from LTL formula $\alpha$. We construct an NFA  $A_\seq{=}(Q \times \{1,2,\ldots |\seq|+1\}\cup \{\bot\}, (\init, 1), 2^{\Sigma'}, \delta_\seq, F \times \{|\seq|+1\})$ such that $L(A_\seq){=}\normalize(L(A) \cap W_\seq)$.

For any $(q,i) \in  Q \times \{1, \ldots, |\seq|+1 \}$,  $S \in 2^{\Sigma \cup I_\nu \cup \anch}$ and $I \in I_\nu \cup \{\anch\}$ such that $\seq[i]{=}I$, $\delta_\seq$ is defined as follows:
\\$\bullet$ If $1\le i \le |\seq|$ 
\begin{itemize}
    \item {\bf(i)} If $\seq[i] \in S$, then $\delta_\seq((q,i), S){=}\delta(q, S) \times \{i+1\}$
\item {\bf(ii)} If $\seq[i] \notin S \wedge S \setminus \Sigma \ne \emptyset$, then $\delta_\seq((q,i), S){=}\emptyset$ 
\item {\bf(iii)} If $S \setminus \Sigma{=}\emptyset$, then $\delta_\seq((q,i), S){=}[\bigcup \limits_{I’ \in \I_i} \delta (q,S \cup \{I'\}) \cup \delta_\seq(q, S)] \times \{i\}$ where $\I_i{=}\{I' | I' \in I_{\nu} \wedge \exists i',i''. i' < i \le i''$, $\seq[i'] {=}\seq[i'']{=}I'\}$.
\end{itemize}
$\bullet$ If $i{=}|\seq|{+}1$, $\delta_\seq((q, i), S){=}\emptyset$ if 
$S \setminus \Sigma \ne \emptyset$, 
$\delta_\seq((q,i), S){=}\delta(q, S)$ if 
$S \setminus \Sigma{=}\emptyset$

Let $W_\seq$ be all the set of $I_\nu$ intervals words over $\Sigma$ of type $\seq$.

\begin{lemma}
$L(A_\seq){=}\normalize(L(A) \cap W_\seq)$. Hence, $\bigcup \limits_{\seq \in \mathcal{T}(I_\nu)} L(A_\seq) {=}\normalize(L(A))$.
\label{lem:nfatonfaseq}
\end{lemma}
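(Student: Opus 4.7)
}
My plan is to prove the two inclusions $L(A_\seq) \subseteq \normalize(L(A) \cap W_\seq)$ and $\normalize(L(A) \cap W_\seq) \subseteq L(A_\seq)$ by a direct simulation argument. The key invariant is that the second component $i$ of a state $(q,i)$ of $A_\seq$ is the index in $\seq$ of the next boundary letter still to be read; this links runs of $A$ on a collapsed word $w_{col}\in W_\seq$ with runs of $A_\seq$ on its normalization $w_{nor}=\normalize(w_{col})$. Note that $\normalize$ preserves the domain and the propositional content at every position, and also preserves the first and last $I$-time-restricted positions for every $I\in I_\nu$; it only deletes interval labels at strictly intermediate time-restricted points.

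For the inclusion $\normalize(L(A) \cap W_\seq) \subseteq L(A_\seq)$, given $w_{col}\in L(A)\cap W_\seq$ with an accepting run $q_0 q_1 \ldots q_n$ of $A$ on $w_{col}$, I would construct the run $(q_0,1),(q_1,i_1),\ldots,(q_n,i_n)$ of $A_\seq$ on $w_{nor}$, where $i_j = 1 + |\{k\le j : k\in\boundaryint(w_{col})\}|$. For each step, I would case-split on whether position $j{+}1$ is a boundary in $w_{col}$. If it is, then $\seq[i_j]\in w_{nor}[j{+}1]$ by the definition of $W_\seq$ and rule~(i) supplies the transition with index bumped to $i_j+1$. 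Otherwise $w_{nor}[j{+}1]\subseteq\Sigma$ and rule~(iii) applies at the same index $i_j$; if $w_{col}[j{+}1]=w_{nor}[j{+}1]\cup\{I'\}$ for some interval $I'$, then position $j{+}1$ lies strictly between $\first(w_{col},I')$ and $\last(w_{col},I')$, and translating these positions into their sequence indices $i', i''$ gives $i'<i_j\le i''$, i.e.\ $I'\in\I_{i_j}$, so the transition $\delta(q_j, w_{col}[j{+}1])$ is indeed a subset of $\delta_\seq((q_j,i_j), w_{nor}[j{+}1])$.

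For the reverse inclusion $L(A_\seq) \subseteq \normalize(L(A)\cap W_\seq)$, starting from an accepting run $(q_0,1)(q_1,i_1)\ldots(q_n,i_n)$ of $A_\seq$ on $w_{nor}$, I would reconstruct $w_{col}$ position by position: whenever rule~(i) fired, set $w_{col}[j{+}1]:=w_{nor}[j{+}1]$; whenever rule~(iii) fired via $\delta(q_j,w_{nor}[j{+}1])$, again set $w_{col}[j{+}1]:=w_{nor}[j{+}1]$; whenever rule~(iii) fired via $\delta(q_j,w_{nor}[j{+}1]\cup\{I'\})$ with $I'\in\I_{i_j}$, set $w_{col}[j{+}1]:=w_{nor}[j{+}1]\cup\{I'\}$. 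By construction $q_0 q_1\ldots q_n$ is an accepting run of $A$ on $w_{col}$. The defining condition $I'\in\I_{i_j}$ guarantees that any freshly inserted $I'$ label sits strictly between the previously processed first-$I'$ boundary and the yet-to-be processed last-$I'$ boundary, so $\first(w_{col},I')=\first(w_{nor},I')$ and $\last(w_{col},I')=\last(w_{nor},I')$; hence $w_{col}\in W_\seq$ and $\normalize(w_{col})=w_{nor}$.

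The anticipated obstacle is the bookkeeping that guarantees the boundary structure of the reconstructed $w_{col}$ matches $\seq$ exactly. One must verify that no unintended first or last $I'$-time-restricted point is created (which is exactly where the constraint $i'<i\le i''$ in the definition of $\I_i$ earns its keep) and that the sequence index $i_j$ remains synchronised with the boundary count of $w_{col}$ along the simulation. Once both directions are established, the union statement $\bigcup_{\seq\in\mathcal{T}(I_\nu)} L(A_\seq)=\normalize(L(A))$ follows immediately from the partition $W=\uplus_{\seq\in\mathcal{T}(I_\nu)} W_\seq$ and the observation that normalization commutes with this partition, since $\normalize$ fixes the boundary sequence of a collapsed word.
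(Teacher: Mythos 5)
Your proposal is correct and follows essentially the same route as the paper: both directions are proved by a step-by-step simulation whose invariant is that the counter component of $A_\seq$'s state equals one plus the number of boundary positions of the collapsed word read so far, with the forward direction lifting a run of $A$ on $w_{col}$ to a run of $A_\seq$ on $\normalize(w_{col})$, and the reverse direction reconstructing a collapsed pre-image $w_{col}$ with $\normalize(w_{col})=w_{nor}$ by re-inserting interval labels exactly when rule (iii) fires via $\delta(q,S\cup\{I'\})$ with $I'\in\I_{i}$ (the paper's Lemmas \ref{lem:AsubsetAseq} and \ref{lem:AseqsubsetA}, plus Proposition \ref{prop: aseqnorm} for the fact that accepted words are already normalized and of type $\seq$). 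The obstacle you flag — that the condition $i'<i\le i''$ is precisely what prevents creating a spurious first or last $I'$-restricted point — is exactly the point the paper's proof also turns on.
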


\begin{proof}
Let $w$ be any collapsed timed word of type $\seq$ and $w'{=}\normalize (w)$. Let $\bseq(w){=}\bseq(w')= i_1 i_2 \ldots i_n$ be the 
boundary positions. 
\begin{itemize}
    \item 
(i) If a state $q$ is reachable by $A$ on reading first $j$ letters of $w$, then $(q, k)$ is reachable by $A_\seq$ on reading the corresponding first $j$ letters of $w'$ where $i_{k-1} < j < i_{k}$.
\item (ii) If a state $(q,k)$ is reachable by $A_\seq$ on reading first $j$ letters of $w'$, then $q$ is reachable by $A_\seq$ on reading the corresponding first $j$ letters of $w$ and $i_{k-1} < j < i_{k}$.
\end{itemize}
The above two statements imply that on reading any word $w \in W_\seq$, $A$ reaches the final state if and only if $A'$ reaches the final state on reading $w'{=}\normalize(w)$.
Statement (i) and (ii) are formally proved in Lemma \ref{lem:AsubsetAseq} and Lemma \ref{lem:AseqsubsetA}, respectively. By (i) and (ii), we get 
$L(A_\seq) \cap W_\seq \supseteq \normalize(L(A) \cap W_\seq)$.

(1). By Proposition \ref{prop: aseqnorm} (below) $L(A_\seq) \subseteq \normalize (W_\seq) \subseteq W_\seq$, and (2). $L(A_\seq) \cap W_\seq{=}L(A_\seq)$.  Hence, by (1) and (2), $L(A_\seq) \subseteq \normalize(L(A) \cap W_\seq)$.
\end{proof}

\begin{proposition}
\label{prop: aseqnorm}
$L(A_\seq) \subseteq \normalize(W_\seq)$
\end{proposition}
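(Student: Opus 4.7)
The plan is to prove the statement by a loop invariant that I would establish by induction over the length of an accepting run of $A_\seq$. Fix any $w \in L(A_\seq)$ with accepting run $(q_0, k_0) \to (q_1, k_1) \to \cdots \to (q_n, k_n)$ where $k_0 = 1$ and $k_n = |\seq|+1$. I would prove by induction on $j$ that after reading the prefix $w[1..j]$: (a) the projection $w[1..j]\downarrow_{\boundaryint(w[1..j])}$ equals $\seq[1..k_j-1]$, and (b) each time-restricted position of $w[1..j]$ carries exactly one element of $I_\nu \cup \{\anch\}$. The base case $j = 0$ is trivial, and each inductive step splits into the three cases of the definition of $\delta_\seq$ at $(q_j, k_j)$ on $w[j{+}1]$.

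The three cases propagate the invariant as follows. In case (i), the read symbol $S$ has $\seq[k_j] \in S$ and the counter advances to $k_j+1$; part (a) is preserved because $\seq[k_j]$ is by construction the next expected boundary element, and part (b) is preserved because the collapsed automaton $A$ (obtained from $A'$ by intersecting all interval labels on each transition) has $\delta(q_j, S) = \emptyset$ whenever $S$ contains two distinct elements of $\Clcap(I_\nu)$, so the $A_\seq$-transition can only fire when $S \cap (I_\nu \cup \{\anch\})$ is the singleton $\{\seq[k_j]\}$. Case (ii) cannot appear on an accepting run at all, since it yields no transition; this is precisely what prevents the input from containing a ``wrong'' time-restricted point at step $j{+}1$. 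Case (iii) applies only to purely unrestricted points ($S \subseteq \Sigma$), which trivially preserve both (a) and (b) since the counter does not advance and no boundary position is added.

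Once the invariant is established, $k_n = |\seq|+1$ forces $w\downarrow_{\boundaryint(w)} = \seq$, giving $w \in W_\seq$. To conclude $w \in \normalize(W_\seq)$, I would then verify that $w$ has no \emph{intermediate} $I$-restricted position for any $I \in I_\nu$: letting $i' < i''$ be the two positions of $I$ in $\seq$, the counter advances past $i'$ exactly at $\first(w,I)$ and past $i''$ exactly at $\last(w,I)$, and any putative $I$-restricted input point read in between would land in case (ii) (since $\seq[k] \neq I$ for $i' < k < i''$), contradicting acceptance; the analogous argument for uniqueness of $\anch$ is immediate. The main obstacle I anticipate is the argument for part (b): one must carefully unwind the collapsing construction $A = \col(A')$ to conclude $\delta(q, S) = \emptyset$ when $S$ carries two distinct interval labels, and, in case (iii), keep straight the distinction between the \emph{input} point being unrestricted and $A_\seq$ internally simulating $A$'s transitions on $S \cup \{I'\}$ for active intervals $I'$.
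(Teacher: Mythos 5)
Your proof is correct and follows essentially the same route as the paper's: the paper's own argument is precisely the observation that the counter advances only on reading the next expected element of $\seq$ and blocks (your case (ii)) on any other time-restricted symbol, so an accepting run forces exactly $|\seq|$ time-restricted points occurring in the order prescribed by $\seq$, which is exactly the definition of a normalized word of type $\seq$. Your version just makes the induction explicit and additionally spells out the single-interval-label point via the collapsed automaton $A$ having no transitions on multi-interval symbols, a detail the paper leaves implicit.
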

\begin{proof}
Let $Q_i{=}Q{\times}\{i\}$. By construction of $A_\seq$, transition from a state in $Q_{i}$ to $Q_{i'}$, where $i{\ne}i'$ happens only on reading an interval $I{=}\seq [i]$\footnote{Let $I$ be any symbol in $I_\nu \cup \{\anch\}$. By ``reading of an interval $I$'' we mean ``reading a symbol $S$ containing interval $I$''.}. Moreover, $i'{=}i{+}1$. Thus, any word $w$ is accepted by $A_\seq$ only if
there exists $1{\le} i_1{<}i_2{<}\ldots{<}i_{|\seq|}{\le}|w|$ such that $w[i_k] \setminus \Sigma{=}\{\seq[i_k]\}$ and all other points except $\{i_1,\ldots,i_k\}$ are unrestricted points. This implies, $w{\in}L(A_\seq){\rightarrow}w{\in}\normalize(W_\seq)$.
\end{proof}

Let the set of the states reachable from initial state, $\init$, of any NFA $C$ on reading first $j$ letters of a word $w$ be denoted as $C<w,j>$. Hence, $A<w, 0> = \{\init\}$ and $A_\seq<w,0> = \{(\init, 1)\}$.
\begin{lemma}
\label{lem:AsubsetAseq}
Let $w$ be any collapsed $I_\nu$ interval word of type $\seq$ and $\bseq(w)  = i_1 i_2 \ldots i_n$. Let $w' = \normalize(w)$. Hence, $\bseq(w) = \bseq(w')$. For any $q \in Q$, $q \in A<w, j>$ implies $ (q, k) \in A_\seq<w', j>$ where $i_{k-1} < j < i_{k}$.
\end{lemma}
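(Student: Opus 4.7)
The plan is to prove the statement by induction on $j$. For the base case $j=0$ we have $A\langle w,0\rangle = \{\init\}$ and $A_\seq\langle w',0\rangle = \{(\init,1)\}$, and it suffices to verify with a boundary convention $i_0 = 0$ (or treating $j=0$ as ``before the first boundary'') that $k=1$ satisfies $i_{k-1} \le 0 < i_1$. Since $w'$ is normalized, position $1$ is never a boundary except possibly if $i_1 = 1$, which is handled by viewing the basis case as the pre-read state before any symbol has been processed.

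For the inductive step, suppose the claim holds for $j$, and let $q' \in A\langle w,j{+}1\rangle$ witnessed by some $q \in A\langle w,j\rangle$ with $q' \in \delta(q, w[j{+}1])$. By the induction hypothesis there exists $k$ such that $(q,k) \in A_\seq\langle w',j\rangle$ and $i_{k-1} < j < i_k$ (or the relevant boundary version). I would split into two cases. Case A, when $j{+}1$ is not a boundary position of $w'$: here $w'[j{+}1] \setminus \Sigma = \emptyset$, while $w[j{+}1]$ may carry a collapsed interval $I'$ because $j{+}1$ could still be an intermediate $I'$-time restricted point in $w$. Since $\first(w,I') < j{+}1 < \last(w,I')$, and $\bseq(w) = \bseq(w')$, the occurrences of $I'$ in $\seq$ straddle index $k$, so $I' \in \I_k$ in the sense of clause (iii) of $\delta_\seq$. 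Consequently $(q',k) \in \delta_\seq((q,k), w'[j{+}1])$, giving $(q',k) \in A_\seq\langle w',j{+}1\rangle$, and the strict inequalities still place $j{+}1$ in the open interval $(i_{k-1}, i_k)$. Case B, when $j{+}1 = i_k$ is a boundary position: here $w'[j{+}1]$ contains exactly the boundary symbol $\seq[k]$, and by clause (i) of $\delta_\seq$ the transition $\delta_\seq((q,k), w'[j{+}1]) = \delta(q, w'[j{+}1]) \times \{k{+}1\}$ fires, yielding $(q',k{+}1) \in A_\seq\langle w', j{+}1\rangle$ with $i_k \le j{+}1 < i_{k+1}$.

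The main technical obstacle I anticipate is reconciling the labels of $w$ and $w'$ at the positions being read: $w$ may carry intermediate interval labels that have been stripped by normalization, and conversely boundary positions must agree on which collapsed interval (or anchor) is present. The crucial lemma to invoke is that $\bseq(w) = \bseq(w')$ combined with the characterization of $\I_k$ as exactly the set of intervals whose first occurrence in $\seq$ is before index $k$ and whose last occurrence is at or beyond $k$ — this is precisely the set of intervals that are ``active'' in the open span $(i_{k-1}, i_k)$ and hence can legitimately label an intermediate time-restricted point of $w$ that lies between the $(k{-}1)$-th and $k$-th boundaries. A minor housekeeping point is the handling of the inequality convention ($<$ vs.\ $\le$) at $j = i_{k-1}$; I would fix the convention once at the start of the proof so that the induction step in Case B cleanly updates the index from $k$ to $k{+}1$.
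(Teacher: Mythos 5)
Your proof is correct and follows essentially the same route as the paper's: induction on the prefix length, with the boundary case advancing the counter via clause (i) of $\delta_\seq$ and the non-boundary case keeping the counter fixed, using exactly the paper's key observation that an intermediate $I'$-restricted point of $w$ strictly between the $(k{-}1)$-th and $k$-th boundaries forces the two occurrences of $I'$ in $\seq$ to straddle index $k$, so $I' \in \I_k$ and clause (iii) applies. Your handling of the $\le$ versus $<$ convention matches the invariant $i_{k-1} \le j < i_k$ that the paper actually maintains in its induction.
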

\begin{proof}
Recall that $\bseq$ is the sequence of boundary points in order. 
We apply induction on the number of letters read, $j$. Note that for $j = 0$, by definition, $A<w, 0> = \{\init\}$ and $A_\seq(\normalize(w), 0) = \{(\init, 1)\}$ the statement trivially holds as $0 < i_1 \ldots <i_n$. Let us assume that for some $m$, for every state $q \in A<w, m>$  there exists  $(q, k) \in A_\seq<\normalize(w), m>$ such that $i_1 <\ldots <i_{k-1} \le m < i_k <\ldots i_n$. Now let $j = m+1$. 
Let us assume that $q'$ is any state in $A<w,m+1>$. We just need to show that for some $(q',k') \in A_\seq<w', m+1>$ where $k' = k+1$ if $m+1 \in \boundaryint(w)$. Else $k' = k$. 

As $q' \in A<w, m+1>$, there exists a state $q \in A<w,m>$ such that $q' \in \delta (q, w[m+1])$. By induction hypothesis, $(q, k) \in A_\seq<\normalize(w), m>$.  Note that $(q', k') \in \delta_\seq((q,k), w'[m+1])$ implies $(q',k') \in A_\seq<w', m+1>$.
Let $w[m+1] = S_J$, where $S_J \subseteq \Sigma \cup I_\nu \cup \{\anch\}$ and $S_J \setminus \Sigma$ contains at most 1 element.
\\\textbf{Case 1}: $m+1 \in \boundaryint(w)$. This implies that $w'[m+1] = w[m+1]$. As both $w$ and $w'$ are of type $\seq$, $\{\seq[k]\} = S_J \setminus \Sigma$(by definition of $\seq$). Hence, by construction of $A_\seq$, $\delta_\seq((q,k),S_J) = \delta(q,S_J) \times \{k+1\}$. As $q' \in \delta(q, S_J)$,  $(q',k+1) \in \delta_\seq((q,k),S_J)$.
\\\textbf{Case 2}: $m+1 \notin \boundaryint(w)$. This implies that $w'[m+1] = S = S_J \cap \Sigma $. \\

\noindent \textit{Case2.1}:$S = S_J$. By construction of $A_\seq$, $\delta ((q,k), S) \supseteq \delta (q, S) \times {k}$. Thus, $(q', k) \in \delta_\seq((q,k),S_J \cap \Sigma)$.\\

\noindent \textit{Case2.2}:$S \ne S_J$. Let $S_J \setminus \Sigma = \{J\}$ where $J \in I_\nu \cup \{\anch\}$. Then $m+1$ is neither the first nor the last $J-$ time restricted point nor the anchor point in $w$. Hence, $\first(J, w) < m+1 < \last(J,w)$. By induction hypothesis, $i_{k-1}\le m < i_{k}$. Note, as $m+1$ is not in $\boundaryint(w)$, $m+1 \ne i_k$. Hence, $i_{k-1} \le m < m+1 < i_k$. This implies, $\first (J,w) < i_k \le \last(J,w)$. By definition of $\seq$, there exists $k'$ and $k''$ such that $k' < k \le k''$ and $\seq[k'] = \seq[k''] = J$. Hence, by construction of $\delta_\seq$, $\delta_\seq((q,k), S) \supseteq \delta(a,S_J) \times \{k\}$.
Hence $(q', k) \in \delta_\seq((q,k), S_J)$.
\end{proof}
\begin{lemma}
\label{lem:AseqsubsetA} 
Let $w'$ be any normalized $I_\nu$ interval word of type $\seq$ and \\$\bseq(w')  = i_1 i_2 \ldots i_n$. Let $i_0 = 0$. For any $q \in Q$, $(q,k) \in A_\seq<w', j>$ implies there exists a collapsed $I_\nu$ interval word $w$, such that $\normalize(w) = w'$, $q \in A_\seq<w, j>$  and $i_{k-1}< j \le i_k$
\end{lemma}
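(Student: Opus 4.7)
\medskip
\noindent\textbf{Proof proposal for Lemma \ref{lem:AseqsubsetA}.}
The plan is to mirror the induction in Lemma \ref{lem:AsubsetAseq}, but in the reverse direction: instead of lifting an $A$-run on $w$ to an $A_\seq$-run on $w'=\normalize(w)$, we will peel an $A_\seq$-run on the normalized word $w'$ back down to an $A$-run on a suitable ``un-normalized'' witness $w$. Since normalization preserves the length of a word and only edits the interval labels at non-boundary positions (keeping the first and last $I$-restricted points unchanged), we may fix $|w|=|w'|$ throughout and build $w$ position by position as a refinement of $w'$ that introduces only \emph{intermediate} occurrences of intervals from $I_\nu$.

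The induction is on the number of letters read $j$. For $j=0$ the claim is trivial with the empty prefix, $q=\init$, $k=1$ and the convention $i_0=0$ (with the inequality reading $i_0 \le 0 \le i_1$). For the inductive step, suppose $(q',k') \in A_\seq\langle w', j{+}1\rangle$ via a transition $(q',k') \in \delta_\seq((q,k), w'[j{+}1])$ from some $(q,k) \in A_\seq\langle w', j\rangle$, and by induction obtain a prefix $w_{1..j}$ with $q \in A\langle w_{1..j}, j\rangle$, $i_{k-1} < j \le i_k$, and such that all ``extra'' interval labels placed in $w_{1..j}$ are on positions that are neither $\first$ nor $\last$ of their interval in $w'$. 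We then define $w[j{+}1]$ by cases matching the three clauses of $\delta_\seq$:
\begin{itemize}
  \item If $\seq[k]\in w'[j{+}1]$ (clause (i), boundary position, $k'=k{+}1$), simply take $w[j{+}1]=w'[j{+}1]$ and use the underlying $A$-transition $q' \in \delta(q, w'[j{+}1])$.
  \item If $w'[j{+}1]\subseteq \Sigma$ (clause (iii), unrestricted position, $k'=k$) and $A_\seq$ used a transition $\delta(q, w'[j{+}1])\times\{k\}$, take $w[j{+}1]=w'[j{+}1]$.
  \item If $w'[j{+}1]\subseteq \Sigma$ and $A_\seq$ used a transition $\delta(q, w'[j{+}1]\cup\{I'\})\times\{k\}$ for some $I'\in \I_k$, take $w[j{+}1]=w'[j{+}1]\cup\{I'\}$; the condition $I'\in\I_k$ guarantees indices $k'<k\le k''$ in $\seq$ with $\seq[k']=\seq[k'']=I'$, so position $j{+}1$ lies strictly between $\first(w',I')$ and $\last(w',I')$, hence $I'$ appears at $j{+}1$ only as an intermediate $I'$-restricted point.
\end{itemize}
In all three cases the required $A$-transition $q' \in \delta(q, w[j{+}1])$ is available by construction of $\delta_\seq$, and the boundary structure of $w$ agrees with that of $w'$.

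The main obstacle, and the only place where care is needed, is maintaining the normalization invariant: we must ensure that when $A_\seq$ ``uses'' an open interval $I'$ in clause (iii), the corresponding added label in $w$ never becomes a new first or last $I'$-restricted point, so that $\normalize(w)$ collapses back exactly to $w'$. This follows from the definition of $\I_k$ together with the inductive bookkeeping $i_{k-1}<j\le i_k$, which pins down the position $j{+}1$ strictly between the already-placed $\first(w,I')$ and $\last(w,I')$ (both of which sit at boundary positions coming from earlier clause~(i) steps). Once the induction finishes at $j=|w'|$, we take $w$ to be the resulting word; by construction $q\in A\langle w,j\rangle$, $i_{k-1}<j\le i_k$, and $\normalize(w)=w'$, as required.
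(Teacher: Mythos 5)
Your proof is correct and follows essentially the same route as the paper's: induction on the prefix length $j$, a case split on the three clauses of $\delta_\seq$, and the key observation that any interval $I'\in\I_k$ inserted at a non-boundary position lies strictly between $\first(w',I')$ and $\last(w',I')$ (since $i'<k\le i''$ and $j{+}1\ne i_k$), so it is erased by $\normalize$. Your incremental, position-by-position construction of $w$ (completing it with the suffix of $w'$) is just a cleaner packaging of the paper's argument, which instead fixes up the inductively obtained witness $w''$ by considering the set of words agreeing with $w''$ on the prefix and with $w'$ on the suffix.
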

\begin{proof}
We apply induction on the value of $j$ as in proof of Lemma \ref{lem:AsubsetAseq}. For $j = 0$, the statement trivially holds. Assume that for $j = m$, the statement holds and $(q',k') \in A_\seq<w', m+1>$(Assumption 1). We need to show 
\begin{itemize}
    \item [(i)]$i'_{k'-1} \le m+1 <i'_{k'}$ and, \item[(ii)] there exists $w$ such that $\normalize(w) = w'$ and $q' \in A <w, m+1>$.  
$(q',k') \in A_\seq<w', m+1>$ implies, there exists $(q, k) \in A_\seq<w', m>$ such that $(q',k') \in \delta_\seq((q,k),w'[m+1])$. 
\end{itemize}
By induction hypothesis, $i_{k-1} \le m < i_{k}$ [IH1] and  there exists a word $w''$ such that $\normalize(w'') = w'$ and $q \in A<w'', m>$ [IH2].

\noindent \textbf{Case 1 } $m+1 \in \boundaryint(w')$: 
This implies  
\begin{itemize}
    \item[(a)] $m+1 \in \{i_1, i_2, \ldots, i_k\}$.
    \item[(b)] $k' = k+1$ (by construction of $\delta_\seq$).
    \item[(c)] $w''[m+1] = w'[m+1] = S \cup \{J\}$ such that $S \subseteq \Sigma $ and $J \in (I_\nu \cup \{\anch\})$.
\end{itemize}
 In other words, $m+1$ is either a time restricted point or an anchor point in both $w''$ and $w'$. d)$\seq[i_k'] = J$, otherwise $\delta_\seq((q,k), S \cup \{J\})) = \emptyset$ which contradicts Assumption 1. 
\begin{itemize}
    \item [(i)] IH1 and a) implies that $m+1 = i_k$. This along with b) implies that $m+1 = i_{k'-1}$. Hence proving (i) for Case 1.
\item [(ii)] IH2 along with c) and d) implies that $\delta_\seq((q,k), w'[m+1]) = \delta(q, w[m+1]) \times \{k+1\}$. Hence, if $(q',k') \in A_\seq<w', m+1>$ then $q' \in A<w'', m+1>$. Hence, there exists a $w = w''$ such that $q' \in A<w, m+1>$, proving (ii) for Case 1.
\end{itemize}

\noindent \textbf{Case 2} $m+1 \in \boundaryint(w')$ : This implies 
\begin{itemize}
    \item [(1)] $m+1 \notin \{i_1, i_2, \ldots, i_k\}$.
    \item[(2)] $k' = k$ (by construction of $\delta_\seq$). 
    \item[(3)] $w''[m+1]\subseteq \Sigma $. In other words, $m+1$ is either an unrestricted point in both $w''$.
\end{itemize}
Now we have 
\begin{itemize}
    \item [(i)]  IH1 implies $i_{k-1} \le m < m+1 \le i_{k}$. This along with 1) and 2) implies $i_{k'-1} \le m < m+1 < i_{k'}$. Hence proving (i) for Case 2.
\item [(ii)] IH2 along with 3) and the construction of $\delta_\seq$ implies $\delta_\seq((q,k), w'[m+1]) = (\bigcup \delta_\seq((q,k), w'[m+1] \cup \{J\}) \cup \delta_\seq((q,k), w'[m+1]))\times k$ for $J \in I_\nu$ such that there exists $j<k'<l$ such that $\seq[j] = \seq[l] = J$. 
\end{itemize}

Hence, $J$ is an interval which appears twice in $\seq$ and only one of those $J$'s have been encountered within first $m$ letters. Hence, the prefix $w'[1...m+1]$ and suffix $w'[m+2...]$ contains exactly one $J$ time restricted point. 
This implies that \\

\noindent (Case A) $q' \in \delta_\seq((q,k), w'[m+1] \cup \{J\})$ for some $J$ such that $w'[1...m]$ and $w'[m+2...]$ contains exactly one $J$- time restricted point or \\
\noindent  (Case B) $q' \in \delta_\seq((q,k), w'[m+1])$.\\

As $\normalize(w'') = w'$, first and last $J$ time restricted points are the same in both $w''$ and $w'$. Hence, first $J$-time restricted point in $w''$ is within $w''[1...m]$ and the last is within $w''[m+2...]$. Consider a set of words $W$ such that for any $w \in W$, $w[1...m] = w''[1...m]$, $w[m+2...] = w'[m+2...]$ and either $w[m+1] = w'[m+1]$ or $w[m+1] = w'[m+1] \cup \{J\}$ where $J \in I_\nu$ such that both $w'[1...m]$ and $w'[m+2...]$ contains $J$-time restricted points. Notice that $m+1 \notin \boundaryint(w)$. Hence, making it time unrestricted will still imply $\boundaryint(w) = \boundaryint(w')$. 

When there exists a $J$ restricted time point in prefix $w[1....m]$ and suffix $w[m+2...]$ for $J \in I_\nu$, making point $m+1$ as $J$ restricted time point will still imply $\boundaryint(w) = \boundaryint(w')$. Hence, this implies that $\normalize(w) = w'$ for any $w \in W$. Moreover, as for any $w \in W$, $w[1...m] = w''[1...m]$, $A<w,m> = A<w'',m>$ and $A_\seq<w,m> = A_\seq<w'',m>$. Hence, for any $q \in Q$ such that $(q,k) \in A_\seq<w',m>$ implies for every $w \in W$, $q \in A<w,m>$.

It suffices to show that there exists a $w\in W$ such that $q' \in A<w,m+1>$. In case of Case A, for any word $w \in W$ such that $w[m+1]$ is a $J$-time restricted point $q' \in A<w,m+1>$. Note that such a word exists as Case A implies that $w'[1...m]$ and $w'[m+2...]$ contains exactly one $J$- time restricted point. In case of B, for any word $w  \in W$ where $w[m+1] \subseteq \Sigma$, $q' \in A<w,m+1>$. Hence, proving  for Case 2.
\end{proof}

The  words in $L(A_{\seq})$ are all normalized, and have at most $2|I_\nu|+1$ time restricted points. Thanks to this, its corresponding timed language can be expressed using $\pnregmtl$ formulae with arity at most $2|I_\nu|$.
\smallskip 

\textbf{Reducing NFA of each type to $\pnregmtl$}:
Next, for each $A_\seq$ we construct $\pnregmtl$ formula $\phi_\seq$  such that, for a timed word $\rho$ with $i \in dom(\rho), \rho, i \models \phi_\seq$ iff $\rho, i {\in} \mathsf{Time}(L(A_\seq))$. For any NFA $N = (St,\Sigma, i,Fin,\Delta)$, $q \in Q$ 
$F' \subseteq Q$,  let $N[q,F'] = (St,\Sigma, q, F', \Delta)$.
For the sake of brevity, we denote $N[q,\{q'\}]$ as $N[q,q']$.
We denote by $\rev(N)$, the NFA $N'$ that accepts the reverse of $L(N)$. 
The right/left concatenation of $a \in \Sigma$ with $L(N)$ is denoted  $N\cdot a$ and  $a \cdot N$ respectively.  
 \begin{lemma}
\label{lem:nfaseqtopnregmtl}
We can construct a $\pnregmtl$ formulae $\phi_\seq$ with $\mathsf{Constraint}(\phi_\seq) \subseteq I_\nu$ such that $\rho, i \models \phi_\seq$ iff $\rho, i \in \mathsf{Time}(L(A_\seq))$. 
\end{lemma}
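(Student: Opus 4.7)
My plan is to exploit the layered structure of $A_\seq$. Recall that a state $(q,k)$ of $A_\seq$ tracks how many letters of $\seq$ have been read: the row index $k$ is incremented only upon reading the expected boundary label $\seq[k]$ and is never decremented. Consequently, on any interval word of type $\seq$ an accepting run has a rigid shape: it stays in row $k$ while reading unrestricted $\Sigma$-letters and crosses into row $k{+}1$ upon reading the unique $k$-th boundary letter. Since $\seq$ contains a unique $\anch$ at some position $p$, the run splits naturally at the anchor into a backward portion crossing $m = p{-}1$ boundaries carrying intervals $\beta_1,\ldots,\beta_m$ (in increasing distance going left from the anchor) and a forward portion crossing $n = |\seq|{-}p$ boundaries carrying intervals $J_1,\ldots,J_n$. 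This is precisely the shape captured by the $\sregm^m$ and $\fregm^n$ modalities of $\pnregmtl$, which existentially quantify witness positions at prescribed time distances from the evaluation point and impose regular-language constraints on the propositional segments between consecutive witnesses.

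The construction proceeds as follows. For each row $k \in \{1,\ldots,|\seq|+1\}$ and each pair $(s,t)$ of states of the underlying NFA $A$, I build a 1-way NFA $\mathcal{R}^k_{s,t}$ over $2^{\Sigma}$ that simulates the unrestricted transitions of $A_\seq$ starting at $(s,k)$ and, on its final letter, performs the row-$k$ boundary transition landing at $(t,k{+}1)$; for the last row, $\mathcal{R}^{|\seq|+1}_{s,t}$ simply runs unrestricted transitions from $s$ to $t$. Crucially, the interval label $\seq[k]$ carried by the boundary letter in the interval word is \emph{not} encoded inside $\mathcal{R}^k$; it is discharged externally by the time guard of the surrounding modality, while $\mathcal{R}^k$ constrains only the propositional content of that letter. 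For the backward segments I take $\rev(\mathcal{R}^k_{s,t})$, since the $\sregm$ modality reads segments right-to-left. I then existentially guess a tuple of boundary states $(s_1,t_1,\ldots,s_{|\seq|+1},t_{|\seq|+1})$ with $s_1=\init$, $t_{|\seq|+1}\in F$, and all inter-row boundary transitions compatible; in particular the anchor boundary is handled by a propositional conjunct $\psi_{\anch}$ at the evaluation position pinning down the set of propositions $\sigma_i$ so that $s_{p+1}\in \delta(t_p,\sigma_i \cup \{\anch\})$. Each such guess then contributes a disjunct of the form
\begin{equation*}
\psi_{\anch} \wedge \sregm^{m}_{\beta_1,\ldots,\beta_m}\bigl(\rev(\mathcal{R}^{p}_{s_p,t_p}),\ldots,\rev(\mathcal{R}^{1}_{s_1,t_1})\bigr)(\Sigma) \wedge \fregm^{n}_{J_1,\ldots,J_n}\bigl(\mathcal{R}^{p+1}_{s_{p+1},t_{p+1}},\ldots,\mathcal{R}^{|\seq|+1}_{s_{|\seq|+1},t_{|\seq|+1}}\bigr)(\Sigma),
\end{equation*}
and $\phi_\seq$ is the (finite) disjunction over all valid guesses. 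Correctness reduces to a bijection between accepting runs of $A_\seq$ on interval words of type $\seq$ and pairs consisting of a valid boundary-state tuple together with propositional segments satisfying the modality constraints.

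The only intervals appearing in $\phi_\seq$ are those occurring in $\seq$, which are a subset of $I_\nu$, so non-adjacency of $I_\nu$ is preserved by the construction. The main obstacle I foresee is the careful bookkeeping around boundary letters: in $A_\seq$ the boundary letter is the composite symbol $S\cup\{\seq[k]\}$ whose propositional part $S$ alone selects the transition, so $\mathcal{R}^k$ must be designed so that its last-letter transition corresponds exactly to this propositional choice while the interval label is silently discharged by the time guard. A further subtlety is that the anchor is simultaneously the evaluation point of the $\pnregmtl$ formula and an internal boundary of $A_\seq$; the dedicated conjunct $\psi_{\anch}$ takes care of this by pinning $\sigma_i$ and checking compatibility with the anchor boundary transition between rows $p$ and $p{+}1$.
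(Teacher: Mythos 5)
Your proposal follows essentially the same route as the paper's proof: both enumerate the finitely many sequences of row-crossing (boundary) transitions of $A_\seq$, package the within-row unrestricted behaviour together with the boundary letters' propositional content into segment NFAs (your $\mathcal{R}^k_{s,t}$ correspond to the paper's $\re'_k = S_{k-1}\cdot\re_k\cdot S_k$), reverse the past segments, and take the disjunction over all guesses of a conjunction of one $\sregm$ and one $\fregm$ formula split at the anchor. The only cosmetic difference is that you discharge the anchor's propositional letter via a separate conjunct $\psi_{\anch}$ rather than folding it into the adjacent segment automata, and, like the paper, you should note that the intervals fed to $\sregm$ are the reflections of the negative intervals of $\seq$ (which preserves non-adjacency).
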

\begin{proof}
\begin{figure}
    \centering\scalebox{0.825}{
    \includegraphics{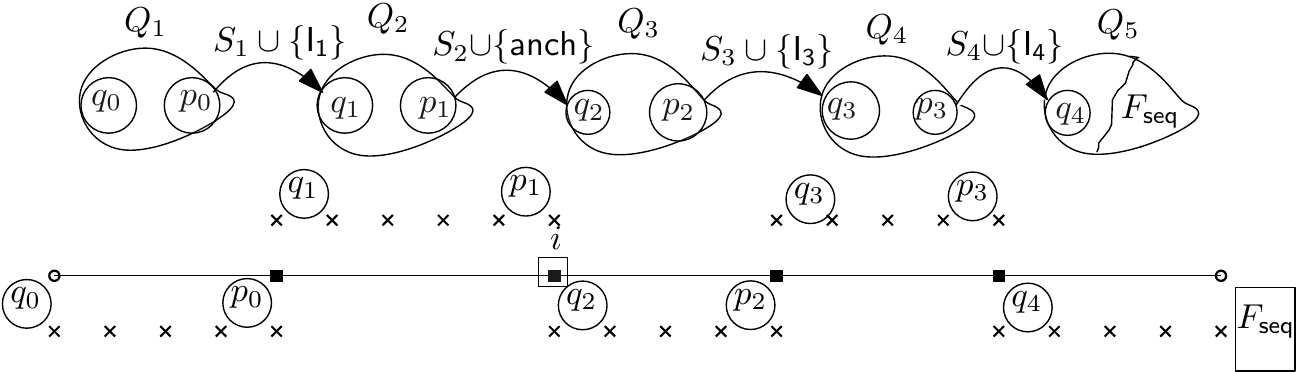}}
    \caption{Figure representing set of runs $A_{\mathsf{I_1 \anch I_3 I_4}}$ of type $R_{Qseq}$ where $Qseq = T_1 T_2 T_3 T_4$, where for $1\le i \le 4$, $T_i = (p_{i-1} \stackrel{S_i\cup \{I_i\}}{\rightarrow} q_i),$ $I_2 = \{\anch\}$}
    \label{fig:nfatopnregmtl}
\end{figure}
Let $\seq{=}I_1\ I_2\ \ldots\ I_n$, and  $I_j{=}\anch$ for some $1{\le}j {\le}n$. 
 Let $\Gamma=2^{\Sigma}$ and
 \\$\Qseq{=}T_1\ T_2\ \ldots T_n$ be a sequence of transitions of $A_{\seq}$ where for any $1 \le i \le n$,  $T_i{=}p_{i-1} \stackrel{S'_{i}}{\rightarrow} q_{i}$, $S'_i=S_i \cup \{I_i\}$, $S_i \subseteq \Sigma$, 
$p_{i-1} \in Q \times \{i-1\}$, $q_{i} \in Q \times \{i\}$. Let $q_0{=}(\init, 1)$.
We define $\mathsf{R}_{\Qseq}$ as set of accepting runs containing transitions $T_1\ T_2\ \ldots T_n$. 
Hence the runs in $\mathsf{R}_{\Qseq}$ are of the following form:
\\
{$T_{0,1}~T_{0,2} \ldots T_{0,m_0}~T_1~~T_{1,1}~ \ldots T_{1,m_1}~T_{2}~~\cdots\cdots~~T_{n{-}1,1}~T_{n{-}1,2} \ldots T_{n}~~T_{n,1} \ldots  T_{n{+}1}\\$}
where the source of the transition $T_{0,1}$ is $q_0$ and the target of the transition $T_{n+1}$ is any accepting state of $A_\seq$. Moreover, all the transitions $T_{i,j}$ for $0\le i\le n$, $1\le j \le n_i$ are of the form $(p' \stackrel{S_{i,j}}{\rightarrow} q')$ where $S_{i,j} \subseteq \Sigma$ and $p',q' \in Q_{i+1}$. Hence, only $T_1, T_2, \ldots T_n$ are labelled by any interval from $I_\nu$. Moreover, only on these transitions the the counter (second element of the state) increments.
Let $\mathcal{W}_{Qseq}$ be set of words associated with any run in $\mathsf{R}_{Qseq}$. Refer figure \ref{fig:nfatopnregmtl} for illustration.
 $w \in W_{Qseq}$ if and only if $w \in L(\re_1).S'_1. L(\re_2).S'_2. \cdots. L(\re_n).S'_n.L(\re_{n+1})$ where 
\\$\re_i = (Q_{i}, 2^{\Sigma}, q_{i-1}, \{p_{i-1}\}, \delta_\seq) \equiv A_{\seq}[q_{i-1},p_{i-1}]$ for $1 \le i \le n$ and \\$\re_{n+1}{=} (Q_{n+1}, 2^{\Sigma}, q_{n}, F_{\seq},\delta_{\seq}){\equiv}A[q_n,F]$.  
Let $\re'_{k}{=}S_{k-1}\cdot{\re_{k}}\cdot S_k$ for $1{\le}k{\le}n{+}1$, with $S_0=S_{n+1}=\epsilon$. 
Let $\rho{=}(b_1, \tau_1) \ldots (b_m, \tau_m)$ be a timed word over $\Gamma$. Then $\rho,i_j \in \mathsf{Time}(W_{Qseq})$ iff $\exists$  $0{\le} i_1 {\le} i_2 {\le} \ldots {\le} i_{j-1} {\le} i_j {\le} i_{j+1} {\le} \ldots {\le} i_n {\le} m$ s.t. 
\\$\bigwedge\limits_{k =1}^{j-1}[(\tau_{i_k}{-}\tau_{i_j} \in I_k)
  \wedge \mathsf{Seg^-}(\rho, i_{k+1}, i_{k},\Gamma) \in L(\rev({\re'_{k}}))] \wedge \bigwedge \limits_{k{=}j}^{n}[(\tau_{i_k}{-}\tau_{i_j} \in I_k)
    \wedge \mathsf{Seg^+}(\rho, i_{k}, i_{k+1},\Gamma) \in L(\re'_k)]$, where $i_0{=}0$ and $i_{n+1}{=}m$.
Hence, by semantics of $\fregk$ and $\sregk$ modalities, $\rho,i \in \mathsf{Time}(\mathcal{W}_{Qseq})$ if and only if $\rho, i{\models} \phi_{\qseq}$ where  \\
$\phi_{\qseq}=\sregm^{j}_{I_{j-1},\ldots,I_{1}} (\rev(\re'_1),\ldots,\rev(\re'_j))(\Gamma) \wedge \fregm^{n-j}_{I_{j+1},\ldots,I_{n}} (\re'_{j+1},\ldots,\re'_{n+1})(\Gamma)$. \\Let $\mathsf{State{-}seq}$ be set of all possible sequences of the form $\Qseq$.
As $A_\seq$ accepts only words which has exactly $n$ time restricted points, the number of possible sequences of the form $\Qseq$ is bounded by $|Q|^{n}$. Hence any word $\rho, i \in \mathsf{Time}(L(A_\seq))$ iff $\rho, i \models
\phi_{\mathsf{seq}}$ where $\phi_{\mathsf{seq}} =     
\bigvee \limits_{\mathsf{qseq} \in \mathsf{State{-}seq}} \phi_{\mathsf{qseq}}$. 
Disjuncting over all possible sequences $\seq{\in}\mathcal{T}(I_\nu)$ we get the required PnEMTL
formula $\phi$.  Moreover, the timing intervals appearing in all the $\fregk$ subformulae of $\phi$ are from $I_\nu$. Similarly, the timing intervals appearing in all the $\sregk$ formulae are from $I^{-}_\nu = \{\langle -l, -u \rangle | \langle l,u \rangle \in I_\nu \wedge l<u\le 0 \}$. If $\I$ is non adjacent, then its intersection closure $I_\nu^-$, $I_\nu$, $I^+_\nu$  are also non adjacent. Hence, if $\I$ is non adjacent then $\phi$ is a non adjacent PnEMTL formula.
\end{proof}
Disjuncting over all possible sequences $\seq \in \mathcal{T}(I_\nu)$ we get the following lemma \ref{lem:nfatopnmtl}.

As a consequence of \ref{lem:toataafa}, \ref{lem:afanfa} and \ref{lem:nfatopnmtl} any (Non Adjacent) 2-Way 1-ATA-rfl of reset depth 0 can be reduced to an equivalent (Non Adjacent) PnEMTL formulae.

\subsubsection{Induction}
\label{app:toata-pnemtl}
Assume that the lemma \ref{lem:toatapnemtl} holds if $\A$ has reset depth less than $n$. Let reset depth of $\A = (\Gamma, Q^+, Q^-, init, \top ,\bot, \Delta, \G)$ be $n$. Let $Q_0,\ldots, Q_m$ be the set of islands of $\A$ with header locations $q^r_{0}, \ldots q^r_{m}$, respectively. Let $Q_0$ be the initial island. Let. A sub automata $\A[q^r_{j}]$ of $\A$ is a 2-Way 1-ATA-rfl same as $\A$ but with initial state $q^r_{j}$ for any $1\le j \le m$. Note that, for any $1\le j \le m$, as $q^r_{j} \prec init$, all the states reachable from $q$ are the states within island $Q_j$ and all the islands lower than $Q_j$. Hence, the reset depth of any subautomata of $\A$ is less than $n$. by induction hypothesis, we can construct a PnEMTL formulae $\varphi_j$ equivalent to $\A[q^r_j]$ for $1\le j \le m$.
Let $W = \{b_1, \ldots b_m\}$ be witness variables for $\A[q^r_j]$ and $\varphi_j$.
We now construct an automata $\A'$ from $\A$ with transition function $\delta'$, set of locations in $Q_0$ and over symbols in $\Gamma \times \{0,1\}^m$ where $j^{th}$ component of the bit vector encodes the truth value of witness $b_j$.
For any $q \in Q_0$, $a\in \Sigma$, $g\in \G$, let $\delta'(q,a,g)$ is a boolean expression constructed from $\delta(q,a,g)$ by replacing all the occurrences of $x.q^r_j$ with truth value of $b_j$. Hence, whenever $b_j$ is false the conjunction of transitions calling $x.s^r_j$ is vanishes in $\delta'$. Note that automata $\A'$ is a reset free automata (as all the literals reset construct are replaced with either 0 or 1). As shown for the base case, we can construct a PnEMTL formulae $\varphi'$ equivalent to $\A'$ over extended alphabets. For any $a\in \Gamma$ and $b \in \{0,1\}^k$ we replace occurrence of $(a,b)$ in $\varphi'$ with $a \wedge \bigwedge \limits_{b(i){=}1} \varphi_i \wedge \bigwedge \limits_{b(j){=}0} \neg \varphi_j$. Hence, By replacing the witnesses with their corresponding formulae we get the required formulae $\varphi$ equivalent to $\A$. Moreover, note that if $\A$ is non adjacent all its sub automata and $\A'$ are non adjacent. Then, by induction hypothesis and by construction of  PnEMTL for reset free automata $\varphi$ is a non adjacent PnEMTL formula.

\subsection{Proof of Lemma \ref{lem:pnemtlqkmso}}
\label{app:pnemtlqkmso}
\textbf{Lemma \ref{lem:pnemtlqkmso}}. (PnEMTL$\subseteq$GQMSO).Given any (NA)PnEMTL formula $\varphi$, we can construct an equivalent (NA)GQMSO formula $\psi$.
\textbf{Proof}. We apply induction on modal depth of the given formula $\varphi$. 
\\\textbf{Base Case}: For modal depth 0, $\varphi$ is a propositional formula and hence it is trivially a AF-GQMSO formula.
Let $\varphi$ be a modal depth 1 formula of the form $\fregk_{I_1, \ldots, I_k}(\re_1, \ldots, \re_{k+1})(\Sigma)$. The reduction for $\sregk$ modality is identical. Moreover, dealing with boolean operators is trivial as the AF-GQMSO are closed under boolean operations. 
Let $\re_j = (2^{\Sigma}, Q_j, init_j, F_j, \delta_j)$. By semantics, for any timed word $\rho = (a_1, \tau_1) \ldots (a_m,\tau_m)$ and $i_0 \in dom(\rho')$, $\rho,i_0 \models \varphi$ iff ${\exists} {i_0{ {<} }i_1{<} i_2 \ldots {<} i_k {<} n}$ s.t.\\ $\bigwedge \limits_{w{=}1}^{k}{[(\tau_{i_w} {-} \tau_{i_0} \in I_w)} \wedge \mathsf{Seg^+}(\rho, i_{w{-}1}+1, i_w,\Sigma) {\in} L({\re_w})].$ 
By BET theorem, we can construct an MSO[$<$] formula $\psi_w(i_{w-1}, i_{w})$  equivalent to condition $\mathsf{Seg^+}(\rho, i_{w{-}1}+1, i_w,\Sigma) {\in} L({\re_w})$. Note that replacing conditions $\mathsf{Seg^+}(\rho, i_{w{-}1}+1, i_w,\Sigma) {\in} L({\re_w})$ with $\psi_i$ will result in a AF-GQMSO formula. Moreover if $\varphi$ is non adjacent then the resulting AF-GQMSO formula is also non adjacent.
We assume that the lemma holds for all the PnEMTL formulae of modal depth $<n$. Let $\varphi = \fregk_{I_1, \ldots, I_k}(\re_1, \ldots, \re_{k+1})(\Sigma \cup S)$ of modal depth $n$. Therefore, $S$ is a set of PnEMTL formula with modal depth $<n$. We replace all the subformulae in $S$ by a witness propositions getting a formula $\varphi'$ of modal depth 1. As with the base case, we can construct an AF-GQMSO formula $\psi'$ equivalent to $\varphi'$. By inductive hypothesis every subformulae $\varphi_i$ in $S$ can be reduced to an equivalent AF-GQMSO formula $\psi_i$. We replace all the witnesses of $\varphi_i$ by $\psi_i$ getting an equivalent formulae $\psi$ over $\Sigma$.
Note that if formula $\varphi_i$ in $S$ are non adjacent then, by induction hypothesis, equivalent $\psi_i$ are in NA-GQMSO formula. Similarly, if $\varphi'$ is NAPnEMTL formula then $\psi'_i$ is NA-GQMSO formula. Hence, if $\varphi$ in non adjacent then equivalent formula $\psi$ is non adjacent too.

\subsection{GQMSO to PnEMTL: An Example}
\label{app:gqmsored}
\begin{example}
In this example, we write a regular expression, in place of NFA wherever required, for the sake of succinctness and readability. Consider a GQMSO formulae $\psi(t) = \texists t_1 \in t+(0,1) \texists t_2 \in t+(-1,0) \psi_{even,b}(t,t_1) \wedge \psi_{odd,a}(t,t_2)$, where $\psi_{even,b}(x,y)(\psi_{odd,a}(x,y))$ is an MSO[$<$] formula which is true iff the number of $b$'s ($a$'s, respectively) strictly between $x$ and $y$ (excluding both) is even (odd, respectively). The regular expression of the behaviour starting from the beginning would be of the form: $\mathbf{(a+b)^* \cdot \{(a+b), x \in (-1,0)\}\cdot(b^*.a.b^*.a.b^*)^*\cdot a\cdot b^*)\cdot} \\\mathbf{anch\cdot (a^*\cdot b\cdot a^*\cdot b\cdot a^*)\cdot \{(a+b), x \in (0,1))\}\cdot (a_b)^*}$. By PnEMTL semantics, $\varphi{=} \fregm^1_{(0,1)}[(a^*.b.a^*.b.a^*),(a+b)^+](\{a,b\}\wedge \sregm^1_{(0,1)}[(b^*.a.b^*.a.b^*)^*.a.b^*),(a+b)^+](\{a,b\})$ when asserted on a point $t$ will accept the same set of behaviours.
\end{example}

\end{document}